\newcommand{\E}{\mathrm{E}}
\newcommand{\Var}{\mathrm{Var}}
\newcommand{\p}{\mathrm{P}} 
\newcommand{\Cov}{\mathrm{Cov}} 
\newcommand{\Cor}{\mathrm{Cor}}
\newcommand{\bin}{\mathrm{Bin}}
\newcommand{\expon}{\mathrm{Exp}}
\newcommand{\geom}{\mathrm{Geom}}
\newcommand{\bgeom}{\mathrm{BGeom}}
\newcommand{\poi}{\mathrm{Poi}}
\newcommand{\skellam}{\mathrm{Sk}}
\newcommand{\zipf}{\mathrm{Zipf}}
\newcommand{\bbn}{\mathbb{N}}
\newcommand{\bbr}{\mathbb{R}}
\newcommand{\bbz}{\mathbb{Z}}
\newcommand{\copyx}{x^\prime}
\newcommand{\copyy}{y^\prime}
\newcommand{\copyX}{X^\prime}
\newcommand{\copyY}{Y^\prime}
\newcommand{\copytwox}{x^{\prime\prime}}
\newcommand{\copytwoy}{y^{\prime\prime}}
\newcommand{\copytwoX}{X^{\prime\prime}}
\newcommand{\copytwoY}{Y^{\prime\prime}}
\newtheorem{theorem}{Theorem}[section]
\newtheorem{proposition}[theorem]{Proposition}
\newtheorem{corollary}[theorem]{Corollary}
\newtheorem{lemma}[theorem]{Lemma}
\theoremstyle{definition}
\newtheorem{definition}[theorem]{Definition}
\newtheorem{assumption}{Assumption}
\theoremstyle{remark}
\newcommand{\Comments}{1}
\newcommand{\mynote}[2]{\ifnum\Comments=1\textcolor{#1}{#2}\fi}
\newcommand{\mytodo}[2]{\ifnum\Comments=1%
	\todo[linecolor=#1!80!black,backgroundcolor=#1,bordercolor=#1!80!black]{#2}\fi}
\begin{document}
\title{Asymptotic Inference for Rank Correlations\thanks{We thank Tilmann Gneiting for many fruitful discussions and for providing very helpful feedback, which improved this paper. We are also grateful to Bernhard Klar, Johanna Nešlehová, Eva-Maria Walz and conference participants of the German Probability and Statistics Days 2025, Dresden, the German Statistical Week 2024, Regensburg, the BS-IMS World Congress Pre-Meeting for Young Researchers 2024, Essen, the 11th HKMetrics Workshop (2024), Heidelberg, the Macro and Financial Econometrics Workshop (2025), Heidelberg, and seminar participants at Goethe University Frankfurt and Heidelberg Institute for Theoretical Studies for helpful comments. Marc-Oliver Pohle is grateful for support by the Klaus Tschira Foundation, Germany.}}

\author{Marc-Oliver Pohle\thanks{Karlsruhe Institute of Technology, Institute of Statistics, Blücherstraße 17, 76185 Karlsruhe, Germany, e-mail: \href{mailto: pohle@kit.edu}{pohle@kit.edu} and Heidelberg Institute for Theoretical Studies} \and Jan-Lukas Wermuth\thanks{Goethe University Frankfurt, RuW Building, Theodor-W.-Adorno-Platz 4, 60323 Frankfurt, Germany, e-mail: \href{mailto: wermuth@econ.uni-frankfurt.de}{wermuth@econ.uni-frankfurt.de}} \and Christian H.\ Weiß\thanks{Helmut-Schmidt-University, Holstenhofweg 85, 22043 Hamburg, Germany, e-mail: \href{mailto: weissc@hsu-hh.de}{weissc@hsu-hh.de}. ORCID:  \href{https://orcid.org/0000-0001-8739-6631}{0000-0001-8739-6631}}}

\maketitle

\begin{abstract}

Kendall's tau and Spearman's rho are widely used tools for measuring dependence. Surprisingly, when it comes to asymptotic inference for these rank correlations, some fundamental results and methods have not yet been developed, in particular for discrete random variables and in the time series case, and concerning variance estimation in general. Consequently, asymptotic confidence intervals are not available. We provide a comprehensive treatment of asymptotic inference for classical rank correlations, including Kendall's tau, Spearman's rho, Goodman-Kruskal's gamma, Kendall's tau-b, and grade correlation. We derive asymptotic distributions for both iid and time series data, resorting to asymptotic results for U-statistics, and introduce consistent variance estimators. This enables the construction of confidence intervals and tests, generalizes classical results for continuous random variables and leads to corrected versions of widely used tests of independence. We analyze the finite-sample performance of our variance estimators, confidence intervals, and tests in simulations and illustrate their use in case studies.

\end{abstract}


\textbf{Keywords:} Kendall's tau; Spearman's rho; Goodman-Kruskal's gamma; Confidence Intervals; Statistical Tests
\newline


\section{Introduction}
\label{Introduction}

Correlation coefficients are fundamental tools for data analysis, quantifying the mutual dependence between two random variables. 
The classical and most popular correlation coefficients are Pearson correlation \citep{pearson1896}, Kendall's $\tau$ \citep{kendall1938} and Spearman's $\rho$ \citep{spearman1904}, which indicate the direction of dependence by their sign, and the strength of dependence by their closeness to 1 in absolute value. Despite being the most widely used of the three coefficients, Pearson correlation, which we denote by $r$, has well-known shortcomings. 
It lacks fundamental theoretical properties such as invariance to monotonic transformations, making it dependent on data transformations, and attainability, in the sense that it can be far away from 1 in absolute value under very strong or perfect dependence, thereby compromising its interpretability \citep{embrechts2002, fissler2023}. Furthermore, it suffers from statistical problems in that its estimator is not robust to outliers and inefficient for heavy-tailed distributions. 
Finally, its applicability is limited: It is only defined for random variables on a metric scale, excluding ordinal variables, and only for variables with finite second moments, excluding heavy-tailed distributions. In the case of continuous random variables, the classical rank correlations Kendall's $\tau$ and Spearman's $\rho$ cure all of these shortcomings, but for discrete random variables, they also suffer from non-attainability \citep{genest2007, neslehova2007}. Popular generalizations of $\rho$ and $\tau$ mitigating this problem are Kendall's $\tau_b$ \citep{kendall1948} and grade correlation, which we denote by $\rho_b$. For $\tau$, there is even a simple and appealing generalization in Goodman-Kruskal's $\gamma$ \citep{goodmanKruskal1954}, which fully cures this problem \citep{pohle2025}. 

Given the widespread use of these rank correlations and their favorable theoretical properties, the availability of confidence intervals to quantify estimation uncertainty around point estimators, and tests, e.g., of the null hypotheses of uncorrelatedness or independence, ought to be taken for granted. Surprisingly, we have been unable to find a comprehensive treatment of asymptotic inference for rank correlations in the extant literature. 
In particular, often only the cases of continuous and independent and identically distributed (iid) observations are considered, excluding time series and discrete (and mixed) variables. 
Contrary to the case of Pearson correlation, the formulas for the asymptotic variances of rank correlations in the discrete case do not just carry over from the continuous case due to the role that ties play for rank correlations. 

This lack of key results is even more surprising given that \citet{hoeffding1948} in his seminal paper on U-statistics already in the iid case established asymptotic normality and derived the corresponding limiting variances of empirical Kendall's $\tau$, $\widehat{\tau}$ (here even including the discrete case), and Spearman's $\rho$, $\widehat{\rho}$ (limiting himself to the continuous case), making use of the facts that $\widehat{\tau}$ is a U-statistic and $\widehat{\rho}$ is asymptotically equivalent to a U-statistic. Also in the iid case, \cite{genest2013estimation} provide the limiting distribution of multivariate generalizations of $\widehat{\rho}$, paying particular attention to the discrete case. 

In the time series case, \citet{dehling2017} derive an invariance principle for U-statistics of order 2 and in particular $\widehat{\tau}$ under weak dependence. The limiting distributions for the other rank correlations mentioned above, in particular $\widehat{\rho}$, do not even seem to be available in the continuous case from the extant literature.

Moreover, estimators for asymptotic variances of rank correlations have been neglected in the literature with the consequence that classical tools for statistical inference -- asymptotic confidence intervals and tests -- are not available. The only exceptions are for Kendall's $\tau$, where \citet{noether1967} discusses an estimator for the asymptotic variance (see also \citealp{samara1988}) and again \citet{dehling2017}, who propose a variance estimator in the continuous time series case. Instead of consistent variance estimators, approximations to the asymptotic variances of $\widehat{\tau}$ and $\widehat{\rho}$ in the iid case seem to be in widespread use, see the proposal by \citet{fieller1957} being derived under bivariate normality, or modifications thereof \citep{bonett2000}. Already \citet{borkowf2002} demonstrated them to be inaccurate.

The only settings, where classical asymptotic inference for $\tau$ and $\rho$ seems to be in widespread use, is in testing independence in the iid case. There, the limiting distributions under independence assuming continuous variables are used, where the limiting variances of $\widehat{\tau}$ and $\widehat{\rho}$ are $4/9$ and $1$, respectively.\footnote{The basic \texttt{R} function \texttt{cor.test} from the \texttt{stats} package \citep{RCoreTeam} only provides a confidence interval for Pearson correlation. For $\tau$ and $\rho$, only those tests of independence are available.} The former result is due to \citet{kendall1938}, and according to \citet[p.\ 321]{hoeffding1948}, the latter is due to Student and was published by \citet{pearson1907}. Thus, here, the variances are constant and do not have to be estimated. In the discrete case, however, the limiting variances under independence have not been available from the literature. \citet{hoeffding1948} conjectured that they are different from their values in the continuous case and depend on single-tie and double-tie probabilities of the two variables of interest.



We provide a comprehensive treatment of asymptotic inference for classical rank correlations, namely for Kendall's $\tau$ and Spearman's $\rho$ as well as their generalizations to the discrete case, Goodman-Kruskal's $\gamma$, $\tau_b$ and $\rho_b$. First, we establish asymptotic normality of their empirical versions, $\widehat{\tau}$, $\widehat{\rho}$, $\widehat{\gamma}$, $\widehat{\tau}_b$ and $\widehat{\rho}_b$, respectively, and derive the corresponding asymptotic variances under classical time series assumptions
, which includes the important special case of iid processes. Here, we make use of asymptotic theory for U-statistics for weakly dependent stochastic processes (see \citet{dehling2006} for an overview). Secondly, we introduce consistent estimators for the asymptotic variances. 
Those two elements -- the asymptotic distribution and consistent variance estimators -- allow for the construction of confidence intervals and tests. Thirdly, we derive formulas for the asymptotic variances under independence between the two variables or processes of interest, which is important to execute widely-used tests of independence.
In the iid case, our results generalize the aforementioned classical results from the continuous case. The variance formulas indeed depend on the double-tie probabilities and are easily implementable, providing simple and valid tests of independence also in the discrete case. In time series settings, our results generalize the results of \citet{lun2023}, who establish the asymptotic distribution of $\widehat{\tau}$ and $\widehat{\rho}$ in the continuous case under independence between the two stochastic processes of interest. We also provide a multivariate central limit theorem for a vector of U-statistics, which enables the asymptotic analysis of $\widehat{\gamma}$, $\widehat{\tau}_b$ and $\widehat{\rho}_b$ as they are functions of several U-statistics. This result also demonstrates that vectors of empirical rank correlations are asymptotically multivariate normal and allows for the calculation of the corresponding covariance matrices. 

In an extensive simulation study, we analyze the finite-sample performance of our proposed confidence intervals and independence tests for a variety of continuous and discrete iid as well as time series data generating processes. 
We illustrate the use of our inferential procedures in two real-world data applications.

The remainder of the paper is structured as follows. Section~\ref{sec:rank_correlations} introduces fundamental concepts and the classical rank correlations $\rho$ and $\tau$ as well as their generalizations to the discrete case, while Section~\ref{Empirical Rank Correlations} considers their empirical counterparts. In Section~\ref{Asymptotic Theory: IID Data}, we provide the corresponding asymptotic distributions and propose consistent estimators for the asymptotic variances in the iid case. Furthermore, we discuss the limiting distributions and variance estimation under independence. Section~\ref{Asymptotic Theory: Time Series} presents generalizations of the results in Section~\ref{Asymptotic Theory: IID Data} to the time series case. Section~\ref{sec:simulations} summarizes simulation results and Section~\ref{Illustrative Data Applications} discusses empirical applications. Finally, we conclude in Section~\ref{Conclusions}, where we outline directions for future research. 

Appendix~\ref{Asymptotic Theory for U-Statistics under Weak Dependence} contains a discussion of asymptotic theory for U-statistics under weak dependence and, in particular, the above-mentioned multivariate central limit theorem. Appendix~\ref{Proofs and Additional Lemmas} presents all proofs, and Appendix~\ref{Cross-dependence of Bivariate Geometric Distibution} an illustrative example involving a bivariate geometric distribution. An extensive description of the simulated DGPs from Section~\ref{sec:simulations} and a detailed interpretation of the simulation results as well as the results itself in a tabulated form are shown in Appendix~\ref{app_sec:simulations}. 
We also provide code implementing our confidence intervals and tests in an \texttt{R} package \citep{RCoreTeam} at https://github.com/jan-lukas-wermuth/RCor and replication material at https://github.com/jan-lukas-wermuth/replication\_RCor.

\section{Rank Correlations} \label{sec:rank_correlations}

\subsection{Fundamental Concepts: Grades, Sign Differences, and Tie Pro\-babilities}
\label{Foundational Concepts}

Throughout the paper, let $(X,Y) \sim F_{X,Y}$ be a bivariate random variable with cumulative distribution function (CDF) $F_{X,Y}$. Similarly, $X\sim F_X$ denotes a univariate random variable with CDF $F_X$. We say that a random variable $X$ is continuous if its CDF $F_X$ is a continuous function. We use $(\copyX,\copyY)$ and $(\copytwoX,\copytwoY)$ to denote independent copies of $(X,Y)$. For a function $f$, we denote the left limit at $x$ as $f(x^{-}) = \lim_{a \rightarrow x, a < x} f(a)$. We write $\stackrel{p} {\rightarrow}$ for convergence in probability and $\stackrel{d} {\rightarrow}$ for convergence in distribution.

We define the mid-distribution function (MDF) of $X$ as $G_X(x):=(F_X(x) + F_X(x^{-}))/2$ for $x \in \mathbb{R}$ (and analogously for $Y$), see e.g.\ \citet{parzen97} for a discussion. Note that the MDF can be rewritten as
\begin{equation} \label{eq:grade_and_spi}
    G_X(x) = \E[H(x-X)], \quad \text{ where } \quad H(x) = \begin{cases}
		1, & \text{if } x>0,\\
		1/2, & \text{if } x=0,\\
		0, & \text{if } x<0
		\end{cases}
\end{equation}
denotes the Heaviside step function using the half-maximum convention. Essentially, the MDF counts the probability mass for values smaller than $x$ fully, and the probability mass for values equal to $x$ half. It is thus closely related to the concept of a midrank, see \eqref{eq:midrank_definition} in the next section. Furthermore, we define the bivariate MDF of $(X, Y)$ as \citep[see e.g.][]{hoeffding1948}
\begin{equation} \label{eq:bivariate_mid-distribution_function}
    G_{X,Y}(x,y) := \frac 1 4 \Big( F_{X,Y}(x,y) + F_{X,Y}(x^-,y) + F_{X,Y}(x,y^-) + F_{X,Y}(x^-,y^-) \Big).
\end{equation}

Plugging-in a univariate or bivariate random variable itself into the respective MDF, we obtain the univariate grade \citep{hoeffding1948} or mid-distribution transform \citep{parzen2004}, $G_X(X)$, and the bivariate grade $G_{X,Y}(X,Y)$, which are of fundamental importance for this paper. 
The univariate grade is a variant of the probability integral transform (PIT) $F_X(X)$ and better-suited for arbitrary real-valued random variables. It reduces to the PIT if $F_X$ is continuous. While the univariate grade does not follow a standard uniform distribution in the discrete case like the PIT (and itself) in the continuous case ($F_X(X) \sim U(0,1)$ if $X$ is continuous), contrary to the PIT its expectation is always $1/2$. Also, the variance is smaller than $1/12$ by a factor only depending on the double-tie probability of $X$, i.e.\ on $\p(X=\copyX=\copytwoX)$, also see the subsequent Lemma~\ref{lem:grade}. For the moments of the bivariate grade, there are no general results available, and its distribution depends on the distribution of $(X,Y)$. We often need its expectation in the paper, which we consequently need to estimate. 

A further crucial concept for this paper is the sign function
\begin{equation} \label{eq:sign_function}
\mathrm{sgn}(x) = \begin{cases}
		1, & \text{if } x>0,\\
		0, & \text{if } x=0,\\
		-1, & \text{if } x<0,
		\end{cases}
\end{equation}
more precisely the sign function of a difference. We collect important results about the grade, the sign difference $\mathrm{sgn}(X-X')$ and their relation, as often needed throughout the rest of the paper, in the following Lemma~\ref{lem:grade}. The first result just amounts to \eqref{eq:grade_and_spi} rewritten in terms of $\mathrm{sgn}(x-X)$, noting that $H(x) = 1/2 ( 1 + \mathrm{sgn}(x))$ holds.

\begin{lemma} \label{lem:grade}
	 It holds that 
  \begin{enumerate}[(i)]
      \item $\E\big[ \mathrm{sgn} (x-X) \big] = 2 G_X(x) - 1$,
        \item $\E\big[\mathrm{sgn}(x-X)\,\mathrm{sgn}(y-Y)\big] = 4G_{X,Y}(x,y)-2G_X(x)-2G_Y(y) + 1 $,
        \item $\E \big[ \mathrm{sgn}(X-X') \big] = 0$ \quad and \quad $\Var\big(\mathrm{sgn}(X-X')\big) = 1 - \p(X=\copyX)$,        \item $ \E\big[ G_X(X) \big] = 1/2$ \quad and \quad $\Var\big(G_X(X)\big) = 1/12 \big[ 1 - \p(X=\copyX=\copytwoX)\big]$.
  \end{enumerate}
\end{lemma}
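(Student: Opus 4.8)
The plan is to derive all four parts from the single algebraic identity $H(x) = \tfrac12(1 + \mathrm{sgn}(x))$ together with the exchangeability of independent copies, so that only the variance in part~(iv) requires genuine work. For part~(i), I would substitute $H(x-X) = \tfrac12(1 + \mathrm{sgn}(x-X))$ into the representation $G_X(x) = \E[H(x-X)]$ from~\eqref{eq:grade_and_spi} and rearrange. Part~(ii) proceeds the same way at the bivariate level: writing $\mathrm{sgn}(x-X)\,\mathrm{sgn}(y-Y) = (2H(x-X)-1)(2H(y-Y)-1)$ and expanding reduces the claim to $\E[H(x-X)H(y-Y)] = G_{X,Y}(x,y)$. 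This identity I would verify by a short case analysis: the product $H(x-X)H(y-Y)$ takes the values $1,\tfrac12,\tfrac14,0$ according to whether each coordinate falls strictly below, ties, or lies strictly above its threshold, so its expectation equals $\p(X<x,Y<y) + \tfrac12\p(X<x,Y=y) + \tfrac12\p(X=x,Y<y) + \tfrac14\p(X=x,Y=y)$; rewriting each strict/tie probability through the four corner values $F_{X,Y}(x^{\pm},y^{\pm})$ then matches the definition~\eqref{eq:bivariate_mid-distribution_function} exactly.

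Part~(iii) is mostly symmetry: the mean vanishes because $(X,\copyX)$ and $(\copyX,X)$ are equal in distribution, forcing $\E[\mathrm{sgn}(X-\copyX)] = -\E[\mathrm{sgn}(X-\copyX)]$, while the variance follows from $\mathrm{sgn}(X-\copyX)^2 = \mathbbm{1}\{X \neq \copyX\}$, giving $\Var(\mathrm{sgn}(X-\copyX)) = 1 - \p(X=\copyX)$. For part~(iv) I would build on the conditional representation $G_X(X) = \E[H(X-\copyX) \mid X]$. The mean then follows immediately from part~(iii): $\E[G_X(X)] = \E[H(X-\copyX)] = \tfrac12(1 + \E[\mathrm{sgn}(X-\copyX)]) = \tfrac12$. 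For the variance, conditional independence of the two copies given $X$ gives $\E[G_X(X)^2] = \E[H(X-\copyX)H(X-\copytwoX)]$; expanding both $H$'s into signs and using $\E[\mathrm{sgn}(X-\copyX)] = \E[\mathrm{sgn}(X-\copytwoX)] = 0$ collapses this to $\Var(G_X(X)) = \tfrac14\,\E[\mathrm{sgn}(X-\copyX)\,\mathrm{sgn}(X-\copytwoX)]$, so everything reduces to this three-copy expectation.

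The hard part — and the step I would treat most carefully — is showing $\E[\mathrm{sgn}(X-\copyX)\,\mathrm{sgn}(X-\copytwoX)] = \tfrac13\bigl(1 - \p(X=\copyX=\copytwoX)\bigr)$. Rather than integrate directly, I would exploit exchangeability of the triple $(X,\copyX,\copytwoX)$ through a symmetrization trick: consider the cyclic sum $T = \mathrm{sgn}(X-\copyX)\mathrm{sgn}(X-\copytwoX) + \mathrm{sgn}(\copyX-\copytwoX)\mathrm{sgn}(\copyX-X) + \mathrm{sgn}(\copytwoX-X)\mathrm{sgn}(\copytwoX-\copyX)$, whose three summands share the same expectation, so $\E[T] = 3\,\E[\mathrm{sgn}(X-\copyX)\mathrm{sgn}(X-\copytwoX)]$. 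A pointwise check of the possible tie patterns shows $T$ is in fact deterministic: whenever the three values are not all equal, either they are distinct and exactly one is a strict ``middle'' (contributing $-1$) while the other two contribute $+1$, or a tie forces the relevant products to $0$ — and in every such case the summands total $1$; when all three coincide every sign is $0$ and $T=0$. Hence $T = 1 - \mathbbm{1}\{X=\copyX=\copytwoX\}$ identically, so $\E[T] = 1 - \p(X=\copyX=\copytwoX)$ and the three-copy expectation equals $\tfrac13\bigl(1-\p(X=\copyX=\copytwoX)\bigr)$. Substituting back yields $\Var(G_X(X)) = \tfrac{1}{12}\bigl(1 - \p(X=\copyX=\copytwoX)\bigr)$, as claimed.
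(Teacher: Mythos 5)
Your proposal is correct. For parts (i)--(iii) it essentially coincides with the paper's proof: part (i) is read off from $H(x)=\tfrac12(1+\mathrm{sgn}(x))$, part (ii) is the same bookkeeping (the paper expands $\E[\mathrm{sgn}(x-X)\,\mathrm{sgn}(y-Y)]$ into the four quadrant probabilities and rewrites them via the corner values $F_{X,Y}(x^{\pm},y^{\pm})$, while you equivalently verify $\E[H(x-X)H(y-Y)]=G_{X,Y}(x,y)$ over the strict/tie configurations -- the same algebra in different packaging), and part (iii) is the elementary symmetry argument the paper leaves unstated. The genuine divergence is part (iv): the paper does not prove it at all, but cites \citet{neslehova2007} (their equations (13) and (27)) for both moments, even noting that the cited variance formula contains a typo that must be corrected. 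Your argument is instead self-contained: the representation $G_X(X)=\E[H(X-\copyX)\mid X]$ together with conditional independence of the copies reduces the variance to the three-copy expectation $\E[\mathrm{sgn}(X-\copyX)\,\mathrm{sgn}(X-\copytwoX)]$, and your cyclic sum $T$ is shown \emph{pointwise} to equal $1-\mathds{1}\{X=\copyX=\copytwoX\}$, so exchangeability yields the value $\tfrac13\big(1-\p(X=\copyX=\copytwoX)\big)$ immediately; the case analysis (three distinct values: $+1-1+1$; exactly one tie: $0+0+1$; triple tie: $0$) is watertight. As for what each approach buys: the paper's citation is economical and anchors the result in the literature, but it is not verifiable from the paper alone and inherits the typo issue in the cited source; your route is longer but elementary and self-verifying, and the symmetrization device is in the same spirit as the kernel symmetrization \eqref{eq:nonsymmetric_kernel} that the paper itself uses for U-statistics, so it would fit naturally into the paper's Appendix~\ref{Proofs and Additional Lemmas}.
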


The single-tie probability $\zeta(X):=\p(X=\copyX)$ and the double-tie probability $\zeta_2(X):=\p(X=\copyX=\copytwoX)$ showing up in the variance formulas are crucial for this paper as well. These probabilities equal zero if~$X$ is continuous and can be computed from the probability mass function (PMF) of~$X$ in the discrete (or discrete-continuous) case: 
\begin{equation} \label{eq:tie_probability_formula}
    \zeta(X) = \sum_{x: \p(X=x)>0} \p(X=x)^2,
    \qquad
    \zeta_2(X) = \sum_{x: \p(X=x)>0} \p(X=x)^3.
\end{equation}
We denote the probability of a tie in $X$ or $Y$ or both by
\begin{equation} \label{eq:nu}
    \nu(X,Y) := \p\big((X-\copyX)(Y-\copyY)=0\big) = \zeta(X) + \zeta(Y) - \p(X=\copyX, Y=\copyY).
\end{equation}
Throughout the rest of the paper, we assume non-degenerate random variables such that $\zeta_X$, $\zeta_2(X)$ and $\nu(X,Y)$ are strictly smaller than 1.

Let us conclude this section by referring to a running example in Appendix~\ref{Cross-dependence of Bivariate Geometric Distibution}, where we illustrate the terms, concepts, and results on the special case of a bivariate geometric distribution.

\subsection{\protect\boldmath Classical Rank Correlations: Kendall's $\tau$ and Spearman's $\rho$}
\label{Classical Rank Correlations}

The idea behind all classical correlation coefficients is to measure if two  random variables $X$ and $Y$ tend to move in the same or in opposite directions, and how strong this tendency is. Making this idea specific requires a reference point. In the case of covariance and Pearson correlation,
\begin{equation} \label{eq:Pearson_correlation}
    \Cov(X,Y) = \E\big[(X-\E[X])(Y-\E[Y])\big] \quad \text{ with } \quad r=\Cor(X,Y) = \frac{\Cov(X,Y)}{\sqrt{\Var(X) \Var(Y)}},
\end{equation}
the reference point is the expectation $(\E[X],\E[Y])$ of $(X,Y)$. 

Kendall's $\tau$ uses $(\copyX,\copyY)$, an independent copy of $(X,Y)$, as the reference point and calculates the probability that $X$ and $Y$ move in the same direction relative to $\copyX$ and $\copyY$ minus the probability that they move in the opposite direction, that is, it is defined as the probability of concordance minus the probability of discordance. It can be rewritten in terms of the sign function from \eqref{eq:sign_function}.

\begin{definition}[Kendall's $\tau$]
\label{def_tau}
	Kendall's $\tau$ is defined as $$\tau(X,Y) = \p\big((X-\copyX)(Y-\copyY) > 0 \big) - \p\big((X-\copyX)(Y-\copyY) < 0 \big) = \E\big[\mathrm{sgn}(X-\copyX)\, \mathrm{sgn}(Y-\copyY)\big].$$
\end{definition}

By part (iii) of Lemma \ref{lem:grade}, it holds that $\tau(X,Y) = \Cov\big(\mathrm{sgn}(X-\copyX),\mathrm{sgn}(Y-\copyY)\big)$, and in the continuous case, we even have $\tau(X,Y) = \Cor\big(\mathrm{sgn}(X-\copyX),\mathrm{sgn}(Y-\copyY)\big)$. Kendall's $\tau$ can also be compactly written in terms of the grade by using parts (ii) and (iv) of Lemma \ref{lem:grade}, namely $\tau(X,Y) = 4 \E[G_{X,Y} (X,Y)] - 1$, which generalizes its relation to the joint CDF and copula from the continuous case \citep{schweizer1981}. For the special case of the bivariate geometric distribution, see Appendix~\ref{Cross-dependence of Bivariate Geometric Distibution}.

Note that the definition of $\tau(X,Y)$ uses a difference between random variables for notational simplicity. However, we do not assume that~$X$ and~$Y$ are necessarily quantitative random variables, but we also allow for ordinal random variables with categories, say, $s_0<\cdots<s_d$. The above difference is valid in the ordinal case if we use the numerical coding $s_k := k$ for $k=0,\ldots,d$, which simply counts how many categories~$s_k$ is apart from the lowest category~$s_0$. Equivalently, one could avoid differences by using (expectations of) the indicator function instead, where $X-\copyX>0$ iff $\mathds{1}(X>\copyX)=1$ etc., also see \eqref{eq:nu} above on how to avoid differences. But to keep notations compact, we shall continue writing differences as in Definition~\ref{def_tau}. 


A suitable population version of Spearman's $\rho$ involves the covariance of the grades, $G_X(X)$ and $G_Y(Y)$, see e.g.\ \citet{hoeffding1948}. 

\begin{definition}[Spearman's $\rho$]
\label{def_rho}
	Spearman's $\rho$ is defined as 
 $$\rho(X,Y) = 12\, \Cov\big(G_X(X), G_Y(Y)\big) = 12\, \E\big[ (G_X(X) - 0.5 )(G_Y(Y) - 0.5)\big].$$
\end{definition}

Note that if $X$ and $Y$ are continuous, then $\Var(G_X(X))=\Var(G_Y(Y))= 1/12$ and $\rho(X,Y)=\Cor(G_X(X), G_Y(Y))$. Spearman's $\rho$ carries the idea behind covariance or Pearson correlation to the rank scale, leading to a measure with much more favourable properties. For the special case of the bivariate geometric distribution, see again Appendix~\ref{Cross-dependence of Bivariate Geometric Distibution}. 

Using the relation between the sign function of a difference and the grade from Lemma \ref{lem:grade}, $\rho$ can be rewritten as
\begin{equation} \label{eq:rho_sign_representation}
    \rho(X,Y) = 3\, \E\big[\mathrm{sgn}(X-\copyX)\, \mathrm{sgn}(Y-\copytwoY)\big].
\end{equation}
Thus, it can also be seen as a modification of $\tau$, which does not use an independent copy $(\copyX,\copyY)$ of the pair $(X,Y)$ as a reference point, but $(\copyX,\copytwoY)$, a pair of independent copies of $X$ and $Y$, respectively, which are also independent of each other. Rewriting \eqref{eq:rho_sign_representation} again such that it becomes symmetric with respect to $(X,Y)$, $(\copyX,\copyY)$ and $(\copytwoX,\copytwoY)$, see \eqref{eq:nonsymmetric_kernel} in the appendix and the discussion around it, we get
\begin{align} \label{eq:rho_symmetric_sign_representation} 
\rho(X,Y) = \E\Big[k^{(\rho)}\big( (X,Y),(\copyX, \copyY), (\copytwoX, \copytwoY) \big)\Big],
\end{align}
where
\begin{align} \label{eq:rho_kernel}
k^{(\rho)} \left( (x,y),(\copyx, \copyy), (\copytwox, \copytwoy) \right) =&\ 1/2 \bigl( \mathrm{sgn}(x-\copyx)\, \mathrm{sgn}(y-\copytwoy) + \mathrm{sgn}(x-\copytwox)\, \mathrm{sgn}(y-\copyy) \nonumber \\ &+ \mathrm{sgn}(\copyx-x)\, \mathrm{sgn}(\copyy-\copytwoy) + \mathrm{sgn}(\copyx-\copytwox)\, \mathrm{sgn}(\copyy-y) \\ &+ \mathrm{sgn}(\copytwox-x)\, \mathrm{sgn}(\copytwoy-\copyy) + \mathrm{sgn}(\copytwox-\copyx)\, \mathrm{sgn}(\copytwoy-y)  \bigr). \nonumber
\end{align}
The kernel $k^{(\rho)}$ can be written more compactly in the form of \eqref{eq:nonsymmetric_kernel} in the appendix, but we prefer not to use the more cumbersome notation from the appendix here, at the cost of writing out the sum in \eqref{eq:rho_kernel}. 
The empirical analogue of representation \eqref{eq:rho_symmetric_sign_representation} is a U-statistic, which is the starting point for our asymptotic analysis of $\widehat{\rho}$, see \eqref{eq:emp_grade_corr} below.

\subsection{\protect\boldmath  Generalized Rank Correlations for Discrete Random Variables: Goodman-Kruskal's $\gamma$, Kendall's $\tau_b$ and Grade Correlation}
\label{Generalized Rank Correlations for Non-Continuous Random Variables}


Kendall's $\tau$ and Spearman's $\rho$ possess all major desirable properties for dependence measures in the continuous case \citep{embrechts2002}. However, for discrete random variables, they lack the property of attainability \citep{neslehova2007, genest2007}, that is, they do not take the values $-1$ and $1$ under perfect negative and positive dependence (counter- and comonotonicity), respectively. In fact, they may be very far away from those values under perfect or strong dependence. This leads to problems with their interpretability as strong dependence should be indicated by values close to 1 in absolute value. \citet{pohle2025} analyze attainable generalizations of $\tau$ and $\rho$, which thus fulfill all desirable properties in general, and refer to them as ``proper''. For Kendall's $\tau$, there exists a particularly nice and simple proper counterpart, namely Goodman-Kruskal's $\gamma$. This measure just computes Kendall's $\tau$ conditional on the event that there are no ties in both $X$ and $Y$, recall \eqref{eq:nu}.

\begin{definition}[Goodman-Kruskal's $\gamma$] \label{def:gamma}
	Goodman-Kruskal's $\gamma$ is defined as $$\gamma(X,Y) = \frac{\tau(X,Y)}{\p\big((X-\copyX)(Y-\copyY) \neq 0 \big)} 
    = \frac{\tau(X,Y)}{1 - \nu(X,Y) }.$$
\end{definition}

A popular generalization of Kendall's $\tau$, which is often used to address its attainability problem, is Kendall's $\tau_b$, which uses the univariate tie probabilities \eqref{eq:tie_probability_formula} instead of $\nu$.

\begin{definition}[Kendall's $\tau_b$] \label{def:taub}
	Kendall's $\tau_b$ is defined as \begin{align*}\tau_b(X,Y)= \Cor\big(\mathrm{sgn}(X-\copyX),\mathrm{sgn}(Y-\copyY)\big)= \frac{\tau(X,Y)}{\sqrt{[ 1 - \zeta(X)] [ 1 - \zeta(Y)]}} = \frac{\tau(X,Y)}{\sqrt{\tau(X,X) \tau(Y,Y)}}.\end{align*}
\end{definition}


However, while $\tau_b$ mitigates the problem in that it is weakly larger in absolute value than $\tau$, \citet{pohle2025} show that it is still non-attainable and thus inferior to $\gamma$. Nevertheless, we consider $\tau_b$ due to its popularity and the widespread use of its empirical analogue in statistical software. 

Unfortunately, for Spearman's $\rho$, there is no simple proper version. We consequently recommend using the grade correlation coefficient, which constitutes, in a way, the analogue to $\tau_b$ as a generalized version of $\rho$ in the discrete case. However, it only mitigates the attainability problem (like $\tau_b$ does), but does not cure it.

\begin{definition}[Grade Correlation] \label{def:grade_correlation}
Grade correlation $\rho_b$ is defined as 
 $$\rho_b(X,Y) = \Cor\big(G_X(X), G_Y(Y)\big) =
 \frac{\rho(X,Y)}{\sqrt{[ 1 - \zeta_2(X)] [ 1 - \zeta_2(Y)]}}=\frac{\rho(X,Y)}{\sqrt{\rho(X,X)\rho(Y,Y)}}.$$
\end{definition}
Again, we refer to Appendix~\ref{Cross-dependence of Bivariate Geometric Distibution} for the special case of the bivariate geometric distribution.

\section{Empirical Rank Correlations}
\label{Empirical Rank Correlations}

\subsection{Classical Empirical Rank Correlations}
\label{Classical Empirical Rank Correlations}

We now define the empirical versions of Kendall's $\tau$ and Spearman's $\rho$ and discuss their (asymptotic in the case of the latter) representations as U-statistics, which are the basis for our approach to establishing asymptotic theory. From now on, let $\{X_i,Y_i\}_{i=1}^n$ be a bivariate time series of length $n$, that is a sample of size $n$ of a stochastic process $\{X_i,Y_i\}_{i\in \mathbb{Z}}$ with $\bbz = \{\ldots, -1,0,1, \ldots\}$. This includes the case of an iid sample of size $n$ as a special case. 
 As we will assume strict stationarity, the time index does not play a role when we deal with marginal distributions involving $X_i$ or $Y_i$, so we omit it in these cases, e.g.\ writing $F_X$ instead of $F_{X_i}$. We also omit the time index when we deal with joint distributions of $(X_i,Y_i)$ only at a single point in time $i$.

\begin{definition}[Empirical Kendall's $\tau$] \label{def:empirical_tau}
	 Empirical Kendall's $\tau$ is defined as
	$$\widehat{\tau} (X,Y)= \frac{2}{n(n-1)}  \sum_{1\leq i < j \leq n} \mathrm{sgn}(X_i-X_j)\, \mathrm{sgn}(Y_i-Y_j).$$
\end{definition}

We note that $\widehat{\tau}$ is a U-statistic with kernel 
\begin{equation} \label{eq:tau_kernel}
k^{(\tau)} \left( (x,y),(\copyx, \copyy) \right) = \mathrm{sgn}(x-\copyx)\, \mathrm{sgn}(y-\copyy).
\end{equation}
We present a short introduction to U-statistics in Appendix~\ref{Asymptotic Theory for U-Statistics under Weak Dependence}. 
In particular, see Definition \ref{def:u_statistic}. 

Denote the empirical distribution function computed from a univariate time series $\{X_i\}_{i=1}^n$ by $\widehat{F}_X(x)$, and the empirical MDF by $\widehat{G}_X(x):=\big(\widehat{F}_X(x) + \widehat{F}_X(x^{-})\big)/2$ (and similarly for $\{Y_i\}_{i=1}^n$). Furthermore, define the mid-rank of the observation $X_j$ in $\{X_i\}_{i=1}^n$ as 
\begin{equation} \label{eq:midrank_definition}
\widehat{R}_X(X_j) := \frac 1 2 + \sum_{i=1}^n H(X_j - X_i) = \frac{n+1}{2} + \frac 1 2 \sum_{i=1}^n \mathrm{sgn}(X_j - X_i).
\end{equation}
It holds that $\widehat{G}_X(X_j) = \frac 1 n \big( \widehat{R}_X(X_j) - 1/2\big)$.
\begin{definition}[Empirical Spearman's $\rho$] \label{def:empirical_rho}
	 Empirical Spearman's $\rho$ is defined as
	$$\widehat{\rho} (X,Y)= \frac{12}{n}  \sum_{i=1}^n \left( \widehat{G}_X(X_i) - \frac 1 2 \right) \left( \widehat{G}_Y(Y_i) - \frac 1 2 \right) = \frac{12}{n^3}  \sum_{i=1}^n \left(  \widehat{R}_X(X_i) - \frac{n+1}{2} \right) \left(  \widehat{R}_Y(Y_i) - \frac{n+1}{2} \right).$$
\end{definition}

Contrary to $\widehat{\tau}$, $\widehat{\rho}$ itself is not a U-statistic, but asymptotically equivalent to the empirical analogue of \eqref{eq:rho_symmetric_sign_representation}: 
\begin{align} \label{eq:emp_grade_corr}
\widetilde{\rho} (X,Y) =& \binom{n}{3}^{-1} \sum_{1 \leq i < j < o \leq n} k^{(\rho)}\big( (X_i,Y_i), (X_j,Y_j), (X_o,Y_o) \big),
\end{align}
where $k^{(\rho)}$ is defined in \eqref{eq:rho_kernel}.
More precisely, the relation between $\widehat{\rho}$ and $\widetilde{\rho}$ is the following, see \citet[eq.\ (9.21)]{hoeffding1948}, which arises essentially by the relation between midranks and the difference sign function from \eqref{eq:midrank_definition}:\footnote{The slightly different factors involving $n$ in \citet[eq.\ (9.21)]{hoeffding1948} arise because Hoeffding defines empirical Spearman's $\rho$ as $n^2/(n^2-1)\, \widehat{\rho}$.}
\begin{equation} \label{eq:asymptotic_equivalence_rhohat_rhotilde}
    \widehat{\rho} = \frac{(n-1)(n-2)}{n^2}\, \tilde{\rho} + \frac{3 (n-1)}{n^2}\, \widehat{\tau},
\end{equation}
where the second summand is $\mathcal{O}_{\p}(n^{-1})$ and thus plays no role asymptotically. So $\widetilde{\rho}$ and $\widehat{\rho}$ are asymptotically equivalent in probability, $\widehat{\rho}/\widetilde{\rho} \stackrel{p}{\to} 1$.

\subsection{Generalized Empirical Rank Correlations for Discrete Random Variables}
\label{Generalized Empirical Rank Correlations for Non-Continuous Random Variables}
We now consider the empirical counterparts of Goodman-Kruskal's $\gamma$, Kendall's $\tau_b$, and grade correlation from Section \ref{Generalized Rank Correlations for Non-Continuous Random Variables}.

\begin{definition}[Empirical Goodman-Kruskal's $\gamma$] \label{def:empirical_gamma}
	Empirical Goodman-Kruskal's $\gamma$ is defined as
	$$\widehat{\gamma} (X,Y)= \frac{\widehat{\tau} (X,Y)}{1 - \widehat{\nu}(X,Y)} \quad \text{ with } \quad  \widehat{\nu}(X,Y) := \frac{2}{n(n-1)} \sum_{1\leq i < j \leq n} \mathds{1}\big((X_i-X_j)(Y_i-Y_j)=0\big).$$
\end{definition}

We note that $\widehat{\nu}$ is a U-statistic with kernel 
$$k^{(\nu)} \left( (\copyx, \copyy),(x,y) \right) = \mathds{1}\big((x-\copyx)(y-\copyy)=0\big).$$

\begin{definition}[Empirical Kendall's $\tau_b$] \label{def:empirical_tau_b}
	Empirical Kendall's $\tau_b$ is defined as
	$$\widehat{\tau}_b (X,Y)= \frac{\widehat{\tau} (X,Y)}{\sqrt{\widehat{\tau} (X,X)\, \widehat{\tau} (Y,Y)}}.$$
\end{definition}
\begin{definition}[Empirical Grade Correlation] \label{def:empirical_grade_correlation}
	Empirical grade correlation is defined as
 	$$\widehat{\rho}_b (X,Y)= 
   \frac{\widehat{\rho}(X,Y)}{\sqrt{\widehat{\rho}(X,X)\, \widehat{\rho}(Y,Y)}}.
  $$
\end{definition}
Empirical grade correlation is just the empirical Pearson correlation of the empirical grades. 
$\widehat{\gamma}$, $\widehat{\tau}_b$, and $\widehat{\rho}_b$ are all differentiable functions of U-statistics (at least asymptotically in the case of the latter).

\section{Asymptotic Theory: IID Data}
\label{Asymptotic Theory: IID Data}

\subsection{Consistency}
\label{Assumptions and Consistency IID}

To show consistency and to derive the asymptotic distributions of the five empirical rank correlations from the previous section, we exploit that $\widehat{\tau}$ is a U-statistic, $\widehat{\rho}$ is asymptotically equivalent to the U-statistic $\widetilde{\rho}$, $\widehat{\gamma}$ and $\widehat{\tau}_b$ are functions of U-statistics, and $\widehat{\rho}_b$ is asymptotically equivalent to a function of U-statistics. Thus, we can employ limit theorems for U-statistics. Note that for notational convenience, we will usually suppress the arguments $X$ and $Y$ when writing down our dependence measures, e.g.\ $\tau:=\tau(X,Y)$.
In this section, we work under the iid assumption.  
\begin{assumption} \label{ass:iid}
    $\{X_i,Y_i\}_{i \in \mathbb{Z}}$ is a bivariate iid process.
\end{assumption}
Note that we do not need moment conditions as the considered rank-based measures do not operate with the observations themselves, but with their ranks, grades, or CDFs, respectively. All limit theorems from this section are special cases of the more general results in the next section, where we work under classical time series assumptions, so that the formulas for the asymptotic variances get more involved. We nevertheless dedicate a separate section to the results for iid data because they are of high relevance in their own right and new to a large part. Furthermore, it is beneficial to appreciate the less involved results in this section, in particular regarding variance estimation, before tackling the cumbersome time series results. 
In Appendix~\ref{Asymptotic Theory for U-Statistics under Weak Dependence}, we review a univariate central limit theorem (Proposition \ref{prop:CLT_ustats_univariate}) for U-statistics under the iid assumption \citep{hoeffding1948} and under weak dependence \citep{denker1983}. Furthermore, we present a multivariate central limit theorem for U-statistics (Proposition \ref{prop:clt_Ustats_multivariate})  under both assumptions, which in combination with the delta method \citep[see][]{beutner24} allows for the treatment of $\widehat{\gamma}$, $\widehat{\tau}_b$, and $\widehat{\rho}_b$.

The following consistency result follows from the classical law of large numbers for U-statistics \citep{hoeffding1948} and the continuous mapping theorem.

\begin{proposition}[Consistency] \label{prop:consistency_iid}
	Under Assumption \ref{ass:iid}, it holds that $\widehat{\tau} \stackrel{p} {\rightarrow} \tau$, $\widehat{\rho} \stackrel{p} {\rightarrow} \rho$, $\widehat{\gamma} \stackrel{p} {\rightarrow} \gamma$, $\widehat{\tau}_b \stackrel{p} {\rightarrow} \tau_b$, and $\widehat{\rho}_b \stackrel{p} {\rightarrow} \rho_b$ as $n \rightarrow \infty$.
\end{proposition}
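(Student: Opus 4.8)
The plan is to exploit the structural description already given in the excerpt — $\widehat{\tau}$ and $\widehat{\nu}$ are U-statistics, $\widehat{\rho}$ is asymptotically equivalent to the U-statistic $\widetilde{\rho}$, and $\widehat{\gamma}$, $\widehat{\tau}_b$, $\widehat{\rho}_b$ are continuous functions of these pieces — and to combine Hoeffding's law of large numbers for U-statistics with Slutsky's theorem and the continuous mapping theorem. The crucial simplification is that every kernel involved ($k^{(\tau)}$, $k^{(\rho)}$, $k^{(\nu)}$) is a bounded function of sign differences or indicators, hence trivially integrable, so the law of large numbers for U-statistics applies with no moment conditions whatsoever.

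First I would dispatch the genuine U-statistics. Since $\widehat{\tau}$ is a U-statistic with kernel $k^{(\tau)}$ whose expectation is $\tau$ by Definition~\ref{def_tau}, the U-statistic law of large numbers gives $\widehat{\tau} \stackrel{p}{\to} \tau$ directly. The identical argument applied to $\widehat{\nu}$ (kernel $k^{(\nu)}$, expectation $\nu$ by \eqref{eq:nu}) yields $\widehat{\nu} \stackrel{p}{\to} \nu$, and applied to the diagonal statistics $\widehat{\tau}(X,X)$, $\widehat{\tau}(Y,Y)$ yields convergence to $\tau(X,X) = 1-\zeta(X)$ and $\tau(Y,Y) = 1-\zeta(Y)$, using Lemma~\ref{lem:grade}(iii).

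For $\widehat{\rho}$, which is not itself a U-statistic, I would invoke the exact identity \eqref{eq:asymptotic_equivalence_rhohat_rhotilde}. As $\widetilde{\rho}$ is a U-statistic with kernel $k^{(\rho)}$ and expectation $\rho$ by \eqref{eq:rho_symmetric_sign_representation}, the law of large numbers gives $\widetilde{\rho} \stackrel{p}{\to} \rho$; the prefactor $(n-1)(n-2)/n^2 \to 1$, while $3(n-1)/n^2 \to 0$ multiplies the bounded, convergent $\widehat{\tau}$, so the second summand is $o_p(1)$ and Slutsky's theorem delivers $\widehat{\rho} \stackrel{p}{\to} \rho$. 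The diagonal versions $\widehat{\rho}(X,X)$, $\widehat{\rho}(Y,Y)$ are handled by the same identity (with $Y$ replaced by $X$, respectively $X$ by $Y$) and converge to $\rho(X,X) = 1-\zeta_2(X)$ and $\rho(Y,Y) = 1-\zeta_2(Y)$ via Lemma~\ref{lem:grade}(iv).

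Finally, $\widehat{\gamma}$, $\widehat{\tau}_b$, and $\widehat{\rho}_b$ are the continuous maps $a/(1-b)$, $a/\sqrt{bc}$, and $a/\sqrt{bc}$ of the convergent pieces established above. Because coordinatewise convergence in probability to constants entails joint convergence in probability to the constant vector, the continuous mapping theorem applies as long as these maps are continuous at the relevant limit points. This is precisely where the standing non-degeneracy assumption enters: $\nu(X,Y) < 1$ guarantees $1-\nu > 0$, and $\zeta(X), \zeta(Y), \zeta_2(X), \zeta_2(Y) < 1$ guarantee $\tau(X,X), \tau(Y,Y), \rho(X,X), \rho(Y,Y) > 0$, keeping the denominators bounded away from zero. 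I would expect this final bookkeeping — confirming positivity of the denominators and passing from marginal to joint convergence — to be the only delicate point; everything else is a mechanical application of the U-statistic law of large numbers together with Slutsky's theorem.
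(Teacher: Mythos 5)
Your proposal is correct and follows the same route the paper takes, which is stated only in one sentence: the classical law of large numbers for U-statistics of \citet{hoeffding1948} combined with the asymptotic equivalence \eqref{eq:asymptotic_equivalence_rhohat_rhotilde} for $\widehat{\rho}$ and the continuous mapping theorem for $\widehat{\gamma}$, $\widehat{\tau}_b$, $\widehat{\rho}_b$. Your additional bookkeeping — identifying the limits of the diagonal statistics via Lemma~\ref{lem:grade}(iii)--(iv) and noting that the standing non-degeneracy assumption keeps the denominators away from zero — is exactly the detail the paper leaves implicit, so there is nothing to correct.
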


\subsection{Asymptotic Distributions}


\label{Asymptotic Distributions IID}

The following proposition on the asymptotic distributions of $\widehat{\tau}$ and $\widehat{\rho}$ follows from a central limit theorem for U-statistics, namely Proposition \ref{prop:CLT_ustats_univariate} in the appendix. 

\begin{proposition}[Asymptotic Distribution of $\widehat{\tau}$ and $\widehat{\rho}$] \label{prop:asymptotic_distribution_tau_iid} 
	   Under Assumption \ref{ass:iid}, it holds that $$\sqrt{n} \left( \widehat{\tau} -\tau \right) \stackrel{d}{\rightarrow} \mathcal{N}(0, \sigma_{\tau,\textup{iid}}^2) \quad \text{ with } \quad \sigma_{\tau,\textup{iid}}^2 
 = 4\, \E\big[ k^{(\tau)}_1 (X,Y)^2 \big], \quad \text{ where }$$
 \begin{align} \label{eq:k1tau_formula}
		k^{(\tau)}_1 (x,y) 
		&= 4\, G_{X,Y} (x,y)  - 2 \big(G_X(x) + G_Y (y)\big) + 1 - \tau,
	\end{align} 
 and $$\sqrt{n} \left( \widehat{\rho} -\rho \right) \stackrel{d}{\rightarrow} \mathcal{N}(0, \sigma_{\rho,\textup{iid}}^2) \quad \text{ with } \quad \sigma_{\rho,\textup{iid}}^2 
 = 9\, \E\big[ k^{(\rho)}_1 (X,Y)^2 \big], \quad \text{ where }$$
	\begin{align} \label{eq:k1rho_formula}
		k^{(\rho)}_1 (x,y) 
		&= 4 \big( \E[G_{X,Y} (x,Y) ] + \E[G_{X,Y} (X,y) ]  + G_X(x)G_Y (y) - G_X(x) - G_Y(y)\big) + 1 - \rho.
	\end{align} 
\end{proposition}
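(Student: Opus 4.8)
The plan is to treat both statistics through the central limit theorem for U-statistics (Proposition~\ref{prop:CLT_ustats_univariate}), which for a nondegenerate U-statistic of order $m$ with symmetric kernel $k$ and parameter $\theta = \E[k]$ delivers $\sqrt{n}$-asymptotic normality with variance $m^2\,\E[k_1^2]$, where $k_1(z) = \E\big[k(z,Z_2,\dots,Z_m)\big] - \theta$ is the first-order Hájek projection; since $\E[k_1]=0$, its variance equals its second moment. For $\widehat{\tau}$ this applies directly: it is exactly a U-statistic of order $m=2$ with kernel $k^{(\tau)}$ from \eqref{eq:tau_kernel} and parameter $\tau$, so the variance is $4\,\E\big[(k^{(\tau)}_1)^2\big]$. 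First I would compute $k^{(\tau)}_1(x,y) = \E\big[\mathrm{sgn}(x-\copyX)\,\mathrm{sgn}(y-\copyY)\big] - \tau$ and simplify the expectation via part~(ii) of Lemma~\ref{lem:grade}, which yields exactly \eqref{eq:k1tau_formula}. This part is routine.

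For $\widehat{\rho}$, the first step is to transfer the problem to the genuine U-statistic $\widetilde{\rho}$ of order $m=3$ using the exact identity \eqref{eq:asymptotic_equivalence_rhohat_rhotilde}: writing $\widehat{\rho} - \rho = (\widetilde{\rho}-\rho) + \big((n-1)(n-2)/n^2 - 1\big)\widetilde{\rho} + 3(n-1)/n^2\,\widehat{\tau}$ and noting that $\widehat{\tau}$ and $\widetilde{\rho}$ are bounded (hence $O_{\p}(1)$), both correction terms are $O_{\p}(n^{-1})$, so $\sqrt{n}(\widehat{\rho} - \rho) = \sqrt{n}(\widetilde{\rho} - \rho) + O_{\p}(n^{-1/2})$ and the two share the same limit by Slutsky. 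Proposition~\ref{prop:CLT_ustats_univariate} then gives asymptotic variance $9\,\E\big[(k^{(\rho)}_1)^2\big]$ for $\widetilde{\rho}$, with $k^{(\rho)}_1(x,y) = \E\big[k^{(\rho)}\big((x,y),(\copyX,\copyY),(\copytwoX,\copytwoY)\big)\big] - \rho$.

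The main work, and the main obstacle, is evaluating this order-three projection from the six-term kernel \eqref{eq:rho_kernel}. The two terms pairing $x$ and $y$ with coordinates of distinct copies factorize immediately into $(2G_X(x)-1)(2G_Y(y)-1)$ by independence and part~(i) of Lemma~\ref{lem:grade}. The remaining four terms couple the fixed point $(x,y)$ to one coordinate of a copy while the other coordinate of that same copy is compared to a third, independent copy; here the key manoeuvre is to condition on the third copy first, apply part~(ii) of Lemma~\ref{lem:grade} to the inner bivariate sign expectation, and only then integrate out the third copy. This step is precisely what produces the terms $\E[G_{X,Y}(x,Y)]$ and $\E[G_{X,Y}(X,y)]$; collecting all contributions and using $\E[G_X(X)] = \E[G_Y(Y)] = 1/2$ from part~(iv) gives exactly \eqref{eq:k1rho_formula}. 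Throughout I would keep the nondegeneracy caveat in mind, since the stated Gaussian limits presume $\E\big[(k^{(\tau)}_1)^2\big]>0$ and $\E\big[(k^{(\rho)}_1)^2\big]>0$ so that the limiting variances are strictly positive.
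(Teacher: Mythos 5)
Your proposal is correct and follows essentially the same route as the paper: $\widehat{\tau}$ is handled directly by the U-statistic CLT (Proposition~\ref{prop:CLT_ustats_univariate}), $\widehat{\rho}$ is reduced to the order-three U-statistic $\widetilde{\rho}$ via the exact identity \eqref{eq:asymptotic_equivalence_rhohat_rhotilde} plus Slutsky, and the projections $k_1^{(\tau)}$, $k_1^{(\rho)}$ are computed exactly as in the paper's Lemma~\ref{lem:variance} — including the key step of conditioning on the third copy and applying parts (i), (ii), (iv) of Lemma~\ref{lem:grade}. Your closing remark on nondegeneracy matches the paper's own caveat (stated after Proposition~\ref{prop:CLT_ustats_univariate}) that degeneracy occurs only in pathological cases.
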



The variance formulas arise from the general formula for the asymptotic variance of a U-statistic (see again Proposition \ref{prop:CLT_ustats_univariate} in the appendix), where $k_1$ is the leading term of the Hoeffding decomposition of U-statistics (see \eqref{eq:Hoeffding_decomposition} and \eqref{eq_k1} in the appendix), and where the factors~4 and~9, respectively, arise from the squared orders of the U-statistics, here $2^2$ and $3^2$. For example, this leading term is computed for $\tau$ as
$$
k_1^{(\tau)}(x,y)
\ =\ \E\big[k^{(\tau)}\big((x,y), (X_i,Y_i)\big)\big]-\tau
\ =\ \E\big[\mathrm{sgn}(x-X_i)\, \mathrm{sgn}(y-Y_i)\big]-\tau,
$$
and the latter is computed by part~(ii) of Lemma \ref{lem:grade}. The full derivations of the other leading terms appearing in Proposition \ref{prop:asymptotic_distribution_tau_iid} and \ref{prop:asymptotic_distribution_gamma_iid} are provided by Lemma \ref{lem:variance} in the appendix. The asymptotic distribution of $\widehat{\tau}$ has already been established by \cite{hoeffding1948} and, in the continuous case, also the asymptotic distribution of $\widehat{\rho}$.

The asymptotic distributions of the generalized rank correlations in the iid case follow from our multivariate central limit theorem for U-statistics in Proposition \ref{prop:clt_Ustats_multivariate} in the appendix together with the delta method.

\begin{proposition}[Asymptotic Distribution of $\widehat{\gamma}$, $\widehat{\tau_b}$ and $\widehat{\rho}_b$] \label{prop:asymptotic_distribution_gamma_iid}
Let us use the short-hand notations $\tau:=\tau(X,Y)$, $\tau_X:=\tau(X,X)$, $\tau_Y:=\tau(Y,Y)$, $\rho:= \rho(X,Y)$, $\rho_{X} := \rho(X,X)$ and $\rho_{Y} := \rho(Y,Y)$.

	Under Assumption \ref{ass:iid}, it holds that $$\sqrt{n} \left( \widehat{\gamma} -\gamma \right) \stackrel{d}{\rightarrow} \mathcal{N}(0, \sigma_{\gamma,\textup{iid}}^2) \quad \text{ with } \quad  \sigma_{\gamma,\textup{iid}}^2 =  \frac{1}{(1-\nu)^2} \left( \sigma_{\tau,\textup{iid}}^2 + \gamma^2 \sigma_{\nu,\textup{iid}}^2 + 2 \gamma \sigma_{\tau \nu,\textup{iid}} \right), $$
   $$\sqrt{n} \left( \widehat{\tau}_b -\tau_b \right) \stackrel{d}{\rightarrow} \mathcal{N}(0, \sigma_{\tau_b, \textup{iid}}^2)\quad \text{with}$$
    $$\sigma_{\tau_b, \textup{iid}}^2 =     \frac{1}{\tau_{X}\tau_{Y}}\left(\sigma_{\tau,\textup{iid}}^2-\tau\left(\frac{\sigma_{\tau\tau_{X},\textup{iid}}}{\tau_{X}}-\frac{\sigma_{\tau\tau_{Y},\textup{iid}}}{\tau_{Y}}\right)+0.25\tau^2\left(\frac{\sigma_{\tau_{X},\textup{iid}}^2}{\tau_{X}^2}+\frac{\sigma_{\tau_{Y},\textup{iid}}^2}{\tau_{Y}^2}+2\frac{\sigma_{\tau_{X}\tau_{Y},\textup{iid}}}{\tau_{X}\tau_{Y}}\right)\right),$$
    and
    $$\sqrt{n} \left( \widehat{\rho}_b -\rho_b \right) \stackrel{d}{\rightarrow} \mathcal{N}(0, \sigma_{\rho_b, \textup{iid}}^2)\quad \text{with}$$
    $$\sigma_{\rho_b, \textup{iid}}^2 = 
    \frac{1}{\rho_{X}\rho_{Y}}\left(\sigma_{\rho,\textup{iid}}^2-\rho\left(\frac{\sigma_{\rho\rho_{X},\textup{iid}}}{\rho_{X}}-\frac{\sigma_{\rho\rho_{Y},\textup{iid}}}{\rho_{Y}}\right)+0.25\rho^2\left(\frac{\sigma_{\rho_{X},\textup{iid}}^2}{\rho_{X}^2}+\frac{\sigma_{\rho_{Y},\textup{iid}}^2}{\rho_{Y}^2}+2\frac{\sigma_{\rho_{X}\rho_{Y},\textup{iid}}}{\rho_{X}\rho_{Y}}\right)\right).
    $$
	Here,
	$$\sigma_{lm,\textup{iid}} =\  r^{(l)} r^{(m)}\,  \E \big[ k_{1}^{(l)} (X,Y)\, k_{1}^{(m)} (X,Y) \big]$$ 
    with $l,m \in \{\tau, \tau_X, \tau_Y, \rho, \rho_X, \rho_Y, \nu\}$ and $\sigma^2_i := \sigma_{ii}$ as well as $r^{(\tau)}=r^{(\tau_X)} = r^{(\tau_Y)} = r^{(\nu)} =2$, $\ r^{(\rho)}=r^{(\rho_X)} = r^{(\rho_Y)} = 3.$
	
    The kernels $k_1^{(\tau)}$ and $k_1^{(\rho)}$ are defined as in \eqref{eq:k1tau_formula} and \eqref{eq:k1rho_formula}, and it holds that
	\begin{align*}
		k_{1}^{(\nu)} (x,y) 
		= \p(X=x) + \p(Y=y) - \p(X=x,Y=y) - \nu,
	\end{align*}
  \begin{align*}
		k^{(\tau_X)}_1 (x,y):=k^{(\tau_X)}_1 (x)
		:= 1 - \tau_X-\p(X=x),\quad 
  k^{(\tau_Y)}_1 (x,y):=k^{(\tau_Y)}_1 (y)
		:= 1 - \tau_Y-\p(Y=y),
	\end{align*}
    \begin{align*}
		k_{1}^{(\rho_X)} (x,y) := k_{1}^{(\rho_X)} (x):=1 - \rho_X-\p(X=x)^2,\ k_{1}^{(\rho_Y)} (x,y):=k_{1}^{(\rho_Y)} (y) &:= 1 - \rho_Y-\p(Y=y)^2.
	\end{align*}
\end{proposition}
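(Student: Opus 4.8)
The plan is to write each of the three generalized rank correlations as a smooth function of a vector of U-statistics (exactly so for $\widehat{\gamma}$ and $\widehat{\tau}_b$, and asymptotically so for $\widehat{\rho}_b$), establish joint asymptotic normality of that vector through the multivariate central limit theorem for U-statistics (Proposition \ref{prop:clt_Ustats_multivariate}), and then transfer the limit through the function by the delta method. Concretely, for $\widehat{\gamma}$ I would set $g(t,n)=t/(1-n)$, so that $\widehat{\gamma}=g(\widehat{\tau},\widehat{\nu})$ and $\gamma=g(\tau,\nu)$, a function of the two order-$2$ U-statistics $\widehat{\tau}$ and $\widehat{\nu}$ with kernels $k^{(\tau)}$ and $k^{(\nu)}$. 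For $\widehat{\tau}_b$ I would set $h(t,t_X,t_Y)=t/\sqrt{t_Xt_Y}$, so that $\widehat{\tau}_b=h\big(\widehat{\tau}(X,Y),\widehat{\tau}(X,X),\widehat{\tau}(Y,Y)\big)$, a function of three order-$2$ U-statistics. The same $h$ applies to $\widehat{\rho}_b$ with $\big(\widehat{\rho}(X,Y),\widehat{\rho}(X,X),\widehat{\rho}(Y,Y)\big)$, but these are not U-statistics; invoking the asymptotic equivalence \eqref{eq:asymptotic_equivalence_rhohat_rhotilde}, which gives $\widehat{\rho}=\widetilde{\rho}+\mathcal{O}_{\p}(n^{-1})$ componentwise, I would replace each $\widehat{\rho}$ by the order-$3$ U-statistic $\widetilde{\rho}$ and later argue that this replacement costs only $o_{\p}(n^{-1/2})$ after passing through $h$.

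The second step is the joint limit. Stacking the relevant kernels into one vector and applying Proposition \ref{prop:clt_Ustats_multivariate} under Assumption \ref{ass:iid} yields $\sqrt{n}\,(\text{vector}-\text{limit})\stackrel{d}{\rightarrow}\mathcal{N}(0,\Sigma)$, where the $(l,m)$ entry of $\Sigma$ is exactly $\sigma_{lm,\textup{iid}}=r^{(l)}r^{(m)}\,\E\big[k_1^{(l)}(X,Y)\,k_1^{(m)}(X,Y)\big]$, and the leading Hoeffding terms $k_1^{(\tau)},k_1^{(\rho)},k_1^{(\nu)},k_1^{(\tau_X)},k_1^{(\tau_Y)},k_1^{(\rho_X)},k_1^{(\rho_Y)}$ are those recorded in \eqref{eq:k1tau_formula}, \eqref{eq:k1rho_formula} and Lemma \ref{lem:variance}. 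A single multivariate CLT delivers all diagonal and off-diagonal entries simultaneously because the component U-statistics are built from the same sample; this is precisely what is needed, since the off-diagonal covariances $\sigma_{\tau\nu,\textup{iid}}$, $\sigma_{\tau\tau_X,\textup{iid}}$, etc., enter the target variances.

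The third step is the delta method. The non-degeneracy assumption guarantees $1-\nu>0$, $\tau_X,\tau_Y>0$ and $\rho_X,\rho_Y>0$, so $g$ and $h$ are continuously differentiable at the true parameter and the delta method \citep[see][]{beutner24} gives $\sqrt{n}(\widehat{\gamma}-\gamma)\stackrel{d}{\rightarrow}\mathcal{N}(0,\nabla g^{\top}\Sigma\,\nabla g)$, and analogously for $\widehat{\tau}_b$ and $\widehat{\rho}_b$. It then remains to evaluate the gradients at the truth: for $g$ one has $\partial_t g=1/(1-\nu)$ and $\partial_n g=\tau/(1-\nu)^2=\gamma/(1-\nu)$, whose quadratic form collapses after factoring out $(1-\nu)^{-2}$ into $\frac{1}{(1-\nu)^2}\big(\sigma_{\tau,\textup{iid}}^2+\gamma^2\sigma_{\nu,\textup{iid}}^2+2\gamma\sigma_{\tau\nu,\textup{iid}}\big)$; for $h$ one has $\partial_t h=(t_Xt_Y)^{-1/2}$, $\partial_{t_X}h=-\tfrac{1}{2}\,\tau_b/\tau_X$ and $\partial_{t_Y}h=-\tfrac{1}{2}\,\tau_b/\tau_Y$ at the truth, which, after factoring out $(\tau_X\tau_Y)^{-1}$ and substituting $\tau_b=\tau/\sqrt{\tau_X\tau_Y}$, reproduce the displayed formula for $\sigma_{\tau_b,\textup{iid}}^2$, and identically (with $\rho$ in place of $\tau$) for $\sigma_{\rho_b,\textup{iid}}^2$.

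I expect the main obstacle to be twofold. First, the $\widehat{\rho}_b$ case requires justifying that the $\mathcal{O}_{\p}(n^{-1})$ discrepancy between $\widehat{\rho}$ and $\widetilde{\rho}$ stays asymptotically negligible after the nonlinear map $h$; this follows from combining the asymptotic equivalence with the continuous differentiability of $h$ and the consistency of the arguments (Proposition \ref{prop:consistency_iid}), but it must be phrased carefully so that the $o_{\p}(n^{-1/2})$ terms genuinely drop out of the limit. Second, matching the quadratic forms $\nabla h^{\top}\Sigma\,\nabla h$ to the exact symbolic expressions in the statement is pure but error-prone bookkeeping, where particular attention is needed to the signs of the cross-covariance contributions $\sigma_{\tau\tau_X,\textup{iid}}$, $\sigma_{\tau\tau_Y,\textup{iid}}$ and $\sigma_{\tau_X\tau_Y,\textup{iid}}$ (and their $\rho$-analogues), since these arise from the negative partial derivatives $\partial_{t_X}h$ and $\partial_{t_Y}h$.
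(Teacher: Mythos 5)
Your proposal is correct and follows essentially the same route as the paper's own proof: joint asymptotic normality of $(\widehat{\tau},\widehat{\nu})$, $(\widehat{\tau},\widehat{\tau}_X,\widehat{\tau}_Y)$ and $(\widehat{\rho},\widehat{\rho}_X,\widehat{\rho}_Y)$ via Proposition \ref{prop:clt_Ustats_multivariate}, then the delta method with the same maps $g(t,n)=t/(1-n)$ and $h(t,t_X,t_Y)=t/\sqrt{t_X t_Y}$ and the same partial derivatives, with the $k_1$-expressions supplied by Lemma \ref{lem:variance} (your extra care in passing from $\widehat{\rho}$ to the U-statistic $\widetilde{\rho}$ addresses a step the paper's proof leaves implicit). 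One remark on your final bookkeeping step: expanding the quadratic form with your (correct) gradients yields the cross term $-\tau\left(\sigma_{\tau\tau_X,\textup{iid}}/\tau_X+\sigma_{\tau\tau_Y,\textup{iid}}/\tau_Y\right)$, whereas the statement (and the paper's own proof) displays $-\tau\left(\sigma_{\tau\tau_X,\textup{iid}}/\tau_X-\sigma_{\tau\tau_Y,\textup{iid}}/\tau_Y\right)$, so rather than ``reproducing the displayed formula'' as you claim, a careful expansion exposes a sign typo in it (and in its $\rho_b$-analogue).
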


Proposition \ref{prop:clt_Ustats_multivariate} in the appendix is not only the key ingredient to obtain this proposition, but also shows that vectors of the empirical rank correlations considered here are asymptotically multivariate normal and enables the calculation of the corresponding variance matrix.

\subsection{Variance Estimation}
\label{subsec:Variance Estimation IID}


In the iid case, our proposed variance estimator requires, in a first step, the estimation of $k_1^{(\tau)}, k_1^{(\tau_X)}, k_1^{(\tau_Y)}, k_1^{(\rho)}, k_1^{(\rho_X)}, k_1^{(\rho_Y)}$ and $k_1^{(\nu)}$ (depending on which of those functionals are relevant for the chosen coefficient). In a second step, the expectation in the asymptotic variances is replaced by a sample mean. In the time series case, a third step comes in additionally, namely a heteroscedasticity and autocorrelation consistent (HAC) estimator being well known from variance estimation of the sample mean under temporal dependence.

To go into more detail, let us have a closer look at the asymptotic (co)variances in the Subsection~\ref{Asymptotic Distributions IID}, which are spelled out in general form in the formula for $\sigma_{lm,\textup{iid}}$ in Proposition \ref{prop:asymptotic_distribution_gamma_iid}. On the one hand, they contain the factor $r^{(l)}$, which is the (known) order of the $i$-th U-statistic. On the other hand, they contain the functionals $k_1^{(l)}$ depending on the joint CDF $F_{X,Y}$ of $(X,Y)$, and these need to be estimated.
To estimate them, we replace the theoretical CDFs $F_{X,Y}$ by empirical CDFs $\widehat{F}_{X,Y}$, and accordingly the theoretical univariate or multivariate MDFs or PMFs by the empirical ones. For example, for $\tau$, we replace the MDFs~$G_X$ and $G_Y$ in \eqref{eq:k1tau_formula} by their empirical counterparts $\widehat{G}_X$ and $\widehat{G}_Y$, which arise by replacing the theoretical CDFs $F_X$ and $F_Y$ in the definitions of the MDFs in and above $\eqref{eq:bivariate_mid-distribution_function}$ with empirical CDFs $\widehat{F}_X$ and $\widehat{F}_Y$:\begin{equation*}
    \widehat{k}^{(\tau)}_1 (x,y) = 4\, \widehat{G}_{X,Y} (x,y)  - 2 \big(\widehat{G}_X(x) + \widehat{G}_Y (y)\big) + 1 - \widehat{\tau}.
\end{equation*}

In the cases of $\widehat{\rho}$ and $\widehat{\rho}_b$, an additional difficulty arises due to functions of the form $g_X(x):=\E[G_{X,Y} (x,Y) ]$ arising in $k_1^{(\rho)}$. However, they can be estimated by $\widehat{g}_X (x) = \frac 1 n \sum_{i=1}^n \widehat{G}_{X,Y} (x,Y_i)$, and thus we estimate $k_1^{(\rho)}$ by
\begin{equation*}
    \widehat{k}^{(\rho)}_1 (x,y) 
		= 4 ( \widehat{g}_X(x) + \widehat{g}_Y(y)   + \widehat{G}_X(x) \widehat{G}_Y (y) - \widehat{G}_X(x) - \widehat{G}_Y(y) ) + 1 - \widehat{\rho}.
\end{equation*}

We finally estimate the expectation $\sigma_{lm,\textup{iid}}$ via
\begin{equation} \label{eq:variance_estimator_iid}
      \widehat{\sigma}_{lm,\textup{iid}} = r^{(l)} r^{(m)} \frac 1 n \sum_{i=1}^n \widehat{k}_{1}^{(l)} (X_i,Y_i) \widehat{k}_{1}^{(m)} (X_i,Y_i).
\end{equation}

We establish the consistency of this estimator. Because the estimators contain averages over random functions, e.g.\ $\widehat{k}^{(\tau)}_1 (x,y)$, and the asymptotic (co)variances featuring $\rho$ even contain averages of averages of random functions, the consistency is not trivial. Yet, it follows by standard arguments on random functions from empirical process theory using the Glivenko-Cantelli theorem. For details, see the proof in Appendix \ref{Proofs and Additional Lemmas} and \citet[Chapter 19.4]{vandervaart2000}.

\begin{proposition}[Consistency of Variance Estimator] \label{prop:consistency_variance_iid}
	Under Assumption \ref{ass:iid}, it holds for $l,m \in \{\tau, \tau_X, \tau_Y, \rho, \rho_X, \rho_Y, \nu\}$ that $\widehat{\sigma}_{lm,\textup{iid}} \stackrel{p} {\rightarrow} \sigma_{lm,\textup{iid}}$ as $n \rightarrow \infty.$
\end{proposition}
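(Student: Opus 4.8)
The plan is to reduce the claim to two pieces: an ordinary weak law of large numbers for the \emph{true} kernel products, and a uniform-in-argument consistency statement for the \emph{estimated} kernels $\widehat{k}_1^{(l)}$. Since $\sigma_{lm,\textup{iid}} = r^{(l)}r^{(m)}\,\E\big[k_1^{(l)}(X,Y)\,k_1^{(m)}(X,Y)\big]$ and the estimator is given by \eqref{eq:variance_estimator_iid}, I would (after dividing out the known constant $r^{(l)}r^{(m)}$) split the centered estimator as
$$\underbrace{\frac1n\sum_{i=1}^n\big(\widehat{k}_1^{(l)}\widehat{k}_1^{(m)} - k_1^{(l)}k_1^{(m)}\big)(X_i,Y_i)}_{A_n}\ +\ \underbrace{\frac1n\sum_{i=1}^n k_1^{(l)}(X_i,Y_i)\,k_1^{(m)}(X_i,Y_i) - \E\big[k_1^{(l)}k_1^{(m)}\big]}_{B_n}.$$
The term $B_n$ vanishes by the weak law of large numbers: under Assumption \ref{ass:iid} the $(X_i,Y_i)$ are iid, and every $k_1^{(l)}$ is a bounded function, being a finite linear combination of MDFs, PMFs and grades (all taking values in $[0,1]$) together with the constants $\tau,\rho,\nu$ in $[-1,1]$; hence the product is bounded, integrable, and $B_n\stackrel{p}{\to}0$.

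For $A_n$, I would use that all kernels, estimated and true, are uniformly bounded by a constant $C$, together with the elementary bound $|\widehat a\widehat b - ab|\le|\widehat a|\,|\widehat b - b| + |b|\,|\widehat a - a|$, to obtain $|A_n|\le C\big(\|\widehat{k}_1^{(l)} - k_1^{(l)}\|_\infty + \|\widehat{k}_1^{(m)} - k_1^{(m)}\|_\infty\big)$, where $\|\cdot\|_\infty$ denotes the supremum over $(x,y)\in\bbr^2$. Everything then reduces to the sup-consistency statement
$$\sup_{(x,y)\in\bbr^2}\big|\widehat{k}_1^{(l)}(x,y) - k_1^{(l)}(x,y)\big| \stackrel{p}{\to} 0 \quad\text{for each admissible } l.$$
For the $\tau$-type and tie kernels, $l\in\{\tau,\tau_X,\tau_Y,\nu,\rho_X,\rho_Y\}$, this is routine: each $\widehat{k}_1^{(l)}$ is a fixed algebraic combination of finitely many empirical CDFs/MDFs/PMFs evaluated at $(x,y)$ plus an estimated scalar. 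The classical univariate and bivariate Glivenko--Cantelli theorems give $\sup_x|\widehat{F}_X - F_X|\to0$ and $\sup_{x,y}|\widehat{F}_{X,Y} - F_{X,Y}|\to0$ almost surely; these transfer to the MDFs (averages of a CDF value and its left limit) and to the PMF estimators $\widehat{\p}(X=x)=\widehat{F}_X(x)-\widehat{F}_X(x^-)$, while the scalars converge by Proposition \ref{prop:consistency_iid}. Uniform convergence of a finite, uniformly bounded combination of uniformly convergent terms then gives the sup-consistency.

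The genuinely nontrivial case, and the main obstacle, is $l=\rho$ (and the analogous $\rho$-pieces feeding $\widehat{\rho}_b$), because $\widehat{k}_1^{(\rho)}$ contains the ``average of an estimated average'' term $\widehat{g}_X(x)=\frac1n\sum_{i=1}^n \widehat{G}_{X,Y}(x,Y_i)$, which estimates $g_X(x)=\E[G_{X,Y}(x,Y)]$. Here I would control $\sup_x|\widehat{g}_X(x)-g_X(x)|$ via the further decomposition $\widehat{g}_X(x)-g_X(x)=\big[\frac1n\sum_i(\widehat{G}_{X,Y}-G_{X,Y})(x,Y_i)\big]+\big[\frac1n\sum_i G_{X,Y}(x,Y_i)-\E G_{X,Y}(x,Y)\big]$. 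The first bracket is bounded by $\sup_{x,y}|\widehat{G}_{X,Y}-G_{X,Y}|\to0$, uniformly in $x$. The second bracket demands a uniform-in-$x$ law of large numbers for the centered empirical averages of the function class $\{y\mapsto G_{X,Y}(x,y):x\in\bbr\}$; since each member is monotone and bounded in $y$ and the class is indexed by the single real parameter $x$, it is a Glivenko--Cantelli class (uniformly bounded with finite bracketing numbers, cf. \citet[Chapter 19.4]{vandervaart2000}), so its supremum over $x$ converges to $0$ in probability. Combining this with the $\tau$-level arguments yields sup-consistency of $\widehat{k}_1^{(\rho)}$, hence $A_n\stackrel{p}{\to}0$ for all admissible $l,m$, and the proof is complete.
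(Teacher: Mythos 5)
Your proof is correct and follows essentially the same route as the paper's: the identical two-term decomposition (adding and subtracting the true-kernel average), the weak law of large numbers for the bounded true-kernel products, and Glivenko--Cantelli-based uniform consistency of the estimated kernels, with the nested average $\widehat{g}_X$ in the $\rho$-kernel singled out for separate treatment. If anything, your handling of $\widehat{g}_X$ --- splitting it into the term controlled by $\sup_{x,y}|\widehat{G}_{X,Y}-G_{X,Y}|$ plus a uniform law of large numbers over the uniformly bounded monotone class $\{y \mapsto G_{X,Y}(x,y) : x \in \mathbb{R}\}$ --- is more explicit than the paper's compressed continuous-mapping remark, and is a valid way to make that step rigorous.
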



\subsection{Confidence Intervals, Tests, and Fisher Transformation}
\label{Fisher Transformation}


Building on the derived asymptotic distributions and proposed variance estimators, classical asymptotic confidence intervals and tests can be constructed. For any of our dependence measures, say $\delta$, the corresponding consistent estimator $\widehat{\delta}$ and variance estimator $\widehat{\sigma}^2_{\delta,\textup{iid}}$, the confidence interval at level $1-\alpha$ is
\begin{equation} \label{eq:confidence_interval}
    CI_{1-\alpha} (\delta) = \left[\widehat{\delta} - z_{1-{\alpha}/2} \cdot \frac{\widehat{\sigma}_{\delta,\textup{iid}}}{\sqrt{n}},\ \widehat{\delta} + z_{1-{\alpha}/2}  \cdot \frac{\widehat{\sigma}_{\delta,\textup{iid}}}{\sqrt{n}} \right],
\end{equation}
where $z_{1-{\alpha}/2}$ denotes the $1-{\alpha}/2$-quantile of the standard normal distribution.
The test statistic for the test 
for the null hypothesis $H_0: \delta=\delta_0$ is
\begin{equation} \label{eq:ttest}
T_{\delta_0} = \sqrt{n}\frac{ \widehat{\delta} - \delta_0}{\widehat{\sigma}_{\delta,\textup{iid}}},
\end{equation}
and the corresponding p-value against $H_1: \delta \neq \delta_0$ is $2(1-\Phi (|T_{\delta_0}|))$, where $\Phi$ denotes the CDF of the standard normal distribution.

The Fisher transformation introduced next leads to improved finite-sample performance of confidence intervals and tests if the true parameter is close to the bounds $\pm 1$.

It is well known that for the empirical Pearson correlation, the normal limiting distribution does not yield a good approximation if the true Pearson correlation is close to $-1$ or $1$, because then the true sampling distribution is heavily skewed. A solution to this problem is to transform the empirical correlation to an unbounded scale (the real line) by the Fisher transformation \citep{fisher1915}, and to use the limiting distribution of the Fisher transformation instead. This yields a much better approximation close to the boundaries and leaves the situation essentially unchanged away from the boundaries. This reasoning carries over to other correlation coefficients as well, see \citet[Appendix B]{pohle2024measuring} for a detailed discussion of the Fisher transformation for generic dependence measures. The Fisher transformation is defined as 
\begin{equation}
\label{eq:Fisher_transform}
z(\delta) := \mathrm{arctanh}(\delta) = \frac 1 2 \log \left( \frac{1+\delta}{1-\delta} \right)
\end{equation}
for a generic dependence measure $\delta$ (or an empirical dependence measure $\widehat{\delta}$). By the delta method, if the asymptotic distribution of $\widehat{\delta}$ is normal with variance $\sigma^2_{\delta}$, the asymptotic distribution of $z(\widehat{\delta})$ is normal as well, but with variance $\sigma^2_{\delta}/(1-\delta^2)^2 $, which can be estimated by plugging in estimators $\widehat{\delta}$ and $\widehat{\sigma}^2_{\delta}$. Tests can then be executed directly based on the limiting distribution of the Fisher transformation. For confidence intervals, the bounds of the confidence interval from this limiting distribution need to be transformed back to the original scale by applying the inverse Fisher transformation $z^{-1}(x) = \mathrm{tanh}(x) = (e^{2x}-1)/(e^{2x}+1)$.

\subsection{Asymptotic Distributions under Independence}
\label{Asymptotic Distributions under Independence IID}

Rank correlations equal 0 under independence of $X$ and $Y$. An important and classical application of rank correlations is testing the null hypothesis of independence via the implication of uncorrelatedness. We thus derive the formulas for the asymptotic variances under independence. As the basis for this, note that $k_1^{(\tau)}$ and $k_1^{(\rho)}$ from Proposition \ref{prop:asymptotic_distribution_tau_iid} simplify considerably and become equal to each other, $k_1^{(\tau)}=k_1^{(\rho)}$, under independence, see Lemma \ref{lem:kernels_under_independence} in the appendix.

\begin{corollary} \label{cor:iid_and_independent_processes_asymptotic_distribution_tau}
	Let $\{X_i\}_{i \in \mathbb{Z}}$ and $\{Y_i\}_{i \in \mathbb{Z}}$ be two independent iid processes. Then, it holds that 
    $$\sqrt{n}\,  \widehat{\tau}  \stackrel{d}{\rightarrow} \mathcal{N}(0, \sigma^2_{\tau,\textup{iid},\textup{ind}}) \quad \text{ with } \quad 
    \sigma^2_{\tau,\textup{iid},\textup{ind}} = \frac{4}{9} \big( 1 - \zeta_2(X) \big) \big(1 - \zeta_2(Y) \big),
    $$
    and
    $$\sqrt{n}\,  \widehat{\rho}  \stackrel{d}{\rightarrow} \mathcal{N}(0, \sigma^2_{\rho,\textup{iid},\textup{ind}}) \quad \text{ with } \quad 
    \sigma^2_{\rho,\textup{iid},\textup{ind}} =  \big( 1 - \zeta_2(X) \big) \big(1 - \zeta_2(Y) \big).
    $$ 
\end{corollary}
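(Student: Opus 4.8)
The plan is to specialize Proposition~\ref{prop:asymptotic_distribution_tau_iid} to the independence case, where the leading Hoeffding terms $k_1^{(\tau)}$ and $k_1^{(\rho)}$ collapse to a common, factorizing form. First I would note that independence of $X$ and $Y$ forces $\tau = \rho = 0$: by Definition~\ref{def_tau} and independence, $\E[\mathrm{sgn}(X-\copyX)\,\mathrm{sgn}(Y-\copyY)] = \E[\mathrm{sgn}(X-\copyX)]\,\E[\mathrm{sgn}(Y-\copyY)] = 0$ using part~(iii) of Lemma~\ref{lem:grade}, and likewise $\rho = 12\,\Cov(G_X(X),G_Y(Y)) = 0$ because $G_X(X)$ and $G_Y(Y)$ are functions of the independent variables $X$ and $Y$. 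Hence the centering terms in Proposition~\ref{prop:asymptotic_distribution_tau_iid} vanish, $\sqrt{n}\,\widehat{\tau} = \sqrt{n}(\widehat{\tau}-\tau)$ and $\sqrt{n}\,\widehat{\rho} = \sqrt{n}(\widehat{\rho}-\rho)$, so it remains only to evaluate $\sigma^2_{\tau,\textup{iid}} = 4\,\E[k_1^{(\tau)}(X,Y)^2]$ and $\sigma^2_{\rho,\textup{iid}} = 9\,\E[k_1^{(\rho)}(X,Y)^2]$ under independence.

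The key simplification is that under independence the bivariate MDF factorizes. Indeed, since $F_{X,Y}(x,y) = F_X(x)F_Y(y)$ (and similarly for the left-limit versions), definition~\eqref{eq:bivariate_mid-distribution_function} gives $G_{X,Y}(x,y) = \tfrac14\big(F_X(x)+F_X(x^-)\big)\big(F_Y(y)+F_Y(y^-)\big) = G_X(x)G_Y(y)$. Substituting this and $\tau = 0$ into \eqref{eq:k1tau_formula} yields $k_1^{(\tau)}(x,y) = 4G_X(x)G_Y(y) - 2G_X(x) - 2G_Y(y) + 1 = (2G_X(x)-1)(2G_Y(y)-1)$. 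For $k_1^{(\rho)}$ I would use the factorization to compute $\E[G_{X,Y}(x,Y)] = G_X(x)\,\E[G_Y(Y)] = \tfrac12 G_X(x)$ by part~(iv) of Lemma~\ref{lem:grade}, and symmetrically $\E[G_{X,Y}(X,y)] = \tfrac12 G_Y(y)$; plugging these together with $\rho = 0$ into \eqref{eq:k1rho_formula} again collapses everything to $k_1^{(\rho)}(x,y) = (2G_X(x)-1)(2G_Y(y)-1)$. Thus $k_1^{(\tau)} = k_1^{(\rho)}$ under independence, consistent with Lemma~\ref{lem:kernels_under_independence}.

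It then remains to evaluate the second moment of this common kernel. Because the kernel is a product in $X$ and $Y$ and the two variables are independent, the expectation factorizes:
\begin{equation*}
\E\big[(2G_X(X)-1)^2(2G_Y(Y)-1)^2\big] = \E\big[(2G_X(X)-1)^2\big]\,\E\big[(2G_Y(Y)-1)^2\big].
\end{equation*}
Each factor equals $4\,\Var\big(G_X(X)\big)$ since $\E[G_X(X)] = 1/2$, and part~(iv) of Lemma~\ref{lem:grade} gives $\Var\big(G_X(X)\big) = \tfrac{1}{12}\big(1-\zeta_2(X)\big)$, so each factor is $\tfrac13\big(1-\zeta_2(X)\big)$. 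Multiplying the product $\tfrac19\big(1-\zeta_2(X)\big)\big(1-\zeta_2(Y)\big)$ by the respective prefactors $4$ and $9$ delivers the stated variances $\tfrac49\big(1-\zeta_2(X)\big)\big(1-\zeta_2(Y)\big)$ and $\big(1-\zeta_2(X)\big)\big(1-\zeta_2(Y)\big)$. The only genuinely delicate step is the algebraic collapse of $k_1^{(\rho)}$: one must correctly evaluate the mixed expectations $\E[G_{X,Y}(x,Y)]$ under independence, after which the factorized product form---and the clean appearance of the double-tie probabilities via part~(iv) of Lemma~\ref{lem:grade}---follows without further difficulty.
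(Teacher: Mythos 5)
Your proof is correct and takes essentially the same route as the paper: the paper's proof of the corollary just combines Lemma~\ref{lem:kernels_under_independence} with Proposition~\ref{prop:asymptotic_distribution_tau_iid}, and your inline computations (factorization $G_{X,Y}(x,y)=G_X(x)G_Y(y)$, collapse of both $k_1^{(\tau)}$ and $k_1^{(\rho)}$ to $(2G_X(x)-1)(2G_Y(y)-1)$, and the second-moment evaluation via part~(iv) of Lemma~\ref{lem:grade}) reproduce exactly the paper's proof of that lemma.
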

 
Corollary~\ref{cor:iid_and_independent_processes_asymptotic_distribution_tau} nicely extends the classical results for the continuous case, where the asymptotic variances are $4/9$ and $1$, respectively, to the general case, where the additional factors involving the double-tie probabilities of $X$ and $Y$ arise. Even though the asymptotic variances are quite fundamental and well known for the continuous case, see \citet{kendall1938} and \citet{hoeffding1948}, we are not aware of any reference for their neat formulas in the general case. 
Interestingly, already \citet[p.\ 317, 320]{hoeffding1948} conjectured that they depend on the tie probabilities $\p(X=\copyX)$ and $\p(X=\copyX=\copytwoX)$. 

Note that by the equality of $k_1^{(\tau)}$ and $k_1^{(\rho)}$, and our multivariate CLT for U-statistics (Proposition \ref{prop:clt_Ustats_multivariate} in the appendix), the asymptotic correlation between $\widehat{\tau}$ and $\widehat{\rho}$ under independence is 1. Considering the ratio of the standard deviations, this implies that they asymptotically have the deterministic relationship $\widehat{\rho} = 3/2\, \widehat{\tau}$, which was already proven by \citet{daniels1944}. Thus, the higher asymptotic variance of $\widehat{\rho}$ is not caused by it being less efficient, but by it taking higher (absolute) values under independence. 

As an immediate consequence from Proposition \ref{prop:asymptotic_distribution_gamma_iid}, using that $\tau=\rho=\gamma=0$ under independence of~$X$ and~$Y$,
we obtain the limiting distribution of the generalized rank correlations under independence. 
For $\widehat{\gamma}$, we also use that $1-\nu$ from \eqref{eq:nu} simplifies to
\begin{align}
1-\nu(X,Y)
 = \p(X\not=\copyX, Y\not=\copyY)
  =\p(X\not=\copyX)\, \p(Y\not=\copyY)
 = \left( 1 - \zeta(X) \right) \left(1 - \zeta(Y) \right)
 \label{eq:nu_under_independence}
\end{align}
 under independence.

\begin{corollary} \label{cor:iid_and_independent_processes_asymptotic_distribution_gamma}
	Let $\{X_i\}_{i \in \mathbb{Z}}$ and $\{Y_i\}_{i \in \mathbb{Z}}$ be two independent iid processes. Then it holds that 
    $$\sqrt{n}\,  \widehat{\gamma}  \stackrel{d}{\rightarrow} \mathcal{N}(0, \sigma^2_{\gamma,\textup{iid},\textup{ind}}) \quad \text{ with } \quad 
    \sigma^2_{\gamma,\textup{iid},\textup{ind}} = \frac{4}{9}\, \frac{1 - \zeta_2(X)}{\big(1 - \zeta(X)\big)^2}\, \frac{1 - \zeta_2(Y)}{\big(1 - \zeta(Y)\big)^2},
    $$ 
    $$\sqrt{n}\,  \widehat{\tau}_b  \stackrel{d}{\rightarrow} \mathcal{N}(0, \sigma^2_{\tau_b,\textup{iid},\textup{ind}}) \quad \text{ with } \quad 
    \sigma^2_{\tau_b,\textup{iid},\textup{ind}} = \frac{4}{9}\,  \frac{1 - \zeta_2(X)}{1 - \zeta(X)}\, \frac{1 - \zeta_2(Y)}{1 - \zeta(Y)},
    $$
    and
    $$\sqrt{n}\,  \widehat{\rho}_b  \stackrel{d}{\rightarrow} \mathcal{N}(0, \sigma^2_{\rho_b,\textup{iid},\textup{ind}}) \quad \text{ with } \quad \sigma^2_{\rho_b,\textup{iid},\textup{ind}} =  1.$$ 
\end{corollary}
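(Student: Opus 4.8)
The plan is to obtain all three limiting distributions by specializing the general variance formulas of Proposition~\ref{prop:asymptotic_distribution_gamma_iid} to the independence case, exactly as the surrounding text advertises. Asymptotic normality with mean zero is already guaranteed by that proposition for any joint law satisfying Assumption~\ref{ass:iid}, and since $\gamma=\tau_b=\rho_b=0$ under independence, only the asymptotic variances remain to be computed. The engine of the simplification is the fact that $\tau=\rho=\gamma=0$ under independence, which annihilates every term in the three variance expressions that carries a factor of $\tau$, $\rho$, or $\gamma$.

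First I would collapse each formula. In $\sigma_{\gamma,\textup{iid}}^2=(1-\nu)^{-2}\bigl(\sigma_{\tau,\textup{iid}}^2+\gamma^2\sigma_{\nu,\textup{iid}}^2+2\gamma\,\sigma_{\tau\nu,\textup{iid}}\bigr)$, setting $\gamma=0$ leaves $\sigma_{\gamma,\textup{iid},\textup{ind}}^2=\sigma_{\tau,\textup{iid},\textup{ind}}^2/(1-\nu)^2$. Likewise, setting $\tau=0$ in the $\tau_b$ formula kills both the linear and the quadratic correction terms and leaves $\sigma_{\tau_b,\textup{iid},\textup{ind}}^2=\sigma_{\tau,\textup{iid},\textup{ind}}^2/(\tau_X\tau_Y)$, and setting $\rho=0$ in the $\rho_b$ formula leaves $\sigma_{\rho_b,\textup{iid},\textup{ind}}^2=\sigma_{\rho,\textup{iid},\textup{ind}}^2/(\rho_X\rho_Y)$. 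From the delta-method viewpoint underlying Proposition~\ref{prop:asymptotic_distribution_gamma_iid}, this is simply the statement that the gradients of the maps $\tau/(1-\nu)$, $\tau/\sqrt{\tau_X\tau_Y}$, and $\rho/\sqrt{\rho_X\rho_Y}$ have vanishing components in the normalizer directions at the independence point, because those partial derivatives each carry a factor of the numerator correlation; hence only the variance of the numerator U-statistic survives.

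Next I would substitute the known values. The normalizers are read off directly from the definitions: $\tau_X=\tau(X,X)=1-\zeta(X)$ and $\tau_Y=\tau(Y,Y)=1-\zeta(Y)$ by Definition~\ref{def:taub}, and $\rho_X=\rho(X,X)=1-\zeta_2(X)$, $\rho_Y=\rho(Y,Y)=1-\zeta_2(Y)$ by Definition~\ref{def:grade_correlation}, while $1-\nu=(1-\zeta(X))(1-\zeta(Y))$ by \eqref{eq:nu_under_independence}. Inserting the variances $\sigma_{\tau,\textup{iid},\textup{ind}}^2=\tfrac{4}{9}(1-\zeta_2(X))(1-\zeta_2(Y))$ and $\sigma_{\rho,\textup{iid},\textup{ind}}^2=(1-\zeta_2(X))(1-\zeta_2(Y))$ from Corollary~\ref{cor:iid_and_independent_processes_asymptotic_distribution_tau} and simplifying then yields the three stated expressions, with the $\rho_b$ variance collapsing to exactly $1$ because numerator and denominator cancel.

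Since the corollary is an immediate consequence, there is no genuine obstacle; the one point deserving care is verifying that the cross-covariance and squared-parameter terms truly drop out rather than merely being asymptotically negligible. I would make this rigorous via the delta-method gradient argument above, emphasizing that it is the \emph{exact} vanishing of $\tau$, $\rho$, and $\gamma$ (not an approximation) that forces those terms to zero, so the variance identities hold exactly and are inherited directly from Proposition~\ref{prop:asymptotic_distribution_gamma_iid} and Corollary~\ref{cor:iid_and_independent_processes_asymptotic_distribution_tau}.
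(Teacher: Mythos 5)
Your proposal is correct and follows essentially the same route as the paper's own proof: substitute $\gamma=\tau=\rho=0$ into the variance formulas of Proposition~\ref{prop:asymptotic_distribution_gamma_iid}, plug in $\sigma^2_{\tau,\textup{iid},\textup{ind}}$ and $\sigma^2_{\rho,\textup{iid},\textup{ind}}$ from Corollary~\ref{cor:iid_and_independent_processes_asymptotic_distribution_tau}, and simplify using $\tau_X=1-\zeta(X)$, $\rho_X=1-\zeta_2(X)$, and \eqref{eq:nu_under_independence}. Your extra delta-method gradient remark is a sound (if unnecessary) re-interpretation of the same cancellation the paper exploits.
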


The higher asymptotic variance of $\widehat{\tau}_b$ compared to $\widehat{\tau}$ (as well as that of $\widehat{\rho}_b$ compared to $\widehat{\rho}$) reflects the fact that it is weakly larger, mitigating the attainability problem. The even higher variance of $\widehat{\gamma}$ is again due to the fact that it takes weakly larger values than $\widehat{\tau}_b$ and solves the attainability problem. Nicely, the asymptotic variance of the grade correlation $\widehat{\rho}_b$ is 1, like the variance of $\widehat{\rho}$ itself in the continuous case. Thus, $\widehat{\rho}_b$ carries the nice property of an asymptotic variance not containing any unknown quantities that need to be estimated, which $\widehat{\rho}$ and $\widehat{\tau}$ possess in the continuous case, to the discrete world. Note that for the sake of testing independence, one could also define a variant of $\tau$ where the empirical counterpart has a constant asymptotic variance of $4/9$ under independence, namely by using the denominator of $\rho_b$: 
\begin{equation}
\label{eq:tau_b_mod}
\tau_{b,\textup{mod}}(X,Y) =  \frac{\tau(X,Y)}{\sqrt{[ 1 - \zeta_2(X)] [ 1 - \zeta_2(Y)]}}.
\end{equation}

To apply Corollaries~\ref{cor:iid_and_independent_processes_asymptotic_distribution_tau} or \ref{cor:iid_and_independent_processes_asymptotic_distribution_gamma} in practice, in order to test for cross-dependence, the probabilities $\zeta(X) = \p(X=\copyX)$ and $\zeta_2(X) = \p(X=\copyX=\copytwoX)$ (analogously for~$Y$) in the asymptotic variances need to be specified (except for the case of grade correlation $\rho_b$ as well as $\tau_{b,\textup{mod}}$). In the continuous case, $\zeta(X)=\zeta_2(X)=0$. In the discrete case, these probabilities have to be estimated from the given data. Recall that by \eqref{eq:tie_probability_formula}, it suffices to estimate the marginal distribution of $\{X_i\}$. In a non-parametric setup, this is done by computing relative frequencies instead of the probability masses, whereas simplifications are possible in parametric setups. For example, if~$X$ is geometrically distributed according to $\geom(\pi)$, then $\zeta(X)
 = \pi/(2-\pi)$ and $\zeta_2(X) = \pi^2/(3-(3-\pi)\pi)$, recall Appendix~\ref{Cross-dependence of Bivariate Geometric Distibution}.
Then, only the estimation of~$\pi$ is necessary.
A plot of the corresponding theoretical tie probabilities for $X\sim \geom\left(1/(1+\mu) \right)$ is provided by Figure~\ref{fig:BGeom_tieprob_asympVar} (left). The tie probabilities quickly converge towards 0 for increasing mean~$\mu$. Figure~\ref{fig:BGeom_tieprob_asympVar} (right) shows an analogous plot for the asymptotic variances from Corollaries~\ref{cor:iid_and_independent_processes_asymptotic_distribution_tau} or \ref{cor:iid_and_independent_processes_asymptotic_distribution_gamma}, where the variables $X,Y$ are assumed to be independent and identically distributed according to $\geom\left(1/(1+\mu) \right)$. 

\begin{figure}
    \centering
    \includegraphics[width=0.8\linewidth]{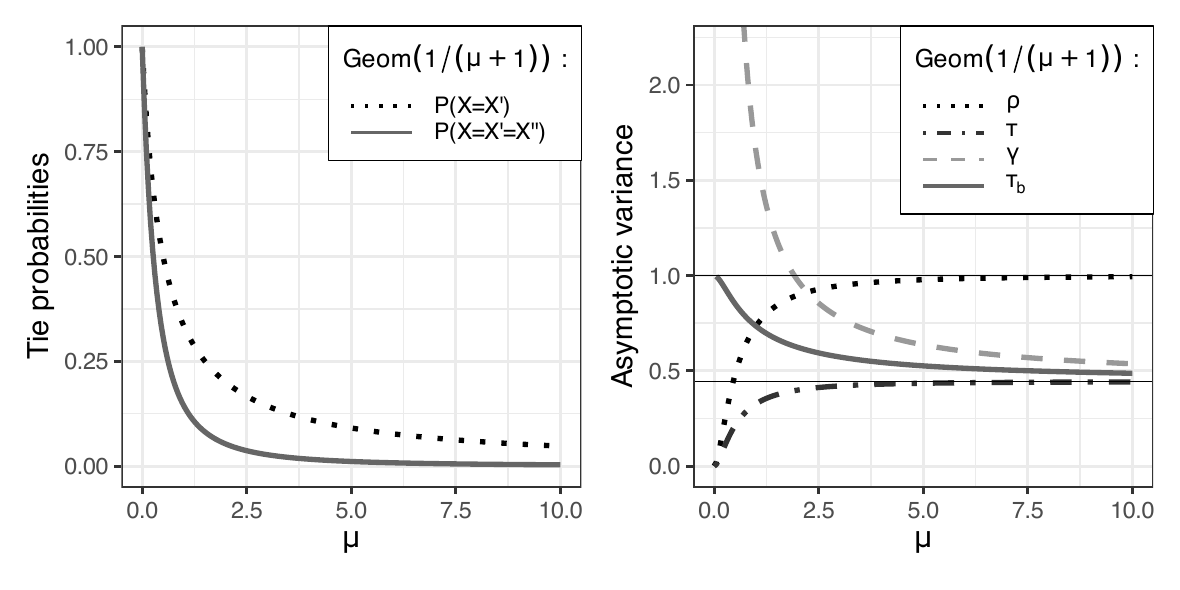}
    \caption{Left: Plot of tie probabilities against the mean~$\mu$ of a geometric distribution. Right: Plot of asymptotic variances from Corollaries~\ref{cor:iid_and_independent_processes_asymptotic_distribution_tau} and \ref{cor:iid_and_independent_processes_asymptotic_distribution_gamma} when $X$ and $Y$ are independent and identically geometrically distributed. The horizontal lines at 1 and 4/9 correspond to the variances of $\rho_b$ and $\tau_{b,\textup{mod}}$, respectively.}
    \label{fig:BGeom_tieprob_asympVar}
\end{figure}

Estimating the tie probabilities as described above and plugging those estimators into the variance formulas from the two preceding corollaries, we arrive at estimators for $\widehat{\sigma}^2_{\delta,\textup{iid},\textup{ind}}$. For testing the null hypothesis of independence via the implication of uncorrelatedness, we simply use the test statistic from \eqref{eq:ttest}, plugging in $\delta_0=0$ and $\widehat{\sigma}_{\delta,\textup{iid},\textup{ind}}$ as the estimator for the standard deviation.

\section{Asymptotic Theory: Time Series}
\label{Asymptotic Theory: Time Series}

\subsection{Consistency}
\label{Assumptions and Consistency TS}

In this section, we work under classical time series assumptions.

\begin{assumption} \label{ass:time_series}
    $\{X_i,Y_i\}_{i \in \mathbb{Z}}$ is a strictly stationary, ergodic and absolutely regular process with mixing coefficients $\{\beta_h\}_{h \in \mathbb{N}}$ satisfying $\sum_{h=1}^{\infty} \beta_h^{\delta/(2+\delta)} < \infty$ for some $\delta > 0$.
\end{assumption}

Absolute regularity or $\beta$-mixing is a form of asymptotic independence. It is stronger than strong mixing and weaker than uniform mixing, see \citet{bradley2005} for details.

Assumption \ref{ass:iid} is a special case of Assumption \ref{ass:time_series}, so we have to generalize the results from the previous section. The proofs for the limiting theorems are analogous, we just have to replace the law of large numbers and the central limit theorem under the iid assumption by more general limit theorems under Assumption \ref{ass:time_series}. 

Strong consistency of all five empirical rank correlations follows directly by an appropriate law of large numbers for U-statistics under weak dependence \citep[Theorem 1]{dehling2006}, together with the continuous mapping theorem in the cases of $\widehat{\gamma}$, $\widehat{\tau}_b$, and $\widehat{\rho}_b$.

\begin{proposition}[Consistency] \label{prop:consistency}
	Under Assumption \ref{ass:time_series}, it holds that $\widehat{\tau} \stackrel{p} {\rightarrow} \tau$, $\widehat{\rho} \stackrel{p} {\rightarrow} \rho$, $\widehat{\gamma} \stackrel{p} {\rightarrow} \gamma$, $\widehat{\tau}_b \stackrel{p} {\rightarrow} \tau_b$, and $\widehat{\rho}_b \stackrel{p} {\rightarrow} \rho_b$ as $n \rightarrow \infty$ .
\end{proposition}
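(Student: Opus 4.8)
The plan is to reduce all five statements to a single law of large numbers for U-statistics under weak dependence, namely Theorem 1 of \citet{dehling2006}, combined with the continuous mapping theorem. First I would treat the genuine U-statistics directly. Both $\widehat{\tau}$ and $\widehat{\nu}$ are U-statistics of order two with kernels $k^{(\tau)}$ and $k^{(\nu)}$ from \eqref{eq:tau_kernel} and Definition~\ref{def:empirical_gamma}, and the marginal quantities $\widehat{\tau}(X,X)$ and $\widehat{\tau}(Y,Y)$ are of the same form. Since all of these kernels are bounded (being built from sign functions and indicators), the integrability hypotheses of the Dehling law of large numbers hold trivially under Assumption~\ref{ass:time_series}, so each of these U-statistics converges in probability to the expectation of its kernel; that is, $\widehat{\tau} \stackrel{p}{\to} \tau$, $\widehat{\nu} \stackrel{p}{\to} \nu$, $\widehat{\tau}(X,X) \stackrel{p}{\to} \tau(X,X)$, and $\widehat{\tau}(Y,Y) \stackrel{p}{\to} \tau(Y,Y)$.

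For $\widehat{\rho}$, which is not itself a U-statistic, I would apply the same law of large numbers to the order-three U-statistic $\widetilde{\rho}$ with bounded kernel $k^{(\rho)}$ from \eqref{eq:rho_kernel}, yielding $\widetilde{\rho} \stackrel{p}{\to} \rho$. The asymptotic equivalence \eqref{eq:asymptotic_equivalence_rhohat_rhotilde} then transfers this to $\widehat{\rho}$: the coefficient $(n-1)(n-2)/n^2$ tends to one, while the remainder $3(n-1)/n^2\,\widehat{\tau}$ is $o_{\p}(1)$ because $\widehat{\tau}$ is bounded, whence $\widehat{\rho} \stackrel{p}{\to} \rho$. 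The identical argument applied to the diagonal versions gives $\widehat{\rho}(X,X) \stackrel{p}{\to} \rho(X,X)$ and $\widehat{\rho}(Y,Y) \stackrel{p}{\to} \rho(Y,Y)$.

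Finally, for the generalized coefficients I would observe that $\widehat{\gamma}$, $\widehat{\tau}_b$, and $\widehat{\rho}_b$ are each a continuous function of the consistent estimators established above, namely of $(\widehat{\tau}, \widehat{\nu})$, of $(\widehat{\tau}, \widehat{\tau}(X,X), \widehat{\tau}(Y,Y))$, and of $(\widehat{\rho}, \widehat{\rho}(X,X), \widehat{\rho}(Y,Y))$, respectively. The continuous mapping theorem then delivers their consistency, provided the relevant denominators remain bounded away from zero at the limiting point; this is guaranteed by the standing non-degeneracy assumption, under which $1-\nu$, $\tau(X,X)$, $\tau(Y,Y)$, $\rho(X,X)$ and $\rho(Y,Y)$ are all strictly positive.

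Because the kernels are bounded and the process is already assumed strictly stationary, ergodic, and absolutely regular, I do not expect a genuine obstacle here; the proof is essentially a verification. The only two points that warrant care are the passage from $\widetilde{\rho}$ to $\widehat{\rho}$ through \eqref{eq:asymptotic_equivalence_rhohat_rhotilde}, and confirming that the continuous mapping theorem is applied at arguments where the defining ratios are well-behaved, both of which are handled by the boundedness of the kernels and the non-degeneracy assumption.
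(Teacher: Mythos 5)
Your proposal is correct and follows essentially the same route as the paper, which likewise obtains consistency of $\widehat{\tau}$, $\widehat{\nu}$, $\widehat{\tau}(X,X)$, $\widehat{\tau}(Y,Y)$, $\widetilde{\rho}$ and its diagonal versions from the law of large numbers for U-statistics under weak dependence \citep[Theorem 1]{dehling2006} and then invokes the continuous mapping theorem for $\widehat{\gamma}$, $\widehat{\tau}_b$, and $\widehat{\rho}_b$. If anything, your write-up is more explicit than the paper's, which leaves the passage from $\widetilde{\rho}$ to $\widehat{\rho}$ via \eqref{eq:asymptotic_equivalence_rhohat_rhotilde} and the non-degeneracy of the denominators implicit.
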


\subsection{Asymptotic Distributions}
\label{Asymptotic Distributions TS}


To generalize the results from Propositions \ref{prop:asymptotic_distribution_tau_iid} and \ref{prop:asymptotic_distribution_gamma_iid}, we need a univariate central limit theorem under Assumption \ref{ass:time_series} (see \citet{denker1983} and Proposition \ref{prop:CLT_ustats_univariate} in the appendix), a multivariate version of it (Proposition \ref{prop:clt_Ustats_multivariate} in the appendix), and again the delta method. Furthermore, for the derivation of the asymptotic variances, we need Lemma \ref{lem:variance} in the appendix.

\begin{proposition}[Asymptotic Distribution, Time Series Case] \label{prop:asymptotic_distribution_tau} 
	Under Assumption \ref{ass:time_series} and replacing the quantities $\sigma_{lm,\textup{iid}}$ defined in Proposition \ref{prop:asymptotic_distribution_gamma_iid} by
 $$\sigma_{lm} =  r^{(l)}r^{(m)} \sum_{h=-\infty}^{\infty} \E \big[ k_{1}^{(l)} (X_i,Y_i)\, k_{1}^{(m)} (X_{i+h},Y_{i+h}) \big]$$ with $l,m \in \{\tau, \tau_X, \tau_Y, \rho, \rho_X, \rho_Y, \nu\}$, $\sigma^2_i := \sigma_{ii},$
 and using $k_1^{(l)}$ defined as in Propositions \ref{prop:asymptotic_distribution_tau_iid} and \ref{prop:asymptotic_distribution_gamma_iid} as well as $r^{(\tau)}=r^{(\tau_X)}=r^{(\tau_Y)}=r^{(\nu)} =2$, $r^{(\rho_X)} = r^{(\rho_Y)} = r^{(\rho)}=3,$
     the limiting distributions of Propositions \ref{prop:asymptotic_distribution_tau_iid} and \ref{prop:asymptotic_distribution_gamma_iid} for $\widehat{\tau}$, $\widehat{\rho}$, $\widehat{\gamma}$, $\widehat{\tau}_b$, and $\widehat{\rho}_b$ continue to apply. 
\end{proposition}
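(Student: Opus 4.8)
The plan is to follow the iid arguments of Propositions~\ref{prop:asymptotic_distribution_tau_iid} and~\ref{prop:asymptotic_distribution_gamma_iid} essentially verbatim, replacing only the limit theorems for U-statistics under Assumption~\ref{ass:iid} by their weakly-dependent analogues under Assumption~\ref{ass:time_series}. Concretely, the univariate Hoeffding CLT is swapped for the Denker--Keller-type CLT (the weak-dependence branch of Proposition~\ref{prop:CLT_ustats_univariate}), and the iid multivariate CLT for the weak-dependence branch of Proposition~\ref{prop:clt_Ustats_multivariate}. The Hoeffding decomposition of each U-statistic is a purely combinatorial population object, so the leading linear terms $k_1^{(l)}$ are \emph{identical} to the iid case and are still furnished by Lemma~\ref{lem:variance}; what changes is only the asymptotic variance of the leading term, which under temporal dependence becomes a long-run variance, i.e.\ a sum of autocovariances over all lags $h\in\bbz$. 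This is precisely how $\sigma_{lm,\textup{iid}} = r^{(l)}r^{(m)}\E[k_1^{(l)}k_1^{(m)}]$ gets replaced by $\sigma_{lm} = r^{(l)}r^{(m)}\sum_{h} \E[k_1^{(l)}(X_i,Y_i)\,k_1^{(m)}(X_{i+h},Y_{i+h})]$.

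First I would treat $\widehat{\tau}$ and $\widehat{\rho}$. Since $\widehat{\tau}$ is a U-statistic of order $2$ with the bounded kernel $k^{(\tau)}$ from~\eqref{eq:tau_kernel}, applying the weak-dependence branch of Proposition~\ref{prop:CLT_ustats_univariate} directly yields $\sqrt{n}(\widehat{\tau}-\tau)\stackrel{d}{\to}\mathcal{N}(0,\sigma_\tau^2)$ with $\sigma_\tau^2 = 4\sum_h \E[k_1^{(\tau)}(X_i,Y_i)\,k_1^{(\tau)}(X_{i+h},Y_{i+h})]$. For $\widehat{\rho}$ I would invoke the exact identity~\eqref{eq:asymptotic_equivalence_rhohat_rhotilde}: its remainder equals $\tfrac{3(n-1)}{n^2}\,\widehat{\tau}=\mathcal{O}_{\p}(n^{-1})$, because $\widehat{\tau}$ is bounded in probability by consistency (Proposition~\ref{prop:consistency}), so $\sqrt{n}(\widehat{\rho}-\rho)$ and $\sqrt{n}(\widetilde{\rho}-\rho)$ share the same limit. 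Since $\widetilde{\rho}$ is a U-statistic of order $3$ with the bounded kernel $k^{(\rho)}$, the same CLT delivers $\sigma_\rho^2 = 9\sum_h \E[k_1^{(\rho)}(X_i,Y_i)\,k_1^{(\rho)}(X_{i+h},Y_{i+h})]$.

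For the generalized coefficients $\widehat{\gamma}$, $\widehat{\tau}_b$ and $\widehat{\rho}_b$, I would stack the relevant U-statistics into a vector --- $(\widehat{\tau},\widehat{\nu})$ for $\widehat{\gamma}$, $(\widehat{\tau},\widehat{\tau}(X,X),\widehat{\tau}(Y,Y))$ for $\widehat{\tau}_b$, and the order-$3$ counterparts $(\widehat{\rho},\widehat{\rho}(X,X),\widehat{\rho}(Y,Y))$ via their U-statistic equivalents for $\widehat{\rho}_b$ --- and apply the multivariate CLT of Proposition~\ref{prop:clt_Ustats_multivariate} under Assumption~\ref{ass:time_series}, whose limiting covariance matrix has precisely the entries $\sigma_{lm}$. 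The delta method then transfers this to $\widehat{\gamma}$, $\widehat{\tau}_b$ and $\widehat{\rho}_b$ with exactly the same gradients as in the iid proof, since the maps $\gamma,\tau_b,\rho_b$ as functions of the underlying population parameters are unchanged. Reading off the resulting variances reproduces the formulas of Proposition~\ref{prop:asymptotic_distribution_gamma_iid} with every $\sigma_{lm,\textup{iid}}$ replaced by $\sigma_{lm}$.

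The main obstacle is \emph{verifying the hypotheses} of the weak-dependence CLTs for the non-smooth sign and indicator kernels. In contrast to Hoeffding's iid theorem, the Denker--Keller/Dehling--Wendler machinery requires a variation (``$P$-continuity'') condition on the kernel relative to the marginal law, because $\mathrm{sgn}$ and $\mathds{1}$ have jumps. I would check this by noting that all kernels involved are bounded and that their discontinuity sets lie on lower-dimensional diagonals such as $\{x=\copyx\}$; combined with the summability $\sum_h \beta_h^{\delta/(2+\delta)}<\infty$ in Assumption~\ref{ass:time_series}, this both validates the CLT and guarantees absolute convergence of the long-run (co)variance series defining $\sigma_{lm}$. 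Boundedness of the kernels further makes all moment conditions of the theorems automatic, so --- as in the iid case --- no moment assumptions on $(X,Y)$ are required; the remaining steps (the $\mathcal{O}_{\p}(n^{-1})$ remainder control for $\widehat{\rho}$ and the delta method) are then routine.
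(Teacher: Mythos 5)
Your proposal follows essentially the same route as the paper: the weak-dependence branches of Propositions~\ref{prop:CLT_ustats_univariate} and~\ref{prop:clt_Ustats_multivariate} applied to $\widehat{\tau}$, $\widetilde{\rho}$ (with the $\mathcal{O}_{\p}(n^{-1})$ remainder from \eqref{eq:asymptotic_equivalence_rhohat_rhotilde} handling $\widehat{\rho}$), the delta method with the unchanged gradients from the iid proof, and the kernels $k_1^{(l)}$ from Lemma~\ref{lem:variance}, so that only the (co)variances become long-run sums over all lags. Your additional concern about verifying a $P$-continuity condition on the sign/indicator kernels is not needed here, since the paper's Proposition~\ref{prop:CLT_ustats_univariate} (following Denker--Keller under absolute regularity) requires only boundedness of the kernel, which all kernels involved satisfy.
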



Thus, only the expressions for the asymptotic variances and covariances of the U-statistics occurring in the definitions of the rank correlations change compared to the iid case. In the continuous case, \citet{dehling2017} obtained the limiting distribution for Kendall's $\tau$ under similar assumptions.

We denote the cross- and autocovariances between two zero-mean processes $\{V_i\}_{i \in \mathbb{Z}}$ and $\{W_i\}_{i \in \mathbb{Z}}$ by $\alpha_{V,W} (h) := \Cov(V_i,W_{i+h})$.\footnote{Note that we denote the autocovariance function by $\alpha$ instead of using the classical notation $\gamma$ as we already use this letter for Goodman-Kruskal's $\gamma$.}
An infinite sum over the cross-autocovariances between the processes $\{ k^{(l)}_1(X_i,Y_i) \}_{i \in \mathbb{Z}}$ and $\{ k^{(m)}_1(X_i,Y_i) \}_{i \in \mathbb{Z}}$ emerges in $\sigma_{lm}$, that is, it can be rewritten as
\begin{equation}
\label{eq:covariances_ts_case}
\sigma_{lm} = r^{(l)}r^{(m)} \sum_{h=-\infty}^{\infty} \alpha_{k^{(l)}_1(X,Y),\, k^{(m)}_1(X,Y)}(h). 
\end{equation}

The following lemma shows that the asymptotic covariance $\sigma_{lm}$ is always finite under our assumptions. 
\begin{lemma}[Short Memory] \label{lem:short_memory}
Given Assumption \ref{ass:time_series} and the kernels $k^{(l)}_1$ and  $k^{(m)}_1$, $l,m \in \{\tau, \tau_X, \tau_Y, \rho, \rho_X, \rho_Y, \nu \}$, it holds that
    $$\sum_{h=-\infty}^{\infty} \left|\alpha_{k^{(l)}_1(X,Y),\, k^{(m)}_1(X,Y)}(h) \right| < \infty.$$
\end{lemma}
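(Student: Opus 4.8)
The goal is to show absolute summability of the cross-autocovariances $\alpha_{k^{(l)}_1(X,Y),\, k^{(m)}_1(X,Y)}(h)$ under Assumption~\ref{ass:time_series}. The plan is to invoke a standard covariance inequality for $\beta$-mixing (absolutely regular) processes, which bounds the covariance of two functions of observations $h$ time steps apart by the mixing coefficient $\beta_h$ raised to an appropriate power, times the $L^p$-norms of the two functions. Concretely, I would use a covariance inequality of the form $|\Cov(V_i, W_{i+h})| \leq C\, \beta_h^{\delta/(2+\delta)}\, \|V\|_{2+\delta}\, \|W\|_{2+\delta}$, valid for the $\delta > 0$ appearing in Assumption~\ref{ass:time_series}. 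Such inequalities are classical (see, e.g., the covariance bounds for absolutely regular sequences discussed in \citet{bradley2005}).

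The key steps, in order, would be as follows. First, observe that each $k^{(l)}_1(X_i,Y_i)$ is a bounded random variable: every kernel $k^{(l)}_1$ listed in Propositions~\ref{prop:asymptotic_distribution_tau_iid} and~\ref{prop:asymptotic_distribution_gamma_iid} is built from midranks, grades, sign functions, indicator/probability terms, and the constants $\tau,\rho,\nu$, all of which take values in bounded intervals (for instance $G_X(x)\in[0,1]$, $\mathrm{sgn}(\cdot)\in\{-1,0,1\}$, and all probabilities lie in $[0,1]$). Hence there is a finite constant $M$ with $|k^{(l)}_1(X,Y)|\leq M$ almost surely for each $l$, so in particular $\|k^{(l)}_1(X,Y)\|_{2+\delta} < \infty$ for every $\delta>0$. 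Since the process $\{X_i,Y_i\}$ is strictly stationary, the processes $\{k^{(l)}_1(X_i,Y_i)\}$ and $\{k^{(m)}_1(X_i,Y_i)\}$ are stationary as well, and $k^{(l)}_1(X_{i+h},Y_{i+h})$ is measurable with respect to the $\sigma$-field generated by $(X_{i+h},Y_{i+h})$, which sits in the ``future'' block separated from the ``past'' block by a gap of $h$. Second, apply the covariance inequality to obtain
\begin{equation*}
\left|\alpha_{k^{(l)}_1(X,Y),\, k^{(m)}_1(X,Y)}(h)\right| \leq C\, M^2\, \beta_{|h|}^{\delta/(2+\delta)}
\end{equation*}
for all $h$ (using $\beta_0$-type bounds or a trivial bound for the finitely many small $|h|$). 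Third, sum over $h$ and split the series into the term at $h=0$ (finite, bounded by $M^2$) and the two tails $h\geq 1$ and $h\leq -1$; by stationarity the mixing coefficients are symmetric in the sense that both tails are controlled by $\sum_{h\geq 1}\beta_h^{\delta/(2+\delta)}$, which is finite precisely by the summability assumption $\sum_{h=1}^\infty \beta_h^{\delta/(2+\delta)} < \infty$ in Assumption~\ref{ass:time_series}. This yields the claimed finiteness.

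The main obstacle is selecting and correctly invoking the covariance inequality with the matching exponent $\delta/(2+\delta)$ so that it dovetails with the summability condition as stated; one must verify that the moment order $2+\delta$ required by the inequality is available, which here is immediate from boundedness, so the boundedness observation in the first step is what makes the moment hypotheses of the inequality automatically satisfied. A secondary, more bookkeeping-level point is ensuring the measurability/separation structure lines up so the $\beta$-mixing bound applies to $\Cov(k^{(l)}_1(X_i,Y_i),\,k^{(m)}_1(X_{i+h},Y_{i+h}))$; this is routine since each kernel value depends on a single time index. No moment conditions on $(X,Y)$ themselves are needed, consistent with the remark following Assumption~\ref{ass:iid}, because the rank-based kernels are bounded regardless of the tails of $F_{X,Y}$.
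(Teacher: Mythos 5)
Your proposal is correct and follows essentially the same route as the paper: the paper likewise invokes a covariance inequality for absolutely regular processes (Theorem 3 of Doukhan together with equation (1.11) of Bradley), giving $\bigl|\Cov\bigl(k^{(l)}_1(X_i,Y_i),k^{(m)}_1(X_{i+h},Y_{i+h})\bigr)\bigr|\le 8\,\beta_h^{1-2/p}\,\|k^{(l)}_1\|_p\,\|k^{(m)}_1\|_p$ with $p>2$, uses boundedness of the kernels so that all moments exist and the mixing coefficients of the transformed process are dominated by those of $\{X_i,Y_i\}$, and then chooses $p$ so that $1-2/p$ matches the exponent $\delta/(2+\delta)$ in the summability condition of Assumption~\ref{ass:time_series} — exactly your choice $p=2+\delta$.
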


Such an absolute summability condition on the autocovariances is very common as an assumption in time series analysis, ruling out long memory (e.g.\ \citealp{hassler2018}). Here, it is already implied by Assumption \ref{ass:time_series} and the boundedness of our kernels.

We note that the structure of the asymptotic covariance $\sigma_{lm}$ is similar to the asymptotic variance of a sample mean $\overline{f}=\frac 1 n \sum_{i=1}^n f(X_i,Y_i)$, where $f$ is a known function and $\{X_i,Y_i\}_{i\in \mathbb{Z}}$ a bivariate stochastic process satisfying Assumption \ref{ass:time_series}. More precisely, it holds that $\sigma^2_{\mu_f} = n\, \Var( \overline{f}) = \sum_{h=-\infty}^{\infty} \alpha_{f(X,Y),f(X,Y)} (h)$  and similarly for the asymptotic covariance. 

\subsection{Variance Estimation}
\label{subsec:VarianceEstimationTS}
Compared to the estimator $\widehat{\sigma}_{lm,\textup{iid}}$ from Subsection~\ref{subsec:Variance Estimation IID}, the additional difficulty arises from the infinite sum in \eqref{eq:covariances_ts_case}. Besides estimating the functions $k_1^{(l)}$ and $k_1^{(m)}$ as laid out in Subsection~\ref{subsec:Variance Estimation IID}, we now employ classical techniques for long-run variance estimation, which are well known from the case of the sample mean \citep{newey1987}. That is, we estimate $\sigma_{lm}$ via

\begin{align} \label{eq:variance_estimation_time_series}
\widehat{\sigma}_{lm} =&\ r^{(l)} r^{(m)} \Biggl( \frac 1 n \sum_{i=1}^{n}  \widehat{k}_{1}^{(l)} (X_i,Y_i)\, \widehat{k}_{1}^{(m)} (X_i,Y_i)  \\
	 &  +  \sum_{h=1}^{n-1} w\biggl(\frac{h}{b_n+1}\biggr) \biggl( \frac 1 n \sum_{i=1}^{n-h}  \widehat{k}_{1}^{(l)} (X_i,Y_i)\, \widehat{k}_{1}^{(m)} (X_{i+h},Y_{i+h})  + \frac 1 n \sum_{i=1}^{n-h}  \widehat{k}_{1}^{(m)} (X_i,Y_i)\, \widehat{k}_{1}^{(l)} (X_{i+h},Y_{i+h}) \biggr) \Biggr) \nonumber \\
     =&\ r^{(l)} r^{(m)} \Biggl( \widehat{\alpha}_{\widehat{k}^{(l)}_1(X,Y),\, \widehat{k}^{(m)}_1(X,Y)}(0)\nonumber \\
     &+  \sum_{h=1}^{n-1} w\biggl(\frac{h}{b_n+1}\biggr) \biggl( \widehat{\alpha}_{\widehat{k}^{(l)}_1(X,Y),\, \widehat{k}^{(m)}_1(X,Y)}(h) + \widehat{\alpha}_{\widehat{k}^{(m)}_1(X,Y),\, \widehat{k}^{(l)}_1(X,Y)}(h)  \biggr) \Biggr),
     \nonumber
\end{align}
where the weighting function $w(h/(b_n+1))$ is also called a kernel or a window, $b_n$ is the bandwidth, and we define the empirical cross-autocovariance function between the zero-mean processes $\{V_i\}_{i \in \mathbb{Z}}$ and $\{W_i\}_{i \in \mathbb{Z}}$ by
$$
\widehat{\alpha}_{V,W} (h) := \frac 1 n \sum_{i=1}^{n-h}V_iW_{i+h}.
$$
We refer to  \citet{lazarus2018} for an overview of the vast literature on long-run variance estimation, or heteroscedastiticy and autocorrelation consistent (HAC)  estimation, in the case of sample means or regression coefficients. \citet{dehling2017} propose the use of a HAC estimator of this type for U-statistics of order 2 and especially for Kendall's $\tau$, but under a different set of assumptions. We establish consistency of our variance estimators under our assumptions on the time series and standard assumptions on the kernel and the bandwidth.

\begin{assumption}[Kernel and Bandwidth] \label{ass:kernel_and_bandwidth}
    
The kernel $w:\mathbb{R}\to\mathbb{R}$ is continuous, normalized: $w(0) = 1$, bounded: $|w(x)| \leq 1$, symmetric: $w(x) = w(-x)$ for all $x \in \mathbb{R}$, and has bounded support: $\operatorname{supp}(w)=[-1,1]$. The bandwidth $b_n$ depends on the sample size $n$ and satisfies: $b_n \to \infty, b_n/\sqrt{n} \to 0$ as $n \to \infty$.
\end{assumption}

\begin{proposition}[Consistency of Variance Estimator] \label{prop:consistency_variance_time_series}
	Under Assumptions \ref{ass:time_series} and \ref{ass:kernel_and_bandwidth}, it holds for $l,m \in \{\tau, \tau_X, \tau_Y, \rho, \rho_X, \rho_Y, \nu\}$ that $\widehat{\sigma}_{lm} \stackrel{p} {\rightarrow} \sigma_{lm}$ as $n \rightarrow \infty.$
\end{proposition}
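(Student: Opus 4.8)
The plan is to reduce the statement to a standard heteroscedasticity-and-autocorrelation-consistent (HAC) convergence result by first passing to an \emph{infeasible} estimator that uses the true kernels in place of their estimators, and then controlling the error incurred by estimating the kernels. Write $\widetilde{\sigma}_{lm}$ for the infeasible version of \eqref{eq:variance_estimation_time_series}, obtained by replacing every $\widehat{k}_1^{(\cdot)}(X_i,Y_i)$ with $k_1^{(\cdot)}(X_i,Y_i)$, and decompose
\begin{equation*}
\widehat{\sigma}_{lm} - \sigma_{lm} = \big(\widehat{\sigma}_{lm} - \widetilde{\sigma}_{lm}\big) + \big(\widetilde{\sigma}_{lm} - \sigma_{lm}\big).
\end{equation*}
I would show each bracket is $o_{\p}(1)$ separately.

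For the second bracket I would invoke classical HAC consistency for the stationary, mean-zero process $\{k_1^{(l)}(X_i,Y_i)\}_{i\in\bbz}$; recall that $\E[k_1^{(l)}(X,Y)]=0$ by construction of the Hoeffding projection, so the empirical cross-autocovariances need no further centering. The required ingredients are all in place: the kernels $k_1^{(l)}$ are uniformly bounded (they are built from sign functions and MDFs), so every moment condition holds trivially; Assumption \ref{ass:time_series} supplies the $\beta$-mixing structure; Lemma \ref{lem:short_memory} gives absolute summability of the cross-autocovariances, which guarantees $\sigma_{lm}$ is finite and that the weighting-induced truncation bias vanishes as $b_n\to\infty$ (dominated convergence, using $w(0)=1$ and $|w|\le1$); and Assumption \ref{ass:kernel_and_bandwidth} ($b_n\to\infty$, $b_n/n\to0$, together with the kernel properties) is exactly the setting covered by standard long-run variance theory (e.g.\ \citet{newey1987}, \citet{dehling2017}). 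This delivers $\widetilde{\sigma}_{lm} \stackrel{p}{\rightarrow}\sigma_{lm}$.

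The interesting part is the first bracket, which captures the cost of estimating the kernels and is where the condition $b_n/\sqrt{n}\to0$ is used. Using $\widehat{k}_1^{(l)}\widehat{k}_1^{(m)} - k_1^{(l)}k_1^{(m)} = (\widehat{k}_1^{(l)} - k_1^{(l)})\widehat{k}_1^{(m)} + k_1^{(l)}(\widehat{k}_1^{(m)} - k_1^{(m)})$ inside each empirical cross-autocovariance, and the uniform boundedness of the kernels, each lag-$h$ contribution is bounded in absolute value by a constant times $\sup_{x,y}|\widehat{k}_1^{(l)}(x,y)-k_1^{(l)}(x,y)| + \sup_{x,y}|\widehat{k}_1^{(m)}(x,y)-k_1^{(m)}(x,y)|$. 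The key estimate is that these uniform errors are $\mathcal{O}_{\p}(n^{-1/2})$: the estimators $\widehat{k}_1^{(l)}$ differ from $k_1^{(l)}$ only through the empirical MDFs $\widehat{G}_X,\widehat{G}_Y,\widehat{G}_{X,Y}$ and through $\sqrt{n}$-consistent scalars such as $\widehat{\tau}$, and the empirical process $\sqrt{n}(\widehat{F}_{X,Y}-F_{X,Y})$ is uniformly $\mathcal{O}_{\p}(1)$ under $\beta$-mixing with summable coefficients. Since $|w|\le1$ with support $[-1,1]$, there are at most $\mathcal{O}(b_n)$ nonzero lags, so summing gives $|\widehat{\sigma}_{lm}-\widetilde{\sigma}_{lm}| = \mathcal{O}_{\p}(b_n/\sqrt{n}) = o_{\p}(1)$ precisely by the bandwidth condition of Assumption \ref{ass:kernel_and_bandwidth}.

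The main obstacle I anticipate is establishing the uniform $\mathcal{O}_{\p}(n^{-1/2})$ rate for the estimated kernels under dependence, and in particular for the $\rho$-type kernels $k_1^{(\rho)},k_1^{(\rho_X)},k_1^{(\rho_Y)}$, where $\widehat{k}_1^{(\rho)}$ additionally contains the ``average of an empirical process'' term $\widehat{g}_X(x) = \frac1n\sum_i \widehat{G}_{X,Y}(x,Y_i)$. I would handle $\widehat{g}_X-g_X$ by splitting it into $\frac1n\sum_i(\widehat{G}_{X,Y}(x,Y_i)-G_{X,Y}(x,Y_i))$, bounded uniformly by $\sup_{x,y}|\widehat{G}_{X,Y}-G_{X,Y}|$, plus the ergodic average $\frac1n\sum_i G_{X,Y}(x,Y_i)-\E[G_{X,Y}(x,Y)]$, controlled uniformly in $x$ by a Glivenko--Cantelli/empirical-process argument for mixing sequences; both are $\mathcal{O}_{\p}(n^{-1/2})$ uniformly. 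Everything then reduces to a single uniform bound on $\widehat{F}_{X,Y}-F_{X,Y}$, so the crux of the proof is a uniform rate for the bivariate empirical distribution function under Assumption \ref{ass:time_series}.
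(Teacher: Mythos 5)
Your proposal is correct and follows essentially the same route as the paper's proof: the paper likewise reduces to the infeasible estimator built from the true kernels (handling the truncation bias by dominated convergence via Lemma \ref{lem:short_memory}), bounds the kernel-estimation error uniformly at rate $\mathcal{O}_{\p}(n^{-1/2})$ using empirical-process results for absolutely regular sequences, sums over the $\mathcal{O}(b_n)$ active lags to get $\mathcal{O}_{\p}(b_n/\sqrt{n}) = o_{\p}(1)$, and treats the $\rho$-type kernels by exactly your splitting of $\widehat{g}_X - g_X$ (via an intermediate kernel with $\widetilde{g}_X(x) = \frac{1}{n}\sum_{i} G_{X,Y}(x,Y_i)$). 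The only real difference is that the paper proves consistency of the infeasible estimator directly rather than citing HAC literature — a Cauchy--Schwarz bound giving $\E[S_n^2] = \mathcal{O}(b_n^2/n)$ — so the bandwidth condition $b_n = o(\sqrt{n})$ is in fact consumed in that bracket as well, not only in the kernel-estimation bracket as you suggest.
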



We use the Bartlett kernel, or triangular kernel, which has been popularized by \cite{newey1987}:
\[
w_B\biggl(\frac{h}{b_n+1}\biggr)=\left\{
\begin{array}{cc}
	1-\frac{\left\vert h\right\vert }{b_n + 1} & ,\ \left\vert h\right\vert \leq b_n\\
	0 & ,\ \left\vert h\right\vert > b_n%
\end{array}%
\right. .
\]
As the bandwidth, we choose $b_n = \lfloor 2 n^{1/3} \rfloor$ as in \citet{dehling2017}. With these choices, Assumption \ref{ass:kernel_and_bandwidth} is satisfied. We conjecture that the bounded support assumption for the kernel can be replaced by an assumption on its rate of decay.

It is well known that variance estimation under temporal dependence is a notoriously difficult problem, usually plagued by oversized tests and confidence intervals with undercoverage, with the problem getting worse with the strength of temporal dependence and only disappearing for large sample sizes, see again \cite{lazarus2018} on this issue. Alternatives to the HAC estimator such as the moving block bootstrap also exhibit this behaviour \citep{fitzenberger1998}. As expected, these problems also arise in our simulations, see Section \ref{sec:simulations}. 

Except for replacing the variance estimators for the iid case with the ones proposed in this section, confidence intervals and tests are constructed as described around \eqref{eq:confidence_interval} and \eqref{eq:ttest}.

\subsection{Variance Formulas under Independence}
\label{Variance Formulas under Independence TS}


The case where $\{X_i\}_{i \in \mathbb{Z}}$ and $\{Y_i\}_{i \in \mathbb{Z}}$ are mutually independent, but themselves serially dependent processes, is of interest for testing the null hypothesis of independence between the two processes. We thus generalize Corollaries \ref{cor:iid_and_independent_processes_asymptotic_distribution_tau} and \ref{cor:iid_and_independent_processes_asymptotic_distribution_gamma} from the iid case to serial dependence. The resulting expressions are considerably simpler than in Proposition~\ref{prop:asymptotic_distribution_tau}, which is essentially due to the simpler form of $k_1^{(\tau)}$ and $k_1^{(\rho)}$ stated in Lemma \ref{lem:kernels_under_independence}. The latter leads to a neat simplification of the autocovariances in \eqref{eq:covariances_ts_case}, also see Lemma \ref{lem:autocovariances_under_independent_processes} in the appendix.

\begin{corollary} \label{cor:independent_processes_asymptotic_distribution_tau}
	Let $\{X_i\}_{i \in \mathbb{Z}}$ and $\{Y_i\}_{i \in \mathbb{Z}}$ be independent processes satisfying Assumption \ref{ass:time_series}. Then, the limiting distributions from Corollaries \ref{cor:iid_and_independent_processes_asymptotic_distribution_tau} and \ref{cor:iid_and_independent_processes_asymptotic_distribution_gamma} continue to apply, replacing the asymptotic variances $\sigma^2_{i,\textup{iid},\textup{ind}}$ by $\sigma^2_{i,\textup{ind}}$ for $i \in \{\tau, \rho,\gamma,\tau_b,\rho_b \}$:
	$$\sigma_{\tau,\textup{ind}}^2 = \frac{4}{9} \sum_{h=-\infty}^{\infty} \rho_X(h) \rho_Y(h), \quad \sigma_{\rho,\textup{ind}}^2 = \frac{9}{4}\sigma_{\tau,\textup{ind}}^2, \quad \sigma^2_{\gamma,\textup{ind}} = \frac{\sigma_{\tau, ind}^2}{\big(1 - \zeta(X)\big)^2\big(1 - \zeta(Y)\big)^2},$$
    $$\sigma^2_{\tau_b,\textup{ind}} = \frac{\sigma_{\tau, ind}^2}{\big(1 - \zeta(X)\big)\big(1 - \zeta(Y)\big)} \quad \text{and}\quad \sigma^2_{\rho_b,\textup{ind}} =  \sum_{h=-\infty}^{\infty} \rho_{b,X}(h) \rho_{b,Y}(h),$$ 
    where
    $\rho_X(h):= \rho(X_i,X_{i+h})$ denotes the Spearman autocorrelation function at lag~$h$, and $\rho_{b,X}(h):= \rho_b(X_i,X_{i+h}) = \rho_X(h)/\rho_X(0)$ the grade autocorrelation function.
    
\end{corollary}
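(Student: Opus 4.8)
The plan is to invoke the time-series master result, Proposition~\ref{prop:asymptotic_distribution_tau}, which already delivers asymptotic normality and reduces everything to evaluating the long-run (co)variances $\sigma_{lm}$ under the extra hypothesis that $\{X_i\}$ and $\{Y_i\}$ are mutually independent. Under this hypothesis $\tau=\rho=\gamma=0$, and by Lemma~\ref{lem:kernels_under_independence} the two leading kernels collapse to a single product form,
\begin{equation*}
k_1^{(\tau)}(x,y)=k_1^{(\rho)}(x,y)=\bigl(2G_X(x)-1\bigr)\bigl(2G_Y(y)-1\bigr),
\end{equation*}
which is the crucial simplification driving the whole corollary.

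First I would treat $\widehat{\tau}$. Substituting this product kernel into the summand of $\sigma_\tau^2$ and using that the processes are independent, each cross-term factorizes,
\begin{equation*}
\E\bigl[k_1^{(\tau)}(X_i,Y_i)\,k_1^{(\tau)}(X_{i+h},Y_{i+h})\bigr]=\E\bigl[(2G_X(X_i)-1)(2G_X(X_{i+h})-1)\bigr]\,\E\bigl[(2G_Y(Y_i)-1)(2G_Y(Y_{i+h})-1)\bigr].
\end{equation*}
Since $\E[G_X(X)]=1/2$ by Lemma~\ref{lem:grade}(iv), each factor is a genuine autocovariance, and Definition~\ref{def_rho} identifies it with the Spearman autocorrelation: $\E[(2G_X(X_i)-1)(2G_X(X_{i+h})-1)]=4\,\Cov(G_X(X_i),G_X(X_{i+h}))=\rho_X(h)/3$, and analogously for $Y$. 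With $r^{(\tau)}=2$ this yields $\sigma_\tau^2=\tfrac{4}{9}\sum_h\rho_X(h)\rho_Y(h)$, the sum being finite by Lemma~\ref{lem:short_memory}. Because $k_1^{(\rho)}=k_1^{(\tau)}$ while $r^{(\rho)}=3$, the only change for $\widehat{\rho}$ is the prefactor $9$ in place of $4$, giving $\sigma_\rho^2=\tfrac{9}{4}\sigma_\tau^2$ at once.

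For the generalized measures I would exploit that the delta-method variance formulas of Proposition~\ref{prop:asymptotic_distribution_gamma_iid} carry over verbatim once each $\sigma_{lm,\textup{iid}}$ is replaced by its time-series analogue $\sigma_{lm}$. Setting $\tau=\rho=\gamma=0$ annihilates every cross-covariance term there, leaving only $\sigma_\gamma^2=\sigma_\tau^2/(1-\nu)^2$, $\sigma_{\tau_b}^2=\sigma_\tau^2/(\tau_X\tau_Y)$, and $\sigma_{\rho_b}^2=\sigma_\rho^2/(\rho_X\rho_Y)$. I would then evaluate the normalizing constants via Lemma~\ref{lem:grade}: $\tau_X=\tau(X,X)=1-\zeta(X)$ from part~(iii), $\rho_X=\rho(X,X)=12\,\Var(G_X(X))=1-\zeta_2(X)$ from part~(iv), and $1-\nu=(1-\zeta(X))(1-\zeta(Y))$ from \eqref{eq:nu_under_independence}. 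These substitutions deliver the stated formulas for $\sigma_\gamma^2$ and $\sigma_{\tau_b}^2$ directly, and for $\rho_b$ one recognizes $\sigma_\rho^2/(\rho_X\rho_Y)=\sum_h\rho_{b,X}(h)\rho_{b,Y}(h)$ after dividing each $\rho_X(h)$ by $\rho_X(0)=1-\zeta_2(X)$.

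The main obstacle is the bookkeeping in the factorization step: one must verify that \emph{mutual} independence of the two processes (not merely contemporaneous independence) licenses splitting the fourth-order cross-autocovariance into a product of two second-order autocovariances, and that the resulting normalizations combine so that the factors of $12$ from Spearman's definition and the squared kernel orders $r^{(\tau)},r^{(\rho)}$ cancel cleanly. This factorization is precisely what Lemma~\ref{lem:autocovariances_under_independent_processes} in the appendix supplies, so I would lean on it to keep the argument short.
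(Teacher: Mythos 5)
Your proposal is correct and follows essentially the same route as the paper: the paper's proof is a one-line appeal to Proposition~\ref{prop:asymptotic_distribution_tau}, Lemma~\ref{lem:autocovariances_under_independent_processes} (whose factorization argument via Lemma~\ref{lem:kernels_under_independence} you re-derive verbatim, including the identification $4\,\Cov(G_X(X_i),G_X(X_{i+h}))=\rho_X(h)/3$), and the same $\tau=\rho=\gamma=0$ delta-method simplification used for Corollary~\ref{cor:iid_and_independent_processes_asymptotic_distribution_gamma}, with $\tau_X=1-\zeta(X)$, $\rho_X=1-\zeta_2(X)$, and \eqref{eq:nu_under_independence} supplying the normalizing constants exactly as you do.
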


Note that in order to achieve compact formulas, we have expressed the asymptotic variances for $\tau, \rho, \gamma$ and $\tau_b$ via the autocorrelation in terms of $\rho$, and the asymptotic variance for the grade correlation via the autocorrelation in terms of $\rho_b$. 
We could also express all results in terms of one or the other, see Definition \ref{def:grade_correlation} and Lemma \ref{lem:autocovariances_under_independent_processes} in the appendix for their relation. 
In the continuous case, this result for $\widehat{\tau}$ and $\widehat{\rho}$ has recently been derived by \citet[Corollary 2.1]{lun2023}.

In order to estimate the asymptotic variances from Corollary \ref{cor:independent_processes_asymptotic_distribution_tau}, we estimate the Spearman autocorrelations of $X$ on the sample $\{X_i,X_{i+h}\}_{i=1}^{n-h}$ via $\widehat{\rho}_X (h) := \widehat{\rho} (X_i,X_{i+h})$, see Definition \ref{def:empirical_rho} (and analogously for $Y$), plug them in for their theoretical counterparts and downweight the summands in the same way as for our variance estimator discussed in and below \eqref{eq:variance_estimation_time_series}. Consistency of this estimator follows by standard arguments for the consistency of a HAC estimator as employed in the proof of Proposition \ref{prop:consistency_variance_time_series}. For $\rho_b$, we use the values of the Spearman autocorrelations but estimate the denominator (i.e.\ the tie probabilities) only once on the full sample.

\section{Simulation Study} \label{sec:simulations}
In order to evaluate the finite-sample performance of the derived asymptotic distributions and proposed variance estimators for the different dependence measures, we report on a comprehensive simulation study. For many different types of data-generating processes (DGPs) -- both continuous and discrete, and both iid and serially dependent, we simulated $MC=1,000$ bivariate iid samples or time series, respectively, which are used for either testing for independence, or for computing a confidence interval for the respective dependence measure.  We summarize the most important results of the simulation study here. For details on the DGPs, tabulated simulation results, and a more in-depth analysis and interpretation, we refer to Appendix \ref{app_sec:simulations}.

We start with the independence tests for iid DGPs (see Corollaries \ref{cor:iid_and_independent_processes_asymptotic_distribution_tau} and \ref{cor:iid_and_independent_processes_asymptotic_distribution_gamma} and the corresponding variance estimators), where we also compare the size and power values of the tests based on rank correlations to tests based on Pearson correlation (see equations \ref{eq:Pearson_as_distribution} and \ref{eq:Pearson_as_distribution_TS} for its asymptotic distribution under independence in the iid and the time series case). Pearson correlation shows mild size distortions outside the bivariate normal case and strong size distortions for heavy-tailed DGPs. This pattern does not change between continuous and discrete DGPs. However, the size distortions further deteriorate with growing sample size when second moments are infinite and improve otherwise. The rank correlations on the other hand are robust to distributional characteristics such as heavy tails and hold the nominal size well across all DGPs. Only Kendall's $\tau$ shows mild oversizing in small samples ($n=50$). In terms of power, Pearson correlation again shows a worse performance than the rank correlations for the heavy-tailed DGPs. Additionally, asymmetries in the distributions seem to worsen the performance of Pearson correlation in terms of both size and power. Besides that, all coefficients show rather similar rejection rates with slight advantages for Spearman's $\rho$ in the continuous and for Goodman-Kruskal's $\gamma$ in the discrete case. Expectedly, all independence tests based on rank correlations exhibit increasing power with increasing sample size.

Turning to the coverage rates of the confidence intervals for iid DGPs (see Propositions \ref{prop:asymptotic_distribution_tau_iid}, \ref{prop:asymptotic_distribution_gamma_iid} and \ref{prop:consistency_variance_iid}), all coefficients show an empirical coverage close to the theoretical confidence level. Only for strong cross-dependencies, the former sometimes falls short of the latter, a misbehavior that diminishes when applying the Fisher transformation (see Subsection \ref{Fisher Transformation}).

For the time series DGPs, independence tests amount to tests of whether the two considered time series are independent of each other (and we thus employ Corollary \ref{cor:independent_processes_asymptotic_distribution_tau} and the corresponding variance estimators). Due to the rather strong serial dependence we use in our time series DGPs (our DGPs involve AR(1) processes with a coefficient value of 0.8) and the well-known difficulties with variance estimation under temporal dependence discussed in Subsection \ref{subsec:VarianceEstimationTS}, we observe strong oversizing for all DGPs and all coefficients even for $n=800$. Again, Pearson correlation performs worst for the heavy-tailed and asymmetric DGPs. For the independence tests based on the asymptotic theory in this paper, we again observe improving power with increasing length of the time series.

For the confidence intervals in the time series case, the empirical coverages, in turn, suffer from the biased variance estimates (see Propositions \ref{prop:asymptotic_distribution_tau} and \ref{prop:consistency_variance_time_series}) in the same way as the sizes of the independence tests. We observe strong undercoverage, especially for short time series, that improves with growing length of the time series. In addition to that, the strength of cross-dependence does also seem to play an important role for the empirical coverage, with some (continuous) DGPs showing decreasing coverage for increasing strength of dependence and some other DGPs exhibiting the opposite relationship.

\section{Illustrative Data Applications}
\label{Illustrative Data Applications}



To illustrate the practical application of the different dependence measures and the proposed confidence intervals and tests, we consider two data examples. The first one involves an ecological data set, where the data can be assumed to arise from an iid DGP, see Section~\ref{Botanical Data Example} for details. In Section~\ref{Accidents Data Example}, we consider a time series of daily numbers of road accidents. In both cases, the data are discrete counts such that our novel asymptotics are crucial for executing dependence tests and computing confidence intervals.

\subsection{Ecological Data}
\label{Botanical Data Example}
We consider the bivariate count data $(x_1,y_1),\ldots,(x_{100},y_{100})$ in Table~1 of \citet{holgate66}, which were further analyzed by \citet{kokonendji18}. The data arose in a study of a secondary rain forest in Trinidad, and they express the numbers of plants of the species
\textit{Lacistema aggregatum} ($x_i$) and \textit{Protium guianense} ($y_i$) in each of 100 contiguous quadrants. Figure \ref{fig:Bubbleplot_botanical} visualizes the data using a bubble plot.

\begin{figure}[h]
    \centering
    \includegraphics[width=0.8\linewidth]{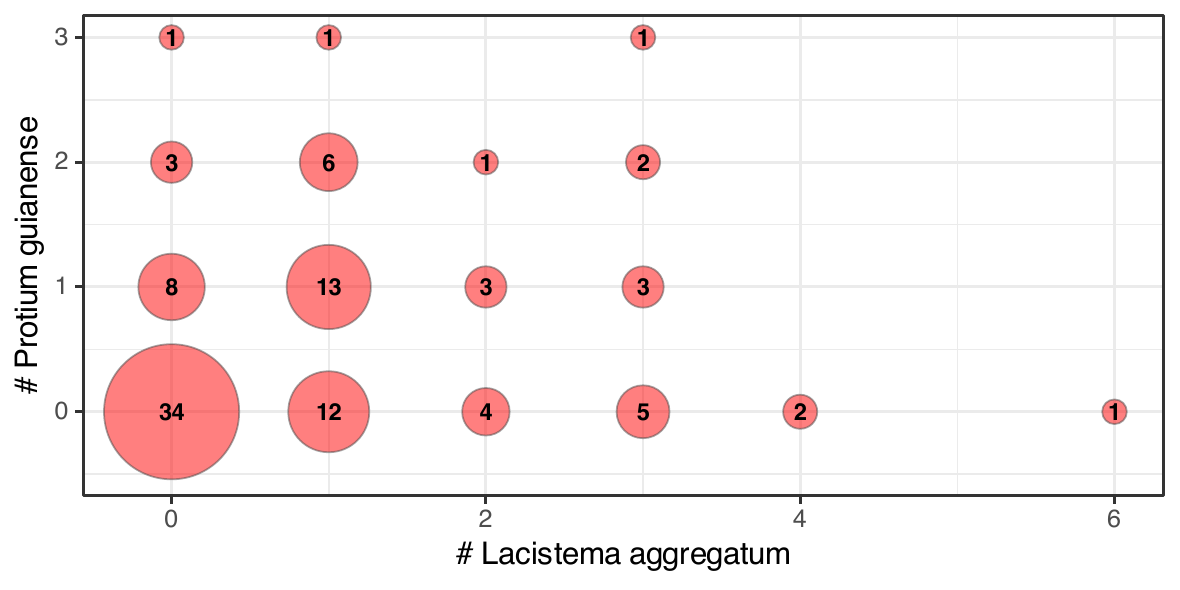}
    \caption{Bubble plot visualizing the ecological data example: The x-axis represents the number of plants of the species Lacistema aggregatum and the y-axis the number of plants of the species Protium guianense in each of 100 contiguous quadrants. The bubbles contain and their sizes visualize the counts of certain combinations of x- and y-values.}
    \label{fig:Bubbleplot_botanical}
\end{figure}

Due to the experimental design, the data can be understood as the realization of an iid~DGP, so we make use of the asymptotics in Section~\ref{Asymptotic Theory: IID Data}. Analyzing the bivariate data for possible cross-dependence is of practical relevance in order to find out if the species tend to occur together, to be unrelated or to be mutually exclusive. In Table \ref{tab:botanical} we report the values of empirical Pearson correlation $\widehat{r}$, the classical empirical rank correlation $\widehat{\tau}$ and $\widehat{\rho}$ and their generalizations $\widehat{\tau}_b$, $\widehat{\gamma}$ and $\widehat{\rho}_b$, which are better suited for the discrete case. This table also reports $90\%$ confidence intervals for the theoretical counterparts of these coefficients as given in \eqref{eq:confidence_interval} and p-values for tests of independence and uncorrelatedness as in and around \eqref{eq:ttest}, where for the independence tests the simplified variance formulas from Corollaries \ref{cor:iid_and_independent_processes_asymptotic_distribution_tau} and \ref{cor:iid_and_independent_processes_asymptotic_distribution_gamma} (and for Pearson correlation from \eqref{eq:Pearson_as_distribution}) were used.


\begin{table}[ht]
\centering
\caption{Correlation estimates for the ecological data example together with 90\%-CIs and p-values for tests for independence and uncorrelatedness. The first p-value uses an estimate for the asymptotic variance under independence, whereas the second p-value uses the same variance estimate as the confidence intervals.}
\begin{tabular}{lcccc}
\toprule
\textbf{coefficient} & \textbf{estimate} & \textbf{90\%-CI} & \textbf{p-value (indep.)} & \textbf{p-value} \\
\midrule
$\widehat{r}$      & 0.114 & [-0.055, 0.277] & 0.2537 & 0.2619 \\
$\widehat{\tau}$      & 0.124 & [0.033, 0.213] & 0.0238 & 0.0232 \\
$\widehat{\rho}$      & 0.183 & [0.047, 0.312] & 0.0266 & 0.0236 \\
$\widehat{\tau}_b$    & 0.199 & [0.048, 0.340] & 0.0253 & 0.0264 \\
$\widehat{\gamma}$    & 0.302 & [0.078, 0.497] & 0.0351 & 0.0192 \\
$\widehat{\rho}_b$    & 0.222 & [0.056, 0.376] & 0.0266 & 0.0240\\
\bottomrule
\end{tabular}
\label{tab:botanical}
\end{table}

When first being interested in the question if there is any dependence at all, that is, considering the tests of independence, all rank-correlation-based tests speak the same language, yielding p-values below $5\%$. The test based on Pearson correlation, on the contrary, yields a p-value way above $10\%$. This could be due to a monotonic, but nonlinear form of dependence, which the rank correlations can capture, implying that the respective tests have power in this direction, whereas Pearson correlation is tailored to measuring linear dependence. The tests for uncorrelatedness, i.e., for the respective correlations being 0, have very similar p-values as the tests for independence. Thus, in this case using the null hypothesis of independence to simplify the variance formula does not lead to efficiency gains. 

The values of the correlation coefficients inform us about direction and strength of dependence. All coefficients are positive, but Pearson correlation with a value of 0.114 is smaller than all the rank correlations (and insignificant as discussed above), in particular than the generalized rank correlations, which range from 0.199 to 0.302 and which should be used here for measuring strength of dependence due to the discreteness of the data. Thus, we can conclude on a (mild) tendency of co-occurrence for both species, where the dependence does not appear to be a linear one.


\subsection{Accidents Data}
\label{Accidents Data Example}
Our second example is about bivariate time series data, namely the bivariate counts $(x_i,y_i)$, $i=1,\ldots,365$, which refer to the daily numbers of daytime ($x_i$) and nighttime ($y_i$) road accidents in the Schiphol area (Netherlands) during the year 2001. The  data have been read from Figure~1 in \citet{pedeli11}. These authors showed that the data exhibit significant serial dependence and propose to model the data by a bivariate INAR$(1)$ model. In Figure \ref{fig:bubble_accidents} we again depict the empirical joint distribution in a bubble plot. Figure \ref{fig:timeseries_accidents} presents a time plot of both series. 
\begin{figure}[h]
  \centering
    \includegraphics[width=\linewidth]{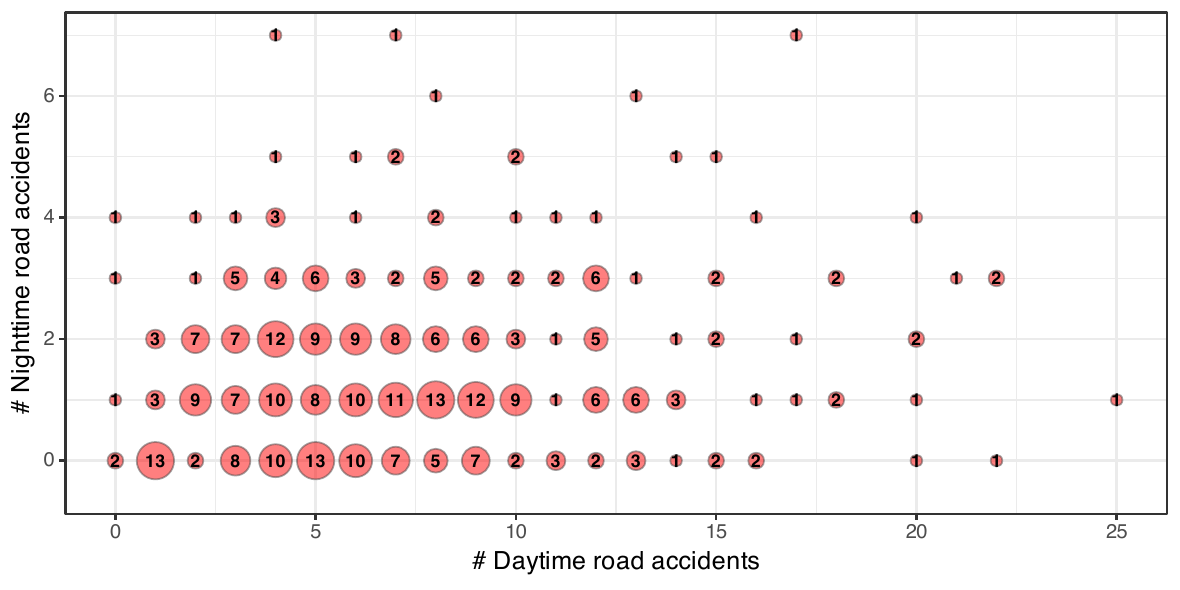}
    \caption{Bubble plot: The x-axis represents the number of daytime and the y-axis the number of nighttime road accidents in the Schiphol area (Netherlands) during the year 2001. The bubbles contain and their sizes visualize the counts of certain combinations of x- and y-values.}
    \label{fig:bubble_accidents}
\end{figure}

\begin{figure}[h]
\centering
\includegraphics[width=\linewidth]{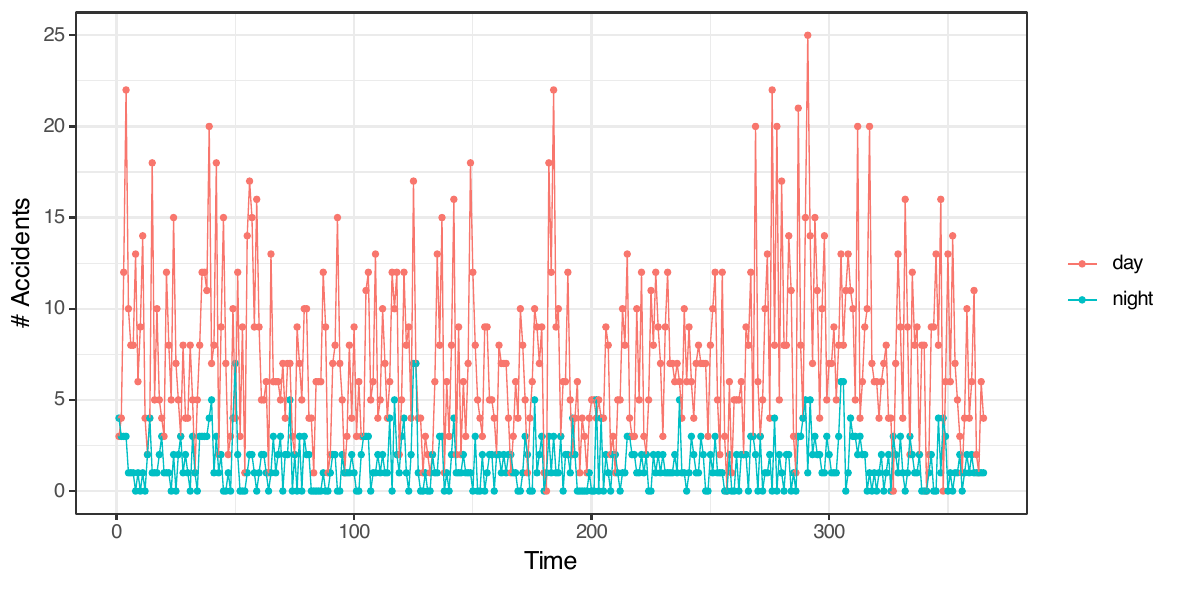}
\caption{Time series plot: The x-axis measures the days in 2001 and the y-axis the numbers of road accidents that occurred during that day in the Schiphol area (Netherlands) either during daytime or during nighttime.}
\label{fig:timeseries_accidents}
\end{figure}

Here, our interest is in the cross-dependence between the daytime and nighttime accidents, where hypothesis tests and confidence intervals have to account for the serial dependence (without imposing a specific time series model) as described in Section~\ref{Asymptotic Theory: Time Series}. All rank correlations and Pearson correlation indicate mild positive positive dependence (see Table \ref{tab:accidents}) with values around 0.1. Interestingly and contrary to the first data example, Pearson correlation is slightly larger than all of the rank correlations here. Usually, Pearson correlation being larger than Kendall's $\tau$ or Spearman's $\rho$ is associated with linear dependence, that is, a form of dependence that can be captured well by Pearson correlation. A more in-depth explanation is as follows: The shape of the bubble plot suggests that the joint distribution can be well approximated by an elliptical distribution. For elliptical distributions, firstly, the dependence is fully characterized by Pearson correlation $r$, and secondly, there are well-known functional relationships between $r$ and the classical rank correlations $\tau$ and $\rho$, which show that these rank correlations are always smaller than $r$ \citep{embrechts2002}. The respective generalized rank correlations are greater than $\tau$ and $\rho$, but still smaller than $r$ here. 

\begin{table}[ht]
\centering
\caption{Correlation estimates for the accidents data example together with 90\%-CIs and p-values for tests for independence and uncorrelatedness. The first p-value uses an estimate for the asymptotic variance under independence, whereas the second p-value uses the same variance estimate as the confidence intervals.}
\begin{tabular}{lcccc}
\toprule
\textbf{coefficient} & \textbf{estimate} & \textbf{90\%-CI} & \textbf{p-value (indep.)} & \textbf{p-value} \\
\midrule
$\widehat{r}$      & 0.142  & [0.039, 0.243] & 0.0071 & 0.0218 \\
$\widehat{\tau}$      & 0.081  & [0.010, 0.151] & 0.0170 & 0.0610 \\
$\widehat{\rho}$      & 0.120  & [0.014, 0.224] & 0.0177 & 0.0596\\
$\widehat{\tau}_b$    & 0.096  & [0.011, 0.179] & 0.0173 & 0.0620 \\
$\widehat{\gamma}$    & 0.113  & [0.014, 0.210] & 0.0178  & 0.0588\\
$\widehat{\rho}_b$    & 0.125  & [0.015, 0.232] & 0.0177& 0.0591 \\
\bottomrule
\end{tabular}\label{tab:accidents}
\end{table}

When it comes to inference, all correlations are significantly different from 0. Interestingly, again contrary to the first data example, the p-values nevertheless strongly depend on the actual kind of estimating the variance. If we make use of the component-wise independence as postulated by the null hypothesis (see Section~\ref{Variance Formulas under Independence TS}), we get p-values $<2\%$ throughout, i.e., we conclude on significant dependence even on the 5\%-level. If estimating the variance without imposing component-wise independence as in Section~\ref{subsec:VarianceEstimationTS} (as when computing confidence intervals), the p-values of the tests become larger with values around 6\%.  Thus, there are efficiency gains from imposing the null hypothesis of independence on the variance. Note that since the temporal dependence is mild in this application (all first order-autocorrelations and cross-autocorrelations for the two series are between 0.1 and 0.2), our HAC-type variance estimators should not suffer from substantial size distortions.

Altogether, we conclude on mild positive dependence between the daytime and nighttime accidents on a day~$i$, which appears reasonable as for example general weather conditions or holidays agree between~$x_i$ and $y_i$ for a given day~$i$.


\section{Conclusion}
\label{Conclusions}


We provide a comprehensive treatment of asymptotic inference for the classical rank correlations Kendall's $\tau$ and Spearman's $\rho$ as well as their generalizations for discrete random variables $\tau_b$, Goodman-Kruskal's $\gamma$, and grade correlation. Under iid or time series assumptions, and not restricting ourselves to continuous variables, we derive limiting distributions by using asymptotic results for U-statistics and propose variance estimators that enable the construction of confidence intervals and tests. Furthermore, we derive simplified variance formulas in the case of independence, which form the basis for independence tests. 

Although asymptotic confidence intervals and tests are the most classical approach to statistical inference, and Kendall's $\tau$ and Spearman's $\rho$ have favorable properties compared to Pearson correlation and are commonly employed in descriptive statistics, it is quite surprising that asymptotic inference for these rank correlations has been lacking fundamental results and, thus, has rarely been used in inductive statistics. Consequently, sound inferential procedures (with the exception of independence tests) are not implemented in most statistical software. The results of this paper, which of course build on earlier foundational work such as \citet{hoeffding1948} and \citet{dehling2017}, allow for a change in this respect. 

In the iid setting, the proposed asymptotic inference performs very well and we posit that it ought to be the method of choice as in related settings. Certainly, resampling-based inference for rank correlations constitutes a valid alternative. It thus seems to be a fruitful endeavor for future research to compare the performance of the two approaches. 

For time series data, we discover some problems with variance estimation under strong temporal dependence and in small to medium-sized samples. This, however, is not surprising as variance estimation under temporal dependence in similar settings (e.g.\ for the sample mean or linear regression coefficients) is a notoriously difficult problem, plagued by oversized tests and confidence intervals with undercoverage. 
Despite these issues, we consider our asymptotic confidence intervals and tests to be a substantial improvement on the status quo in time series settings, where inferential procedures are hardly available so far. Future research in this direction is undoubtedly crucial. For example, self-normalization might be a way to tackle these problems \citep{shao2015}, or at least the improvement of the HAC estimator by systematically comparing different choices of the bandwidth or kernel.

While Kendall's $\tau$, Spearman's $\rho$, and their variants for the discrete case, are the classical and most widely-used rank correlations, inference for other representatives of this class is also relevant and can be approached by using the results of this paper. 
For example, there are asymmetric versions of $\tau_b$ and grade correlation: Somers' $D$ defined as $\tau(X,Y)/\tau(X,X)$ \citep{somers1962, newson2002}, and the recently introduced asymmetric grade correlation defined as $\rho(X,Y)/\rho(X,X)$ \citep{walz2025}, measure directed dependence of $Y$ on $X$, e.g.\ to assess the predictive potential. In the continuous case, they equal $\tau$ and $\rho$, respectively, and for a dichotomous $Y$, they are both related by a one-to-one mapping to the area under the receiver operating characteristic (ROC) curve, short area under the curve (AUC), which is a popular measure to assess the performance of binary classifiers \citep{walz2025}. For all these measures, inference tools can be constructed by using the techniques introduced in this paper. \citet{walz2025} do this for the case of asymmetric grade correlation and develop tests for the equality of two grade correlations as well. Spearman's $\rho$ and asymmetric grade correlation are also closely related to the slope coefficients of rank-rank regressions, for which inference has recently been treated by \citet{chetverikov2023}. 

\bibliography{literature_inference_rank_correlations}
\clearpage

\appendix
\appendixpage
\numberwithin{equation}{section}
\numberwithin{table}{section}
\numberwithin{figure}{section}
\numberwithin{theorem}{section}

\section{Asymptotic Theory for U-Statistics under Weak Dependence} 
\label{Asymptotic Theory for U-Statistics under Weak Dependence}

A U-statistic can be viewed as a generalized sample mean, namely as an average over a function of $r$ sample elements instead of only one element. See e.g.\ \citet{lee1990} for an introduction to U-statistics, and \citet{dehling2006} for a discussion of limit theorems for U-statistics under weak dependence. Let $\{\mathbf{X}_i\}_{i \in \mathbb{Z}}$ be a $K$-variate stochastic process, $\mathbf{X}_i^\prime =  \left(X_{1,i}, \ldots, X_{K,i} \right)$, and $\{\mathbf{X}_i\}_{i=1}^n$ a sample of size $n$ arising from it, or in other words, a time series of length $n$. By $\mathbf{X}^{(2)}_i, \mathbf{X}^{(3)}_i, \ldots$, we denote independent copies of $\mathbf{X}_i$, where $\mathbf{X}_i$ itself is also written as $\mathbf{X}^{(1)}_i$ in this context.

\begin{definition}[U-Statistic] \label{def:u_statistic}
	Let $\{\mathbf{X}_i\}_{i=1}^n$ be a $K$-variate time series. Let $r \leq n$ and let $k:\mathbb{R}^K \times \ldots \times \mathbb{R}^K \rightarrow \mathbb{R},\ \left( \mathbf{x}^{(1)}, \ldots, \mathbf{x}^{(r)} \right) \mapsto k \left( \mathbf{x}^{(1)}, \ldots, \mathbf{x}^{(r)} \right)$ be a measurable and symmetric function. Then the U-statistic $U_{\mathbf{X}}(k)$ is defined as
	$$ U_{\mathbf{X}}(k) = \binom{n}{r}^{-1} \sum_{1 \leq i_1 < \ldots < i_r \leq n} k \left( \mathbf{X}_{i_1}, \ldots, \mathbf{X}_{i_r} \right),$$
    that is, the sum runs over all increasing $r$-tuples of indices from 
    $\{1, \ldots, n\}$. The function $k$ is called kernel (of size $r$) of the U-statistic $U_{\mathbf{X}}(k)$ (of order $r$). 
\end{definition}

A U-statistic is an unbiased estimator for the parameter
\begin{equation} \label{eq:theta}
\theta := \E\big[ k ( \mathbf{X}^{(1)}, \ldots \mathbf{X}^{(r)} )  \big].
\end{equation}
The use of symmetric kernels in the definition of a U-statistic is not a strong restriction, because for any non-symmetric kernel $k$, there exists a U-statistic with the symmetric kernel
\begin{equation} \label{eq:nonsymmetric_kernel}
\tilde{k} \left( \mathbf{x}^{(1)}, \ldots \mathbf{x}^{(r)} \right) = \frac{1}{r!} \sum_{1 \leq j_1 \neq \ldots \neq j_r \leq r} k \left( \mathbf{x}^{(j_1)}, \ldots, \mathbf{x}^{(j_r)} \right),
\end{equation}
where the sum is taken over all permutations of $\{1,\ldots,r\}$, which has the same expectation $\theta$.

We usually leave out the subscript ${\mathbf{X}}$ in $U_{\mathbf{X}}$ if it is not necessary to emphasize based on which process the U-statistic is defined, and analogously the argument $k$ if the same holds true for the kernel. Furthermore, we omit the time index $i$ in situations where we do not need to distinguish different points in time.

In the main part of the paper, we are only concerned with bivariate time series, where we use the notation $(X,Y)$ instead of $\mathbf{X}$, which allows us to skip the index for the elements of the vector $\mathbf{X}$. Furthermore, only U-statistics of order 2 and 3 show up there. We thus only need up to two independent copies of $(X,Y)$, which we denote by $(X^\prime,Y^\prime)$ and $(X^{\prime\prime},Y^{\prime\prime})$ to avoid the index counting the copies.

A key tool for the analysis of U-statistics is the Hoeffding decomposition \citep{hoeffding1961}. For our purposes, we focus on the leading term, as the lower-order terms do not play a role in the asymptotic analysis, and we consider an asymptotic version of the decomposition: A U-statistic can be rewritten as (under our assumptions spelled out in Proposition \ref{prop:CLT_ustats_univariate} below)  
\begin{equation} \label{eq:Hoeffding_decomposition}
	U = \theta + \frac r n \sum_{i=1}^n k_1(\mathbf{X}_i) + \mathcal{O}_{\p}(n^{-1})
\end{equation}
with
\begin{equation} \label{eq_k1}
	k_1(\mathbf{x}) = \E\big[k(\mathbf{x}, \mathbf{X}^{(2)}, \ldots \mathbf{X}^{(r)} )\big] - \theta,
\end{equation}
where $\theta$ is defined in \eqref{eq:theta}.

This decomposition shows that the U-statistic $U$ can essentially be treated as a sample mean of $k_1(\mathbf{X}_i)$.  This is the essence behind central limit theorems for U-statistics, which consequently can be derived from usual central limit theorems. The kernel $k_1$ (derived from $k$ and the distribution of $(\mathbf{X}^{(2)}, \ldots \mathbf{X}^{(r)} )$) is of order 1 and (amongst others) plays a central role in determining limiting variances in central limit theorems for U-statistics. The following central limit theorem is due to \citet[Theorem 1c]{denker1983}. We restrict attention to the case of bounded kernels, which we exclusively deal with in this paper. Boundedness could be replaced by a moment condition at the cost of strengthening the mixing condition.
\begin{proposition}[Central Limit Theorem for U-Statistics] \label{prop:CLT_ustats_univariate}
	Let all components of $\{\mathbf{X}_i\}_{i \in \mathbb{Z}}$ satisfy Assumption \ref{ass:iid} or Assumption \ref{ass:time_series}, and let $U$ be a U-statistic with a bounded symmetric kernel $k$ of order $r$. Then, it holds that $$\sqrt{n} \left( U -  \theta \right) \stackrel{d}{\rightarrow} \mathcal{N}(0,r^2 \sigma_{U}^2),$$
	where under Assumption \ref{ass:iid}, we have
    $$\sigma_U^2 =  \E \big[ k_1(\mathbf{X})^2 \big],$$ 
    and under Assumption \ref{ass:time_series}, we have
	$$\sigma_U^2 =  \sum_{h=-\infty}^{h=\infty} \alpha_{k_1(\mathbf{X})}(h).$$
	Here, $$\alpha_{k_1(\mathbf{X})}(h) = \E \big[ k_1(\mathbf{X}_i) k_1(\mathbf{X}_{i+h}) \big] $$
	denotes the autocovariance function at lag $h$ of $\{k_1(\mathbf{X}_i)\}_{i \in \mathbb{Z}}$, and $\theta$ and $k_1 (\mathbf{x})$ are defined in \eqref{eq:theta} and \eqref{eq_k1}, respectively.
\end{proposition}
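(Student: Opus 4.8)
The plan is to reduce the statement to a central limit theorem for a sample mean by way of the Hoeffding decomposition \eqref{eq:Hoeffding_decomposition}. Setting $S_n := \frac{1}{\sqrt n}\sum_{i=1}^n k_1(\mathbf{X}_i)$ with $k_1$ as in \eqref{eq_k1}, the decomposition yields $\sqrt{n}(U-\theta) = r\,S_n + \sqrt{n}\,\mathcal{O}_{\p}(n^{-1})$, so the argument splits into three essentially separate pieces: (a) showing the remainder $\sqrt{n}\,\mathcal{O}_{\p}(n^{-1}) = \mathcal{O}_{\p}(n^{-1/2})$ is asymptotically negligible, (b) identifying the limiting variance of $S_n$, and (c) establishing asymptotic normality of $S_n$. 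Slutsky's theorem then combines (a) and (c), and (b) supplies the variance in the stated form, with the factor $r^2$ coming from the constant $r$ multiplying $S_n$.

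For piece (b), I would first note that $S_n$ is centered, since $\E[k_1(\mathbf{X})] = \E\big[k(\mathbf{X},\mathbf{X}^{(2)},\ldots,\mathbf{X}^{(r)})\big] - \theta = 0$ by \eqref{eq:theta} and \eqref{eq_k1}. Under Assumption \ref{ass:iid}, $\{k_1(\mathbf{X}_i)\}$ is iid with variance $\E[k_1(\mathbf{X})^2] < \infty$, finiteness being immediate from boundedness of $k$. Under Assumption \ref{ass:time_series}, $\{k_1(\mathbf{X}_i)\}_{i\in\bbz}$ is a bounded, strictly stationary, absolutely regular sequence, since measurable functions of an absolutely regular process inherit its mixing coefficients; the long-run variance $\sum_{h=-\infty}^{\infty}\alpha_{k_1(\mathbf{X})}(h)$ is then finite and equals $\lim_n \Var(S_n)$, the absolute summability following exactly as in Lemma \ref{lem:short_memory} from a Davydov-type covariance inequality together with the condition $\sum_h \beta_h^{\delta/(2+\delta)}<\infty$.

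Piece (c) is a standard central limit theorem for a centered, square-integrable, stationary sequence: the Lindeberg-Lévy theorem under Assumption \ref{ass:iid}, and a central limit theorem for absolutely regular sequences (of Ibragimov type, valid under our mixing and boundedness conditions) under Assumption \ref{ass:time_series}. Both deliver $S_n \stackrel{d}{\rightarrow}\mathcal{N}(0,\sigma_U^2)$ with $\sigma_U^2$ as identified in (b), and multiplication by $r$ gives the limit $\mathcal{N}(0,r^2\sigma_U^2)$.

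The main obstacle is piece (a), the control of the degenerate part of the Hoeffding decomposition, and this is where Assumption \ref{ass:time_series} does the real work. In the iid case the classical variance formula shows that the degenerate U-statistics of order $c\geq 2$ have variance $\mathcal{O}(n^{-c})$, so $\sqrt{n}$ times their sum is $\mathcal{O}_{\p}(n^{-1/2})\to 0$ and the remainder is trivially negligible. Under weak dependence this bookkeeping is far more delicate: one must bound the variance, or a higher moment, of each degenerate U-statistic $U_n^{(c)}$, $c\geq 2$, using moment inequalities for $\beta$-mixing summands, and verify that these bounds still decay faster than $n^{-1}$ despite the serial dependence. This is precisely the technical heart of \citet[Theorem 1c]{denker1983}; I would either invoke that theorem directly or reproduce its degenerate-U-statistic moment bounds, exploiting the boundedness of $k$ to dispense with the moment conditions required by its more general formulation.
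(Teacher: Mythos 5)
Your proposal is correct and is in substance the same argument the paper relies on: the paper offers no proof of this proposition but simply cites \citet[Theorem 1c]{denker1983} for the weakly dependent case (and \citet{hoeffding1948} for the iid case), and your sketch — Hoeffding decomposition, CLT for the linear part, negligibility of the degenerate remainder — is precisely the architecture of that cited proof, with your piece (a) deferring to the very same theorem at the decisive step. Since the technical heart of your argument is an invocation of Denker--Keller's result rather than an independent derivation, your proof amounts to the same citation-based treatment as the paper's.
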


The asymptotic Hoeffding decomposition \eqref{eq:Hoeffding_decomposition} explains the formula for the asymptotic variance $\sigma_U^2$. It has the form of the asymptotic variance of the process $\{k_{1}(\mathbf{X}_i)\}_{i \in \mathbb{Z}}$, i.e.\ the variance of the sample mean (scaled by $\sqrt{n}$) of this process, which is a long-run variance matrix in the time series case. Note that if the kernel $k_1$ is degenerate, i.e.\ $\Var(k_1(X_i))=0$, the limiting distribution becomes degenerate, so a stronger scaling factor than $\sqrt{n}$ is needed to obtain a non-degenerate limiting distribution. But this is not relevant for our paper as the kernels showing up here are only degenerate in pathological cases, see \citet{hoeffding1948, dengler2010} for a discussion in the case of $\widehat{\tau}$.

There is only a small literature on variance estimation for U-statistics: For the iid case, see \citet{shirahata1992}, \citet{wang2014} and \citet[Section 4]{iwashita2024}; for the time series case the only paper that we are aware of is \citet{dehling2017}, which focuses on U-statistics of order two.



We finally provide a multivariate central limit theorem, which is the basis for many results in this paper and which follows from the univariate version in Proposition \ref{prop:CLT_ustats_univariate} by the Cram\'{e}r-Wold device and a combinatorial argument. We consider a vector of U-statistics and allow them to depend on different underlying processes. For a multivariate central limit theorem for U-statistics in the iid case see \citet[Theorem 7.1]{hoeffding1948}. 

\begin{proposition}[Multivariate Central Limit Theorem for U-Statistics] \label{prop:clt_Ustats_multivariate}
	Let $\{(X_{1,i},\ldots,X_{K,i})\}_{i \in \mathbb{Z}}$ be a $K$-dimensional process, where all components satisfy Assumption \ref{ass:iid} or Assumption \ref{ass:time_series}. Let $\mathbf{U} = (U^{(1)}_{\mathbf{X}_{I^{(1)}}}(k^{(1)}),\ldots,U^{(D)}_{\mathbf{X}_{I^{(D)}}}(k^{(D)}))^\prime$ be a $D$-dimensional vector of U-statistics with bounded symmetric kernels $\mathbf{k} = (k^{(1)},\ldots,k^{(D)})^\prime$ of orders $(r^{(1)},\ldots,r^{(D)})^\prime$ and underlying stochastic processes $\{(\mathbf{X}_{I^{(1)},i},\ldots, \mathbf{X}_{I^{(D)},i})\}_{i \in \mathbb{Z}}$, where $I^{(d)} \subset \{1,\ldots,K\}$ for $d=1,\ldots,D$ are subsets of the original index set. Furthermore, let $\theta^{(d)}$ and $k_1^{(d)}$, $d=1,\ldots,D$, denote the quantities defined in \eqref{eq:theta} and \eqref{eq_k1}, respectively, for the $d$-th U-statistic, and define $\mbox{\boldmath $\theta$} := (\theta^{(1)},\ldots,\theta^{(D)})^\prime$. Then it holds that $$\sqrt{n} \left( \mathbf{U} -  \mbox{\boldmath $\theta$} \right) \stackrel{d}{\rightarrow} \mathcal{N}(\mathbf{0},\mathbf{\Sigma}_{\mathbf{U}})
 \quad\text{with}\quad
 \mathbf{\Sigma}_{\mathbf{U}} =   \left( r^{(v)} r^{(w)} \sigma_{vw} \right)_{v=1,\ldots,D, w=1,\ldots,D},$$
	where under Assumption \ref{ass:iid}, we have
	 $$ \sigma_{vw} =  \E\big[ k^{(v)}_{1} (\mathbf{X}_{I^{(v)},i})\, k^{(w)}_{1}(\mathbf{X}_{I^{(w)},i}) \big],$$
    and under Assumption \ref{ass:time_series}, we have
	$$\sigma_{vw} = \sum_{h=-\infty}^{\infty} \E\big[ k^{(v)}_{1} (\mathbf{X}_{I^{(v)},i})\, k^{(w)}_{1}(\mathbf{X}_{I^{(w)},i+h}) \big].$$
\end{proposition}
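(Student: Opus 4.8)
The plan is to reduce the multivariate statement to the already-established univariate central limit theorem (Proposition \ref{prop:CLT_ustats_univariate}) via the Cram\'er--Wold device. Fix an arbitrary $\lambda = (\lambda_1,\ldots,\lambda_D)^\prime \in \mathbb{R}^D$; it then suffices to show that $\lambda^\prime \sqrt{n}(\mathbf{U} - \boldsymbol{\theta})$ converges in distribution to a univariate normal with mean $0$ and variance $\lambda^\prime \mathbf{\Sigma}_{\mathbf{U}} \lambda$. The difficulty is that $\lambda^\prime \mathbf{U} = \sum_{d=1}^D \lambda_d U^{(d)}$ is a linear combination of U-statistics of possibly different orders $r^{(d)}$, built from different coordinate subsets $I^{(d)}$, so it is not immediately a single U-statistic to which Proposition \ref{prop:CLT_ustats_univariate} applies.

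First I would put all summands on a common footing. Each kernel $k^{(d)}$ can be regarded as a bounded symmetric kernel on the full $K$-variate observation by letting it ignore the coordinates outside $I^{(d)}$; this leaves the U-statistic unchanged but makes all $U^{(d)}$ functions of the same process $\{(X_{1,t},\ldots,X_{K,t})\}_{i \in \mathbb{Z}}$, whose components satisfy Assumption \ref{ass:iid} or \ref{ass:time_series}. Next, set $r := \max_{d} r^{(d)}$ and raise each kernel to order $r$ using the standard combinatorial identity that a U-statistic of order $s \le r$ equals the U-statistic of order $r$ whose kernel is the average of the original over all $\binom{r}{s}$ size-$s$ subsets of the $r$ arguments. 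Writing $\widetilde{k}^{(d)}$ for the lifted, still bounded and symmetric, order-$r$ kernel, we obtain $U^{(d)} = U_r(\widetilde{k}^{(d)})$, and hence $\lambda^\prime \mathbf{U} = U_r(\widetilde{k})$ with the single order-$r$ kernel $\widetilde{k} := \sum_{d=1}^D \lambda_d \widetilde{k}^{(d)}$ of expectation $\lambda^\prime \boldsymbol{\theta}$. Proposition \ref{prop:CLT_ustats_univariate} now applies directly to $U_r(\widetilde{k})$ and yields asymptotic normality with variance $r^2 \sigma^2_{\widetilde{k}}$.

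It remains to identify the first-order projection $\widetilde{k}_1$ from \eqref{eq_k1} and to check that $r^2 \sigma^2_{\widetilde{k}}$ equals $\lambda^\prime \mathbf{\Sigma}_{\mathbf{U}} \lambda$. Since the projection operator in \eqref{eq_k1} is linear, $\widetilde{k}_1 = \sum_d \lambda_d (\widetilde{k}^{(d)})_1$, and a short computation shows that lifting an order-$s$ kernel to order $r$ scales its first-order projection by $s/r$, because exactly the fraction $\binom{r-1}{s-1}/\binom{r}{s} = s/r$ of the size-$s$ subsets contain the conditioning argument. Hence $(\widetilde{k}^{(d)})_1 = (r^{(d)}/r)\, k_1^{(d)}$ and
\begin{equation*}
\widetilde{k}_1(\mathbf{x}) = \frac{1}{r} \sum_{d=1}^D \lambda_d\, r^{(d)}\, k_1^{(d)}(\mathbf{x}).
\end{equation*}
Substituting into the variance formula of Proposition \ref{prop:CLT_ustats_univariate} cancels the prefactor $r^2$ and gives, in the iid case, $\E[(\sum_d \lambda_d r^{(d)} k_1^{(d)}(\mathbf{X}))^2] = \sum_{v,w} \lambda_v \lambda_w\, r^{(v)} r^{(w)}\, \E[k_1^{(v)} k_1^{(w)}]$, which is exactly $\lambda^\prime \mathbf{\Sigma}_{\mathbf{U}} \lambda$; the time series case is identical after replacing the single expectation by the series over lags $h$, whose absolute summability follows along the lines of Lemma \ref{lem:short_memory}.

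The main obstacle is the bookkeeping in this last step: getting the order-lifting factor $s/r$ right and verifying that the cross-terms between U-statistics of different orders and on different coordinate subsets combine into precisely the claimed $r^{(v)} r^{(w)} \sigma_{vw}$ entries. A secondary point to check is that the lifting preserves the hypotheses of Proposition \ref{prop:CLT_ustats_univariate} (boundedness and symmetry of $\widetilde{k}$, and that the enlarged process still satisfies the relevant assumption componentwise), and that degeneracy $\widetilde{k}_1 \equiv 0$ for some $\lambda$ merely yields a point mass at $0$, consistent with the stated (possibly singular) Gaussian limit and not occurring for the non-pathological kernels used in this paper.
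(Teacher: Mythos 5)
Your proposal is correct and follows essentially the same route as the paper's proof: apply the Cram\'er--Wold device to $\boldsymbol{\lambda}^\prime \mathbf{U}$, lift each kernel to the common order $r^{\max}$ so that the linear combination becomes a single U-statistic with a bounded symmetric kernel, and invoke Proposition \ref{prop:CLT_ustats_univariate}. If anything, your write-up is more careful than the paper's: you use the properly symmetrized lifting (averaging over all size-$r^{(d)}$ subsets of the $r^{\max}$ arguments, rather than fixing the first $r^{(d)}$ arguments as the paper loosely does) and you make explicit the projection factor $r^{(d)}/r$ that identifies the limiting variance as $\boldsymbol{\lambda}^\prime \mathbf{\Sigma}_{\mathbf{U}} \boldsymbol{\lambda}$, a bookkeeping step the paper leaves implicit.
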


\begin{proof}
	Consider a linear combination of U-statistics, $\widetilde{U} = \mbox{\boldmath $\lambda$}^\prime \mathbf{U}$ with $\mbox{\boldmath $\lambda$}^\prime = (\lambda_1,\ldots,\lambda_D) \in \mathbb{R}^D$. Define $r^{\max} = \max (r^{(1)},\ldots,r^{(D)})$, and note that every U-statistic 
 $$U^{(d)}_{\mathbf{X}_{I^{(d)}}}(k^{(d)}) =  \binom{n}{r^{(d)}}^{-1} \sum_{1 \leq i_1 < \ldots < i_{r^{(d)}} \leq n} k^{(d)} \left( \mathbf{X}_{I^{(d)},i_1}, \ldots, \mathbf{X}_{I^{(d)},i_{r^{(d)}}} \right)$$
 can be rewritten as 
    $$U^{(d)}_{\mathbf{X}_{I^{(d)}}}(k^{(d)}) = \binom{n}{r^{\max}}^{-1} \sum_{1 \leq i_1 < \ldots < i_{r^{\max}} \leq n} \widetilde{k}^{(d)} \left( \mathbf{X}_{I^{(d)},i_1}, \ldots, \mathbf{X}_{I^{(d)},i_{r^{max}}} \right),$$
    where
    $$\widetilde{k}^{(d)} \left( \mathbf{X}_{I^{(d)},i_1}, \ldots, \mathbf{X}_{I^{(d)},i_{r^{max}}} \right):=\binom{r^{max}}{r^{(d)}}^{-1}\sum_{1\le j_1<...<j_{r^{(d)}}\le r^{max}}k^{(d)} \left( \mathbf{X}_{I^{(d)},i_1}, \ldots, \mathbf{X}_{I^{(d)},i_{r^{(d)}}} \right).$$
    Here, each of the summands in the upper representation shows up $\binom{n}{r^{\max}}\binom{r^{\max}}{r^{(d)}} / \binom{n}{r^{(d)}}$ times, but is then divided by this factor to ensure equality.
    Using this, $\widetilde{U}$ can be rewritten as  
	\begin{align*}
		\widetilde{U} = \binom{n}{r^{\max}}^{-1} \sum_{1 \leq i_1 < \ldots < i_{r^{\max}} \leq n} \left( \lambda_1 \widetilde{k}^{(1)}(\mathbf{X}_{I^{(1)},i_1}, \ldots, \mathbf{X}_{I^{(1)},i_{r^{max}}} ) + \ldots  +  \lambda_D \widetilde{k}^{(D)}(\mathbf{X}_{I^{(D)},i_1}, \ldots, \mathbf{X}_{I^{(D)},i_{r^{max}}} ) \right)
	\end{align*}
	 and is thus itself a U-statistic with a bounded and symmetric kernel. Hence, we can apply the univariate central limit theorem from Proposition \ref{prop:CLT_ustats_univariate} to $\widetilde{U}$. Invoking the Cram\'{e}r-Wold device \citep[Chapter 2]{vandervaart2000} completes the proof.
\end{proof}

\clearpage
\section{Proofs and Additional Lemmas}
\label{Proofs and Additional Lemmas}

\begin{proof}[Proof of Lemma \ref{lem:grade}]
    The first statement immediately follows from \eqref{eq:grade_and_spi} and the relation between $H(x)$ and $\mathrm{sgn}(x)$. 
 For (ii), we calculate:
    \begin{align*}
        &\ \E[\mathrm{sgn}(x-X)\, \mathrm{sgn}(y-Y)]\\ 
        =&\ \p(X<x, Y<y)+\p(X>x, Y>y)-\p(X<x, Y>y)-\p(X>x, Y<y)\\
        =&\ F_{X,Y}(x^-, y^-)+1-F_X(x)-F_Y(y)+F_{X,Y}(x,y)+F_{X,Y}(x^-,y)-F_X(x^-)+F_{X,Y}(x, y^-)-F_{Y}(y^-)\\
        =&\ 4G_{X,Y}(x,y)-2G_X(x)-2G_Y(y)+1.
    \end{align*}
    
 The proof of (iii) is elementary. 

For (iv), the expectation of $G_X(X)$ is calculated in \citet[Equation (13)]{neslehova2007}, and the variance of $G_X(X)$ follows from \citet[(27)]{neslehova2007} after correcting a typo (the square occurring in (27) has to be within the expectation) and noting that $\E[(F_X(X) - F_X(X^-))^2] = \p(X=\copyX = \copytwoX)$.
\end{proof}

\begin{lemma}\label{lem:GradeExpec_cond}
It holds that 
\begin{enumerate}[(i)]
    \item $\E\big[G_X(X)|X\le x\big]=0.5\, F_X(x)$,
    \item $\E\big[G_X(X)|X< x\big]=0.5\, F_X(x^-)$,
    \item $\E\big[F_X(X^-)|X\le x\big]=\frac{1}{F_X(x)}\left(F_X(x^-)\, \E\big[F_X(X^-)|X<x\big]+\p(X=x)\, F_X(x^-)\right)$.
\end{enumerate}

\end{lemma}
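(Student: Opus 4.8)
The plan is to turn each conditional expectation into a ratio $\E[\,\cdot\,\mathds{1}(\cdot)\,]/\p(\cdot)$ and to exploit the independent-copy representation of the grade. Writing $G_X(X)=\p(\copyX<X\mid X)+\tfrac12\,\p(\copyX=X\mid X)$ with $\copyX$ an independent copy of $X$, the starting identity for the first two parts is
\[
\E\big[G_X(X)\,\mathds{1}(X\le x)\big]=\p(\copyX<X,\,X\le x)+\tfrac12\,\p(\copyX=X,\,X\le x).
\]
All conditional expectations below are understood where the conditioning event has positive probability.

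For (i) I would first note that on $\{\copyX<X\le x\}$ one automatically has $\copyX\le x$, and likewise $\{\copyX=X\le x\}$ forces $\copyX\le x$; hence both probabilities on the right may be intersected with $\{\copyX\le x\}$ without change. Exchangeability of the iid pair $(X,\copyX)$ then gives $\p(\copyX<X,\,X\le x,\,\copyX\le x)=\p(X<\copyX,\,X\le x,\,\copyX\le x)$, so that setting $A:=\p(\copyX<X,\,X\le x,\,\copyX\le x)$ and $B:=\p(\copyX=X,\,X\le x,\,\copyX\le x)$, the partition of $\{X\le x,\,\copyX\le x\}$ into the three sign cases yields $2A+B=F_X(x)^2$, whence $A+\tfrac12 B=\tfrac12 F_X(x)^2$. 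Dividing by $\p(X\le x)=F_X(x)$ gives $\E[G_X(X)\mid X\le x]=\tfrac12 F_X(x)$. Part (ii) is obtained by the identical argument after replacing every non-strict inequality $X\le x$ by the strict one $X<x$ and using $\p(X<x)=F_X(x^-)$.

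For (iii) I would apply the law of total expectation, splitting $\{X\le x\}$ into the disjoint events $\{X<x\}$ and $\{X=x\}$:
\[
\E\big[F_X(X^-)\,\mathds{1}(X\le x)\big]=\E\big[F_X(X^-)\,\mathds{1}(X<x)\big]+\E\big[F_X(X^-)\,\mathds{1}(X=x)\big].
\]
On $\{X=x\}$ the integrand is the constant $F_X(x^-)$, so the second term equals $F_X(x^-)\,\p(X=x)$, while the first equals $F_X(x^-)\,\E[F_X(X^-)\mid X<x]$ by definition of the conditional expectation together with $\p(X<x)=F_X(x^-)$. Dividing by $\p(X\le x)=F_X(x)$ produces the stated formula.

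The only non-routine point is the exchangeability step in (i) and (ii): one must carry the half-weight attached to the tie case correctly through the counting argument and justify that intersecting the relevant events with $\{\copyX\le x\}$ (resp. $\{\copyX<x\}$) leaves their probabilities unchanged. Once the independent-copy representation is set up, the remainder is bookkeeping, and part (iii) is a direct application of total expectation presenting no real obstacle.
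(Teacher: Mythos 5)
Your proof is correct and uses essentially the same argument as the paper: both rest on the independent-copy representation $G_X(X)=\p(\copyX<X\mid X)+\tfrac12\p(\copyX=X\mid X)$, exploit exchangeability of $(X,\copyX)$, and reduce parts (i)--(ii) to the three-way partition of $\{X\le x,\ \copyX\le x\}$ whose probability is $F_X(x)^2$, with (iii) being a direct total-expectation split in both. The only difference is bookkeeping: the paper symmetrizes by adding the two equal conditional expectations $\E[G_X(X)\mid X\le x]=\E[G_X(\copyX)\mid \copyX\le x]$ and invoking an event-union identity, whereas you name the probabilities $A,B$ and count the partition directly — the same idea in a different wrapper.
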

\begin{proof}
    For (i), calculate
\begin{align}\label{eq:cond_grade1}
    \E\big[G_X(X)|X\le x\big]&=\E\big[\E\big[\mathds{1}_{\{X'<X\}}+0.5\, \mathds{1}_{\{X'=X\}}|X\big]|X\le x\big]\notag \\
    &=\frac{1}{F_X(x)}\E\big[\left(\mathds{1}_{\{X'<X\}}+0.5\, \mathds{1}_{\{X'=X\}}\right)\mathds{1}_{\{X\le x\}}\big]\notag \\
    &=\frac{1}{F_X(x)}\E\big[\mathds{1}_{\{X'<X, X\le x\}}+0.5\, \mathds{1}_{\{X'=X, X\le x\}}\big].
\end{align}
Since $X$ and $X'$ are iid copies, it also holds that 
\begin{align}\label{eq:cond_grade2}
    \E\big[G_X(X')|X'\le x\big]&=\frac{1}{F_X(x)}\E\big[\mathds{1}_{\{X<X', X'\le x\}}+0.5\, \mathds{1}_{\{X'=X, X'\le x\}}\big],
\end{align}
and the left hand sides of equations (\ref{eq:cond_grade1}) and (\ref{eq:cond_grade2}) coincide. As a consequence,
adding both equations yields
\begin{align*}
    2\E\big[G_X(X)|X\le x\big]
    &=\frac{1}{F_X(x)}\E\big[\mathds{1}_{\{X'<X, X\le x\}}+\mathds{1}_{\{X'=X, X\le x\}}+\mathds{1}_{\{X<X', X'\le x\}}\big].
\end{align*}
Hence, with the event equality 
\begin{align*}
    \{X'<X, X\le x\}\cup\{X'=X, X\le x\}\cup \{X<X', X'\le x\}=\{X\le x, X'\le x\},
\end{align*}
we obtain
\begin{align*}
    \E\big[G_X(X)|X\le x\big]
    =\frac{1}{2F_X(x)}\E\big[\mathds{1}_{\{X\le x, X'\le x\}}\big]=\frac{1}{2F_X(x)}F_X(x)^2=\frac{F_X(x)}{2},
\end{align*}
where the second equality used that $X$ and $X'$ are iid copies.

(ii) follows by the same calculations as (i), and (iii) follows by the definition of the conditional expectation.
\end{proof}

The following lemma contains the key ingredients of the asymptotic variance formulas of all the empirical rank correlations discussed in this paper.

\begin{lemma}\label{lem:variance}
    It holds that 
    $$
        k_1^{(\tau)}(x,y)= 4\, G_{X,Y} (x,y)  - 2 \big(G_X(x) + G_Y (y)\big) + 1 - \tau,
    $$
    $$
        k_1^{(\rho)}(x,y)= 4\big(\E[G_{X,Y} (x,Y')] + \E[G_{X,Y} (X',y)] + G_X(x)G_Y(y)-G_X(x)-G_Y(y)\big)+1-\rho,
    $$
	\begin{align*}
		k_{1}^{(\nu)} (x,y) 
		= \p(X=x) + \p(Y=y) - \p(X=x,Y=y) - \nu,
	\end{align*}
  \begin{align*}
		k^{(\tau_X)}_1 (x,y):=k^{(\tau_X)}_1 (x)
		:= 1-\p(X=x) - \tau_X,\quad 
  k^{(\tau_Y)}_1 (x,y):=k^{(\tau_Y)}_1 (y)
		:= 1-\p(Y=y) - \tau_Y,
	\end{align*}
    \begin{align*}
		k_{1}^{(\rho_X)} (x,y) &:=k_{1}^{(\rho_X)} (x):= 1-\p(X=x)^2  - \rho_X, \ k_{1}^{(\rho_Y)} (x,y):=k_{1}^{(\rho_Y)} (y) &:= 1-\p(Y=y)^2 - \rho_Y.
	\end{align*}
   \end{lemma}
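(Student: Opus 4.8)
The plan is to compute each leading Hoeffding term directly from its definition \eqref{eq_k1}, writing $k_1^{(l)}(x,y) = \E\big[k^{(l)}\big((x,y),\copyX,\ldots\big)\big] - \theta^{(l)}$, i.e.\ fixing the first argument at $(x,y)$, integrating over the required independent copies, and subtracting the population value. Throughout I would lean on Lemma \ref{lem:grade}: the identity $\E[\mathrm{sgn}(x-X)] = 2G_X(x)-1$ from part (i), the product identity $\E[\mathrm{sgn}(x-X)\,\mathrm{sgn}(y-Y)] = 4G_{X,Y}(x,y) - 2G_X(x) - 2G_Y(y) + 1$ from part (ii), and $\E[G_X(X)]=1/2$ from part (iv).

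The four ``marginal-type'' kernels are essentially immediate. For $\tau$, the kernel \eqref{eq:tau_kernel} has order two, so $k_1^{(\tau)}(x,y) = \E[\mathrm{sgn}(x-\copyX)\,\mathrm{sgn}(y-\copyY)] - \tau$, and Lemma \ref{lem:grade}(ii) produces the claimed formula at once. For $\nu$, whose kernel is the tie indicator, $k_1^{(\nu)}(x,y) = \p\big((x-\copyX)(y-\copyY)=0\big) - \nu$, and inclusion--exclusion on the events $\{\copyX=x\}$ and $\{\copyY=y\}$ gives $\p(X=x)+\p(Y=y)-\p(X=x,Y=y)-\nu$. For $\tau_X=\tau(X,X)$ the kernel collapses to $\mathrm{sgn}(x-\copyx)^2 = \mathds{1}(x\neq\copyx)$, so $k_1^{(\tau_X)}(x) = \p(\copyX\neq x) - \tau_X = 1 - \p(X=x) - \tau_X$, and $\tau_Y$ is symmetric.

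For $\rho_X=\rho(X,X)$ I would first record a purely combinatorial fact about the order-three kernel. Setting $Y=X$ in \eqref{eq:rho_kernel} and pairing the six summands collapses $k^{(\rho_X)}$ to $\mathrm{sgn}(x-\copyx)\,\mathrm{sgn}(x-\copytwox) + \mathrm{sgn}(\copyx-x)\,\mathrm{sgn}(\copyx-\copytwox) + \mathrm{sgn}(\copytwox-x)\,\mathrm{sgn}(\copytwox-\copyx)$. Checking the cases ``all three values distinct,'' ``exactly two equal,'' and ``all three equal'' shows this expression always equals $1 - \mathds{1}(x=\copyx=\copytwox)$. Taking expectations over the two independent copies (using their mutual independence) then gives $\E[k^{(\rho_X)}(x,\copyX,\copytwoX)] = 1 - \p(X=x)^2$, whence $k_1^{(\rho_X)}(x) = 1 - \p(X=x)^2 - \rho_X$; again $\rho_Y$ is symmetric.

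The main obstacle is $k_1^{(\rho)}$, where the order-three kernel \eqref{eq:rho_kernel} genuinely mixes coordinates from distinct copies. I would integrate the six summands of $k^{(\rho)}\big((x,y),(\copyX,\copyY),(\copytwoX,\copytwoY)\big)$ group by group according to the independence pattern. The two ``cross-copy'' summands, whose $X$- and $Y$-factors draw from different copies, factorize via Lemma \ref{lem:grade}(i) into $(2G_X(x)-1)(2G_Y(y)-1)$ each. For a ``same-copy'' summand such as $\E[\mathrm{sgn}(\copyX-x)\,\mathrm{sgn}(\copyY-\copytwoY)]$, I would condition on $(\copyX,\copyY)$, use $\E[\mathrm{sgn}(\copyY-\copytwoY)\mid\copyY]=2G_Y(\copyY)-1$ to rewrite it as $\E[\mathrm{sgn}(X-x)\,\mathrm{sgn}(Y-\copyY)]$, then condition on $\copyY$ and apply Lemma \ref{lem:grade}(ii) after the sign flips $\mathrm{sgn}(X-x)=-\mathrm{sgn}(x-X)$ and $\mathrm{sgn}(Y-\copyY)=-\mathrm{sgn}(\copyY-Y)$; this yields $\E_{\copyY}[4G_{X,Y}(x,\copyY)-2G_X(x)-2G_Y(\copyY)+1]$, which collapses to $4\E[G_{X,Y}(x,\copyY)]-2G_X(x)$ once $\E[G_Y(\copyY)]=1/2$ is used. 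The analogous same-copy summands with the two coordinates' roles reversed each contribute $4\E[G_{X,Y}(\copyX,y)]-2G_Y(y)$. Summing all six contributions, dividing by the factor $1/2$ in \eqref{eq:rho_kernel}, and subtracting $\rho$ reproduces the stated expression exactly. The only delicate points are the bookkeeping of which copy each factor belongs to and the consistent sign flipping when invoking Lemma \ref{lem:grade}(ii); it is precisely $\E[G_Y(\copyY)]=1/2$ that makes the leftover marginal terms cancel and leave the clean form.
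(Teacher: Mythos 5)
Your proposal is correct. For $k_1^{(\tau)}$, $k_1^{(\nu)}$, $k_1^{(\tau_X)}$, $k_1^{(\tau_Y)}$ and $k_1^{(\rho)}$ you follow essentially the paper's own route: compute $\E\big[k^{(l)}\big((x,y),\cdot\big)\big]-\theta^{(l)}$ directly, using parts (i), (ii) and (iv) of Lemma~\ref{lem:grade}, conditioning on the odd copy, and cancelling the double sign flip; your handling of the cross-copy and same-copy summands of $k^{(\rho)}$ matches the paper's computation step for step. The genuine difference is in $k_1^{(\rho_X)}$ (and $k_1^{(\rho_Y)}$). The paper evaluates $\E\big[k^{(\rho)}\big((x,x),(\copyX,\copyX),(\copytwoX,\copytwoX)\big)\big]$ analytically: it expresses the comonotonic bivariate MDF $G_{X,X}$ through minima of CDF values, invokes the auxiliary Lemma~\ref{lem:GradeExpec_cond} on conditional expectations of grades, and reduces the resulting sum of minima by a fairly long algebraic computation to $-(F_X(x)-F_X(x^-))^2+1-\rho_X$. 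You instead prove the pointwise identity
\begin{equation*}
\mathrm{sgn}(a-b)\,\mathrm{sgn}(a-c)+\mathrm{sgn}(b-a)\,\mathrm{sgn}(b-c)+\mathrm{sgn}(c-a)\,\mathrm{sgn}(c-b)\ =\ 1-\mathds{1}(a=b=c)
\end{equation*}
by checking the three tie patterns (all distinct, exactly two equal, all equal), after which the expectation over the two independent copies is immediate and yields $1-\p(X=x)^2$; I verified the identity and it is correct under the convention $\mathrm{sgn}(0)=0$. Your route is shorter, bypasses Lemma~\ref{lem:GradeExpec_cond} entirely, and gives as a free by-product the relation $\rho(X,X)=1-\zeta_2(X)$ underlying the denominator of $\rho_b$; the paper's longer computation has the side benefit of establishing the conditional-grade formulas of Lemma~\ref{lem:GradeExpec_cond}, which may be of independent interest, but for the lemma at hand your argument is the more economical one. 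One wording slip: after summing the six contributions of $k^{(\rho)}$ you should \emph{multiply} by the prefactor $1/2$ of \eqref{eq:rho_kernel}, not ``divide by'' it --- your stated final expression shows you in fact did the former.
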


    \begin{proof}
    For the first equality, note that
    \begin{align*}
        k_1^{(\tau)}(x,y)=
        &\ \E\big[k^{(\tau)}\big((x,y), (X',Y')\big)\big]-\tau\\
        =&\ \E\big[\mathrm{sgn}(x-X')\, \mathrm{sgn}(y-Y')\big]-\tau\\
        =&\ 4G_{X,Y}(x,y)-2G_X(x)-2G_Y(y)+1-\tau,
    \end{align*}
    where part (ii) of Lemma \ref{lem:grade} was used in the last step.
    For the second equality, it holds that
    \begin{align*}
        k_1^{(\rho)}(x,y) =&\ \E\big[k^{(\rho)}\big((x,y), (X',Y'), (X'', Y'')\big)\big]-\rho\\
         = &\ \E\big[\mathrm{sgn}(x-X')\, \mathrm{sgn}(y-Y'')\big] + \E\big[\mathrm{sgn}(X'-x)\, \mathrm{sgn}(Y'-Y'')\big] \\
         &+ \E\big[\mathrm{sgn}(X'-X'')\, \mathrm{sgn}(Y'-y)\big] - \rho.
    \end{align*}

    We use the independence of the two factors and part (i) of Lemma \ref{lem:grade} to obtain
    $$\E\big[\mathrm{sgn}(x-X')\, \mathrm{sgn}(y-Y'')\big]=\big(2\,G_X(x)-1\big)\big(2\,G_Y(y)-1\big).$$
    By part (i) of Lemma \ref{lem:grade},
    \begin{align*}
        \E\big[\mathrm{sgn}(X'-x)\, \mathrm{sgn}(Y'-Y'')\big]=&\ \E\Big[\E\big[\mathrm{sgn}(X'-x)\, \mathrm{sgn}(Y'-Y'')|Y''\big]\Big]\\
        =&\ \E\big[4\, G_{X,Y} (x,Y'')  - 2\, G_X(x) -2\, G_Y (Y'') + 1\big]\\
        =&\ 4\, \E\big[ G_{X,Y} (x,Y'') \big]  - 2\, G_X(x).
    \end{align*}
    Analogously, it holds that
    \begin{align*}
        \E\big[\mathrm{sgn}(X'-X'')\, \mathrm{sgn}(Y'-y)\big] = 4\, \E\big[ G_{X,Y} (X'',y)\big]  - 2\, G_Y (y).
    \end{align*}
    Collecting terms, we arrive at
    \begin{align*}
        k_1^{(\rho)}(x,y)=&\ \big(2\,G_X(x)-1\big)\big(2\,G_Y(y)-1\big)+4\, \E\big[ G_{X,Y} (x,Y'') \big]  - 2\, G_X(x)  \\
        &+4\, \E\big[ G_{X,Y} (X'',y)\big]  - 2\, G_Y (y) -\rho\\
        =&\ 4\big(\E[G_{X,Y} (x,Y'')] + \E[G_{X,Y} (X'',y)] + G_X(x)G_Y(y)-G_X(x)-G_Y(y)\big)+1-\rho.
    \end{align*}
   Furthermore, 
 \begin{align*}
        k_1^{(\nu)}(x,y) =&\ \E\big[k^{(\nu)}\big((x,y), (X,Y)\big)\big]-\nu
        \ =\ 
        \E\big[\mathds{1}\big((x-X)(y-Y)=0\big)\big]-\nu\\
        =&\
        \p\big(X=x \text{ or } Y=y\big)-\nu
        \ =\  \p(X=x) + \p(Y=y) - \p(X=x,Y=y) - \nu,
\end{align*}
     $$
    k_{1}^{(\tau_X)} (x) =
    \E\big[k^{(\tau)}\big((x, x), (\copyX, \copyX)\big)\big]-\tau_X
        \ =\ \E\big[\mathrm{sgn}(x-X)^2\big]-\tau_X \ =\ \p(X\ne x)-\tau_X.
     $$
     Finally,
     \begin{align*}
    k_{1}^{(\rho_X)} (x) &=
    \E\big[k^{(\rho)}\big((x,x),(\copyX,\copyX),(\copytwoX, \copytwoX)\big)\big]-\rho_X\\
         &= \E\big[\mathrm{sgn}(x-\copyX)\mathrm{sgn}(x-\copytwoX)+\mathrm{sgn}(\copyX-x)\mathrm{sgn}(\copyX-\copytwoX)+\mathrm{sgn}(\copytwoX-x)\mathrm{sgn}(\copytwoX-\copyX)\big]-\rho_X
     \end{align*}
     because $\copyX$ and $\copytwoX$ play interchangeable roles. We can use the independence of $\copyX$ and $\copytwoX$ and simplify $\E\big[\mathrm{sgn}(x-\copyX)\mathrm{sgn}(x-\copytwoX)\big]=(2G_X(x)-1)^2$ due to part (i) of Lemma \ref{lem:grade} as well as 
     \begin{align*}
     \E\big[\mathrm{sgn}(\copyX-x)\mathrm{sgn}(\copyX-\copytwoX)\big]&=\E\big[\E\big[\mathrm{sgn}(\copyX-x)\mathrm{sgn}(\copyX-\copytwoX)|\copytwoX\big]\big]\\
     &=\E\big[4G_{X,X}(x,X)\big]-2G_X(x)
     \end{align*}
     due to parts (ii) and (iv) of Lemma \ref{lem:grade}. We can further simplify
     \begin{align*}
         4\E\big[G_{X,X}(x,X)\big]=&\ \E\big[\min\{F_X(x),F_X(X)\}+\min\{F_X(x^-),F_X(X)\}\\
         &+\min\{F_X(x),F_X(X^-)\}+\min\{F_X(x^-),F_X(X^-)\}\big]
     \end{align*}
     due to the comonotonicity of $(X,X)$. Moreover, 
\begin{align*}
    &\E\big[\min\{F_X(x),F_X(X)\}+\min\{F_X(x^-),F_X(X^-)\}\big]\\
    =&\ \p(X\le x)\E\big[F_X(X)|X\le x\big]+\p(X>x)F_X(x)+\p(X\le x)\E\big[F_X(X^-)|X\le x\big]+\p(X>x)F_X(x^-)\\
    =&\ 2F_X(x)\E\big[G_X(X)|X\le x\big]+(1-F_X(x))(F_X(x)+F_X(x^-))\\
    =&\ F_X(x)+F_X(x^-)-F_X(x)F_X(x^-),
\end{align*}
where Lemma \ref{lem:GradeExpec_cond} was used in the last equation. The other two summands can be transformed as
\begin{align*}
    &\E\big[\min\{F_X(x^-),F_X(X)\}+\min\{F_X(x),F_X(X^-)\}\big]\\
    =&\ \p(X< x)\E\big[F_X(X)|X< x\big]+\p(X\ge x)F_X(x^-)+\p(X\le x)\E\big[F_X(X^-)|X\le x\big]+\p(X>x)F_X(x)\\
    =&\ 2F_X(x^-)\E\big[G_X(X)|X<x\big]+(1-F_X(x^-))F_X(x^-)+(F_X(x)-F_X(x^-))F_X(x^-)+(1-F_X(x))F_X(x)\\
    =&\ F_X(x)+F_X(x^-)-F_X(x)^2-F_X(x^-)^2+F_X(x)F_X(x^-),
\end{align*}
where again the second and third equations are due to Lemma \ref{lem:GradeExpec_cond}. We conclude that
\begin{align*}
        k_{1}^{(\rho_X)} (x)
         =&\ (2G_X(x)-1)^2+4(2\E\big[G_{X,X}(x,X)\big]-G_X(x))-\rho_X\\
         =&\ 2 \big( \E\big[\min\{F_X(x),F_X(X)\}+\min\{F_X(x^-),F_X(X)\}+\min\{F_X(x),F_X(X^-)\}\\ &+\min\{F_X(x^-),F_X(X^-)\}\big]+ 2G_X(x)^2 - 4G_X(x)\big) + 1 - \rho_X\\
         =&\ 4F_X(x)+4F_X(x^-)-2F_X(x)^2-2F_X(x^-)^2+ 4G_X(x)^2 - 8G_X(x) + 1 - \rho_X\\
         =&\ -(F_X(x)-F_X(x^-))^2 + 1 - \rho_X.
\end{align*}
     \end{proof}

\begin{proof}[Proof of Propositions \ref{prop:asymptotic_distribution_gamma_iid} and \ref{prop:asymptotic_distribution_tau}]
	By our multivariate central limit theorem for bounded U-statistics (Proposition \ref{prop:clt_Ustats_multivariate} in Appendix~\ref{Asymptotic Theory for U-Statistics under Weak Dependence}), it holds that
	\begin{align*} \sqrt{n}  \left( \begin{pmatrix}
		\widehat{\tau} \\
		\widehat{\nu}   
	\end{pmatrix}
	-
	\begin{pmatrix}
		\tau \\
		\nu  
	\end{pmatrix}
	\right)
	\stackrel{d}{\rightarrow}&\ \mathcal{N} \left( \mathbf{0}, 
	\begin{pmatrix}
		\sigma_{\tau}^2 & \sigma_{\tau \nu}\\
		\sigma_{\tau \nu} & \sigma_{\nu}^2
	\end{pmatrix}
	\right),\\
 	 \sqrt{n}  \left( \begin{pmatrix}
		\widehat{\tau} \\
		\widehat{\tau}_X\\
        \widehat{\tau}_Y
	\end{pmatrix}
	-
	\begin{pmatrix}
		\tau \\
		\tau_X\\
        \tau_Y
	\end{pmatrix}
	\right)
	\stackrel{d}{\rightarrow}&\ \mathcal{N} \left( \mathbf{0}, 
	\begin{pmatrix}
		\sigma_{\tau}^2 & \sigma_{\tau \tau_X}& \sigma_{\tau\tau_Y}\\
		\sigma_{\tau \tau_X} & \sigma_{\tau_X}^2&\sigma_{\tau_X\tau_Y}\\
        \sigma_{\tau\tau_Y}&\sigma_{\tau_X\tau_Y}&\sigma_{\tau_Y}^2
	\end{pmatrix}
	\right),\\
  	 \sqrt{n}  \left( \begin{pmatrix}
		\widehat{\rho} \\
		\widehat{\rho}_X\\
        \widehat{\rho}_Y
	\end{pmatrix}
	-
	\begin{pmatrix}
		\rho \\
		\rho_X\\
        \rho_Y
	\end{pmatrix}
	\right)
	\stackrel{d}{\rightarrow}&\ \mathcal{N} \left( \mathbf{0}, 
	\begin{pmatrix}
		\sigma_{\rho}^2 & \sigma_{\rho \rho_X}& \sigma_{\rho\rho_Y}\\
		\sigma_{\rho \rho_X} & \sigma_{\rho_X}^2&\sigma_{\rho_X\rho_Y}\\
        \sigma_{\rho\rho_Y}&\sigma_{\rho_X\rho_Y}&\sigma_{\rho_Y}^2
	\end{pmatrix}
	\right).
 \end{align*}
We note that $\gamma, \tau_b$, and $\rho_b$ are differentiable functions of $(\tau, \nu), (\tau, \tau_X, \tau_Y)$, and $(\rho, \rho_X, \rho_Y)$, respectively: $$\gamma(\tau,\nu) = \frac{\tau}{1-\nu},\quad \tau_b(\tau, \tau_X, \tau_Y)=\frac{\tau}{\sqrt{\tau_X\tau_Y}}, \quad 
  \rho_b(\rho, \rho_X, \rho_Y)=\frac{\rho}{\sqrt{\rho_X\rho_Y}}.$$
    As partial derivatives, we obtain \begin{align*}\frac{\partial \gamma}{\partial \tau} =&\ \frac{1}{1-\nu}, \quad \frac{\partial \gamma}{\partial \nu} = \frac{\tau}{(1-\nu)^2},\\
    \frac{\partial \tau_b}{\partial \tau}=&\frac{1}{\sqrt{\tau_X\tau_Y}}, \quad \frac{\partial \tau_b}{\partial \tau_X}=\frac{-0.5\tau}{\tau_X^{3/2}\sqrt{\tau_Y}}, \quad \frac{\partial \tau_b}{\partial \tau_Y}=\ \frac{-0.5\tau}{\tau_Y^{3/2}\sqrt{\tau_X}}, 
\end{align*}
where the derivatives for $\rho_b$ are analogous to those for $\tau_b$. 
These derivatives always exist as $\nu < 1$ and $\tau_X, \tau_Y, \rho_X, \rho_Y>0$ except for pathological cases.

By the delta method \citep[Theorem 3.1]{vandervaart2000}, we thus have
	$$\sqrt{n} \left( \widehat{\gamma} -\gamma \right) \stackrel{d}{\rightarrow} \mathcal{N}(0, \sigma_{\gamma}^2),\quad \sqrt{n} \left( \widehat{\tau}_b -\tau_b \right) \stackrel{d}{\rightarrow} \mathcal{N}(0, \sigma_{\tau_b}^2)\quad \text{and}\quad \sqrt{n} \left( \widehat{\rho}_b -\rho_b \right) \stackrel{d}{\rightarrow} \mathcal{N}(0, \sigma_{\rho_b}^2)$$ 
	with
	\begin{align*}
		\sigma_{\gamma}^2 &= \begin{pmatrix}
			\frac{1}{1-\nu} & \frac{\tau}{(1-\nu)^2}  		
		\end{pmatrix} 
		\begin{pmatrix}
			\sigma_{\tau}^2 & \sigma_{\tau \nu}\\
			\sigma_{\tau \nu} & \sigma_{\nu}^2
		\end{pmatrix}
		\begin{pmatrix}
			\frac{1}{1-\nu} \\
			\frac{\tau}{(1-\nu)^2}  		
		\end{pmatrix} 
		=\frac{1}{(1-\nu)^2} \left( \sigma_{\tau}^2 + \frac{\tau^2}{(1-\nu)^2} \sigma_{\nu}^2 + \frac{2\,\tau}{1-\nu} \sigma_{\tau \nu} \right)\\
		&= \frac{1}{(1-\nu)^2} \left( \sigma_{\tau}^2 + \gamma^2 \sigma_{\nu}^2 + 2\, \gamma \sigma_{\tau \nu} \right)
	\end{align*}
 and 
     \begin{align*}\sigma_{\tau_b}^2 &= \begin{pmatrix}
			\frac{1}{\sqrt{\tau_X\tau_Y}} \\[1em] \frac{-0.5\tau}{\tau_X^{3/2}\sqrt{\tau_Y}} \\[1em] \frac{-0.5\tau}{\tau_Y^{3/2}\sqrt{\tau_X}}
		\end{pmatrix}^{\prime}
	\begin{pmatrix}
		\sigma_{\tau}^2 & \sigma_{\tau \tau_X}& \sigma_{\tau\tau_Y}\\
		\sigma_{\tau \tau_X} & \sigma_{\tau_X}^2&\sigma_{\tau_X\tau_Y}\\
        \sigma_{\tau\tau_Y}&\sigma_{\tau_X\tau_Y}&\sigma_{\tau_Y}^2
	\end{pmatrix}
		\begin{pmatrix}
			\frac{1}{\sqrt{\tau_X\tau_Y}} \\[1em] \frac{-0.5\tau}{\tau_X^{3/2}\sqrt{\tau_Y}} \\[1em] \frac{-0.5\tau}{\tau_Y^{3/2}\sqrt{\tau_X}}
		\end{pmatrix}\\
  &=
  \frac{1}{\tau_{X}\tau_{Y}}\left(\sigma_{\tau}^2-\tau\left(\frac{\sigma_{\tau\tau_{X}}}{\tau_{X}}-\frac{\sigma_{\tau\tau_{Y}}}{\tau_{Y}}\right)+0.25\tau^2\left(\frac{\sigma_{\tau_{X}}^2}{\tau_{X}^2}+\frac{\sigma_{\tau_{Y}}^2}{\tau_{Y}^2}+2\frac{\sigma_{\tau_{X}\tau_{Y}}}{\tau_{X}\tau_{Y}}\right)\right).
  \end{align*}
 The expression for $\sigma_{\rho_b}^2$ is the same as that for $\sigma_{\tau_b}^2$, but by substituting $\tau\mapsto\rho$, $\tau_X\mapsto\rho_X$, $\tau_Y\mapsto\rho_Y$. 
The expressions for $k_1^{(\nu)},  k_{1}^{(\tau_X)}, k_{1}^{(\tau_Y)},  k_{1}^{(\rho_X)}, k_{1}^{(\rho_Y)}$ are provided by Lemma~\ref{lem:variance}.
\end{proof}

\begin{proof}[Proof of Proposition \ref{prop:consistency_variance_iid}]
    Adding 0 and applying the triangle inequality yields:

\begin{align*} 
      \left|  \widehat{\sigma}_{lm,\textup{iid}} - \sigma_{lm,\textup{iid}} \right| &\leq r^{(l)} r^{(m)} \bigg| \frac 1 n \sum_{i=1}^n \widehat{k}_{1}^{(l)} (X_i,Y_i) \widehat{k}_{1}^{(m)} (X_i,Y_i) - \frac 1 n \sum_{i=1}^n k_{1}^{(l)} (X_i,Y_i) k_{1}^{(m)} (X_i,Y_i) \bigg|\\
      &\quad\;\; + \bigg|\frac 1 n \sum_{i=1}^n k_{1}^{(l)} (X_i,Y_i) k_{1}^{(m)} (X_i,Y_i) - \E \big[ k_{1}^{(l)} (X,Y)\, k_{1}^{(m)} (X,Y) \big]   \bigg|.
\end{align*}

For the first term,
\begin{align*} 
&   \left| \frac 1 n \sum_{i=1}^n \widehat{k}_{1}^{(l)} (X_i,Y_i) \widehat{k}_{1}^{(m)} (X_i,Y_i) - \frac 1 n \sum_{i=1}^n k_{1}^{(l)} (X_i,Y_i) k_{1}^{(m)} (X_i,Y_i) \right| \\
& \leq \sup_{x,y} \left| \widehat{k}_{1}^{(l)} (x,y) \widehat{k}_{1}^{(m)} (x,y) -  k_{1}^{(l)} (x,y) k_{1}^{(m)} (x,y) \right| \stackrel{p}{\rightarrow} 0.
\end{align*}
The convergence in probability for the products of kernels with indices $l,m \in \{\tau, \tau_X, \tau_Y, \nu, \rho_X, \rho_Y\}$ follows by the classical multivariate Glivenko--Cantelli theorem and the continuous mapping theorem since the $k_{1}^{(l)}$ are continuous functionals of the joint CDF $F_{X,Y}(x,y)$. The estimated kernel~$\widehat{k}_{1}^{(\rho)}$ contains an additional estimation step since the function $\widehat{g}_X (x) = \frac 1 n \sum_{i=1}^n \widehat{G}_{X,Y} (x,Y_i)$  emerges as an estimator of $g_X (x) = \E[G_{X,Y} (x,Y) ]$. Due to the Glivenko-Cantelli theorem for $F_{X,Y}(x,y)$ and the continuous mapping theorem (the expectation of a bounded function and thus uniformly integrable function is a continuous mapping), $\widehat{g}_X (x)$ converges uniformly in probability in $x$ to $g_X (x)$, and analogously $\widehat{g}_Y (y)$ in $y$ to $g_Y (y)$. So the above convergence in probability holds also for products involving the kernel $\widehat{k}_{1}^{(\rho)}$.

The second term converges to 0 in probability by the classical weak law of large numbers (the expectation $\E \big[ k_{1}^{(l)} (X,Y)\, k_{1}^{(m)} (X,Y) \big]$ is always finite because the $k_{1}^{(l)}$ are bounded functionals).

Thus, 
$$\left|  \widehat{\sigma}_{lm,\textup{iid}} - \sigma_{lm,\textup{iid}} \right| \stackrel{p}{\rightarrow} 0.$$
\end{proof}

The essence behind Corollary \ref{cor:iid_and_independent_processes_asymptotic_distribution_tau} is the subsequent Lemma~\ref{lem:kernels_under_independence}, which follows from Lemma~\ref{lem:variance} by the additional independence assumption.

\begin{lemma} \label{lem:kernels_under_independence}
    Let $X$ and $Y$ be independent. Then, it holds for $k_1^{(\tau)}$ and $k_1^{(\rho)}$ from \eqref{eq:k1tau_formula} and \eqref{eq:k1rho_formula} that 
    $$k_1^{(\tau)} (x,y) =k_1^{(\rho)} (x,y) = 4 \big( G_{X} (x) - 1/2\big) \big( G_{Y} (y) - 1/2\big).$$
    Furthermore, we have that
    $$
    \E [ k^{(\tau)}_1 (X,Y)^2  ] = \E [ k^{(\rho)}_1 (X,Y)^2  ] = \frac{1}{9}  \big(1 - \zeta_2(X)\big) \big(1 - \zeta_2(Y)\big).
    $$
    Finally, $k_1^{(\nu)}$ from Proposition \ref{prop:asymptotic_distribution_gamma_iid} simplifies to
    $$
    k_1^{(\nu)}(x,y)=  \p(X=x) + \p(Y=y) - \p(X=x)\, \p(Y=y) - \nu.
    $$
\end{lemma}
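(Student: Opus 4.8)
The plan is to derive all three claims directly from Lemma \ref{lem:variance}, using the independence assumption to collapse the bivariate quantities into products of univariate ones. The crucial preliminary step is to show that under independence the bivariate MDF factorizes. Substituting $F_{X,Y}(x,y) = F_X(x)F_Y(y)$ (and likewise for each left-limit combination) into definition \eqref{eq:bivariate_mid-distribution_function} and regrouping the four terms gives
$$G_{X,Y}(x,y) = \frac{1}{4}\big(F_X(x) + F_X(x^-)\big)\big(F_Y(y) + F_Y(y^-)\big) = G_X(x)\,G_Y(y).$$
I would establish this identity first, since it is the workhorse for both kernel simplifications.

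For $k_1^{(\tau)}$, I would insert this factorization together with $\tau = 0$ into the formula from Lemma \ref{lem:variance}, obtaining $4G_X(x)G_Y(y) - 2G_X(x) - 2G_Y(y) + 1$, which I then recognize as $4(G_X(x) - 1/2)(G_Y(y) - 1/2)$ by completing the rectangle. For $k_1^{(\rho)}$, the same factorization reduces the two expectation terms via part (iv) of Lemma \ref{lem:grade}: since $\E[G_Y(Y)] = 1/2$, one gets $\E[G_{X,Y}(x,Y')] = G_X(x)\,\E[G_Y(Y')] = \tfrac{1}{2}G_X(x)$ and symmetrically $\E[G_{X,Y}(X',y)] = \tfrac{1}{2}G_Y(y)$. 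Plugging these in with $\rho = 0$ yields exactly the same expression as for $k_1^{(\tau)}$, which proves the asserted equality of the two kernels.

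The second-moment identity then follows cleanly from the product form. Writing $k_1^{(\tau)}(X,Y)^2 = 16\,(G_X(X) - 1/2)^2(G_Y(Y) - 1/2)^2$ and using independence to split the expectation gives $16\,\Var(G_X(X))\,\Var(G_Y(Y))$, where I use $\E[G_X(X)] = 1/2$ so that the centered second moments are exactly the variances. Part (iv) of Lemma \ref{lem:grade} then supplies $\Var(G_X(X)) = \tfrac{1}{12}(1 - \zeta_2(X))$, and the constant $16/144 = 1/9$ produces the stated result. Finally, the $k_1^{(\nu)}$ claim is immediate: independence replaces $\p(X=x, Y=y)$ by $\p(X=x)\,\p(Y=y)$ in the formula for $k_1^{(\nu)}$ from Lemma \ref{lem:variance}.

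I do not anticipate a genuine obstacle here, as the entire argument is substitution into previously established formulas. The only point demanding a little care is the MDF factorization, where all four terms (including both left-limit combinations) must be handled correctly before regrouping into the product of univariate MDFs; everything downstream is routine.
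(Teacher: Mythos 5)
Your proposal is correct and follows essentially the same route as the paper's proof: factorize the bivariate MDF under independence, substitute into Lemma \ref{lem:variance} with $\tau=\rho=0$, use Lemma \ref{lem:grade}(iv) to reduce $\E[G_{X,Y}(x,Y')]$ to $\tfrac{1}{2}G_X(x)$, and compute the second moment as $16\,\Var(G_X(X))\,\Var(G_Y(Y))$. No gaps; nothing further to add.
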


\begin{proof}
    Under independence of $X$ and $Y$, we have
    $F_{X,Y}(x,y) = F_X(x)\,F_Y(y)$ and thus
    $$
    G_{X,Y}(x,y) = \frac{1}{4}\,\big(F_X(x)+F_X(x^-)\big)\,\big(F_Y(y)+F_Y(y^-)\big) = G_X(x)\,G_Y(y),
    $$
    see \eqref{eq:bivariate_mid-distribution_function}.     Analogously, $\p(X=x, Y=y) = \p(X=x)\, \p(Y=y)$ holds. Together with $\tau=\rho=0$, Lemma~\ref{lem:variance} implies
    $$
    k^{(\tau)}_1 (x,y) =  \big( 	2 G_{X} ( x ) - 1\big) \big( 	2 G_{Y} (y) - 1\big) = 4 \big( 	G_{X} (x) - 1/2\big) \big( 	G_{Y} (y) - 1/2\big).
     $$
     Furthermore, also using part~(iv) of Lemma \ref{lem:grade}, we get $\E[G_{X,Y} (x,Y')] = G_X(x)\,\E[G_{Y} (Y')] = \frac{1}{2}\, G_X(x)$ and thus
    $$
        k_1^{(\rho)}(x,y)= 4\big(G_X(x)G_Y(y)-\tfrac{1}{2}\,G_X(x)-\tfrac{1}{2}\,G_Y(y) + 1/4\big) = k^{(\tau)}_1 (x,y).
    $$
 These two results imply that
 \begin{align*}
 \E \big[ k^{(\tau)}_1 (X_i,Y_i)^2  \big] =& \E \big[ k^{(\rho)}_1 (X_i,Y_i)^2  \big] = 16\, \Var\big(G_X(X)\big)\, \Var\big(G_Y(Y)\big)\\ 
 =& \frac{16}{144}  \big(1 - \p(X=\copyX=\copytwoX)\big) \big(1 - \p(Y=\copyY=\copytwoY)\big),
 \end{align*}
 where again Lemma \ref{lem:grade}\,(iv) was used for the last equality.
\end{proof}

\begin{proof}[Proof of Corollary \ref{cor:iid_and_independent_processes_asymptotic_distribution_tau}]   
    Combining Lemma \ref{lem:kernels_under_independence} with Proposition \ref{prop:asymptotic_distribution_tau_iid} yields the claim.
\end{proof}

\begin{proof}[Proof of Corollary \ref{cor:iid_and_independent_processes_asymptotic_distribution_gamma}]   
Substituting $\gamma=\tau=\rho=0$ into Proposition \ref{prop:asymptotic_distribution_gamma_iid}, we get
\begin{align*}
\sigma_{\gamma,\textup{iid},\textup{ind}}^2 =&  \frac{\sigma_{\tau,\textup{iid},\textup{ind}}^2}{(1-\nu)^2}
\ =\ 
\frac{\frac{4}{9} (1-\zeta_3(X))(1-\zeta_3(Y))}{(1-\nu)^2},
\\
\sigma_{\tau_b, \textup{iid},\textup{ind}}^2 =& \frac{\sigma_{\tau, \textup{iid},\textup{ind}}^2}{\tau_X\tau_Y}
\ =\ 
\frac{\frac{4}{9} (1-\zeta_3(X))(1-\zeta_3(Y))}{(1-\zeta(X))(1-\zeta(Y))},
\\
\sigma_{\rho_b, \textup{iid},\textup{ind}}^2 =& \frac{\sigma_{\rho, \textup{iid},\textup{ind}}^2}{\rho_X\rho_Y}
\ =\ 
\frac{(1-\zeta_3(X))(1-\zeta_3(Y))}{(1-\zeta_3(X))(1-\zeta_3(Y))} = 1,
\end{align*}
where we plugged in the terms from Corollary \ref{cor:iid_and_independent_processes_asymptotic_distribution_tau} into the respective second equality. The final expressions in Corollary \ref{cor:iid_and_independent_processes_asymptotic_distribution_gamma} follow using \eqref{eq:nu_under_independence}.
\end{proof}
\begin{proof}[Proof of Lemma \ref{lem:short_memory}]
   By Theorem 3 in \cite{doukhan_mixing_1994}, equation (1.11) in \cite{bradley2005} and the fact that mixing coefficients can only decrease when applied to functions of some original time series, it holds that \begin{align*}\left|\Cov\left(k^{(l)}_1(X_i,Y_i),k^{(m)}_1(X_{i+h},Y_{i+h})\right)\right|\le 8\beta_h^{1-2/p}\norm{k^{(l)}_1(X_i,Y_i)}_p\norm{k^{(m)}_1(X_{i+h},Y_{i+h})}_p,
   \end{align*}
   where $p>2$ and $\norm{Z}_p=\sqrt[p]{\E[Z^p]}$. Since $k^{(l)}_1$ and $k^{(m)}_1$ are bounded, all moments exist and one can choose some $p>2$ that satisfies the summability condition in Assumption \ref{ass:time_series}. Absolute summability of the crosscovariances follows.
\end{proof}
\begin{proof}[Proof of Proposition \ref{prop:consistency_variance_time_series}]
Splitting the absolute difference between the covariance and its estimator into three terms, by adding 0 and applying the triangle inequality, yields:
\begin{align*} 
      &\frac{\left|  \widehat{\sigma}_{lm}  - \sigma_{lm} \right|}{r^{(l)}r^{(m)}}  \\
      &\leq  \left| \sum_{h=-n+1}^{n-1} w\left(\frac{h}{b_n+1}\right) \ \alpha_{k^{(l)}_1(X,Y),\, k^{(m)}_1(X,Y)}(h) - \sum_{h=-\infty}^{\infty} \alpha_{k^{(l)}_1(X,Y),\, k^{(m)}_1(X,Y)}(h) \right| \\
      & \quad + \left|  \widehat{\alpha}_{k^{(l)}_1(X,Y),\, k^{(m)}_1(X,Y)}(0) +  \sum_{h=1}^{n-1} w\left(\frac{h}{b_n+1}\right) \Bigl( \widehat{\alpha}_{k^{(l)}_1(X,Y),\, k^{(m)}_1(X,Y)}(h) + \widehat{\alpha}_{k^{(m)}_1(X,Y),\, k^{(l)}_1(X,Y)}(h)  \Bigr) \right. \\ 
      & \quad -  \left. \sum_{h=-n+1}^{n-1} w\left(\frac{h}{b_n+1}\right) \ \alpha_{k^{(l)}_1(X,Y),\, k^{(m)}_1(X,Y)}(h) \right|\\
      & \quad + \left|  \widehat{\alpha}_{\widehat{k}^{(l)}_1(X,Y),\, \widehat{k}^{(m)}_1(X,Y)}(0) +  \sum_{h=1}^{n-1} w\left(\frac{h}{b_n+1}\right) \Bigl( \widehat{\alpha}_{\widehat{k}^{(l)}_1(X,Y),\, \widehat{k}^{(m)}_1(X,Y)}(h) + \widehat{\alpha}_{\widehat{k}^{(m)}_1(X,Y),\, \widehat{k}^{(l)}_1(X,Y)}(h)  \Bigr) \right. \\ 
      & \quad -  \left. \widehat{\alpha}_{k^{(l)}_1(X,Y),\, k^{(m)}_1(X,Y)}(0) -  \sum_{h=1}^{n-1} w\left(\frac{h}{b_n+1}\right) \Bigl( \widehat{\alpha}_{k^{(l)}_1(X,Y),\, k^{(m)}_1(X,Y)}(h) + \widehat{\alpha}_{k^{(m)}_1(X,Y),\, k^{(l)}_1(X,Y)}(h)  \Bigr) \right|.
\end{align*}

We now show that all three terms converge to 0 in probability as $n \rightarrow \infty$.

The first summand contains no stochasticity, so we consider the deterministic limit and apply the dominated convergence theorem with respect to the counting measure. 
It holds that 
\begin{align*}
    \lim_{n \rightarrow \infty} \sum_{h=-n+1}^{n-1} w\left(\frac{h}{b_n+1}\right) \ \alpha_{k^{(l)}_1(X,Y),\, k^{(m)} _1(X,Y)}(h) &=  \sum_{h=-\infty}^{\infty} \lim_{n \rightarrow \infty} w\left(\frac{h}{b_n+1}\right) \ \alpha_{k^{(l)}_1(X,Y),\, k^{(m)} _1(X,Y)}(h)\\
    &= \sum_{h=-\infty}^{\infty}  \alpha_{k^{(l)}_1(X,Y),\, k^{(m)} _1(X,Y)}(h),
\end{align*}
where for the first equality, the dominated convergence theorem was used noting that the summands are dominated by the function $|\alpha_{k^{(l)}_1(X,Y),\, k^{(m)} _1(X,Y)}(h)|$, which is integrable with respect to the counting measure due to Lemma \ref{lem:short_memory}.  For the second equality it was used that $\lim_{n \rightarrow \infty} w(h/(b_n+1)) = 1$ by continuity of $w$, $w(0)=1$, and $\lim_{n\to \infty}b_n=\infty$.

Denoting the second summand by $S_n$ and writing $\widehat{\alpha}_{k^{(l)}_1(X,Y),\, k^{(m)}_1(X,Y)}(-h):=\widehat{\alpha}_{k^{(m)}_1(X,Y),\, k^{(l)}_1(X,Y)}(h)$, we calculate
\begin{align*}
    \E[S_n^2]=& \sum_{|h_1|\le b_n}\sum_{|h_2|\le b_n} w\left(\frac{h_1}{b_n+1}\right)w\left(\frac{h_2}{b_n+1}\right) \Cov\left(\widehat{\alpha}_{k^{(l)}_1(X,Y),\, k^{(m)}_1(X,Y)}(h_1) ,\widehat{\alpha}_{k^{(l)}_1(X,Y),\, k^{(m)}_1(X,Y)}(h_2)\right)\\
    & \le  \sum_{|h_1|\le b_n}\sum_{|h_2|\le b_n} \sqrt{\Var\left(\widehat{\alpha}_{k^{(l)}_1(X,Y),\, k^{(m)}_1(X,Y)}(h_1)\right)\Var\left(\widehat{\alpha}_{k^{(l)}_1(X,Y),\, k^{(m)}_1(X,Y)}(h_2)\right)}
\end{align*}
by the Cauchy-Schwarz inequality and $w(x)\le 1\ \forall x\in \mathbb{R}$. Since the empirical autocovariances are, after multiplication by $n/(n-h)$, also U-statistics (of order 1) and its underlying processes $k^{(l)}_1(X,Y)$ and $k^{(m)}_1(X,Y)$ (as well as their multiplication) inherit the mixing properties in Assumption \ref{ass:time_series}, Proposition \ref{prop:CLT_ustats_univariate} is also valid for them. Therefore, it holds that $$\Var\left(\widehat{\alpha}_{k^{(l)}_1(X,Y),\, k^{(m)}_1(X,Y)}(h)\right)=\mathcal{O}\left(\frac{1}{n}\right)\ \forall h\in \mathbb{Z}.$$
This, together with $b_n=o(\sqrt{n})$, delivers the limit $\lim_{n \to \infty}\E[S_n^2]=0$, which in turn implies the desired convergence in probability.

The third summand we denote by $T_n$. The convention $\widehat{\alpha}_{\widehat{k}^{(l)}_1(X,Y),\, \widehat{k}^{(m)}_1(X,Y)}(-h):=\widehat{\alpha}_{\widehat{k}^{(m)}_1(X,Y),\, \widehat{k}^{(l)}_1(X,Y)}(h)$ simplifies it to
\begin{align*}
     T_n&=\left|  \sum_{h=-n+1}^{n-1} w\left(\frac{h}{b_n+1}\right) \Bigl( \widehat{\alpha}_{\widehat{k}^{(l)}_1(X,Y),\, \widehat{k}^{(m)}_1(X,Y)}(h) - \widehat{\alpha}_{k^{(l)}_1(X,Y),\, k^{(m)}_1(X,Y)}(h)  \Bigr) \right|\\
     &=\left|  \sum_{|h|\le b_n} w\left(\frac{h}{b_n+1}\right) \Bigl( \widehat{\alpha}_{\widehat{k}^{(l)}_1(X,Y),\, \widehat{k}^{(m)}_1(X,Y)}(h) - \widehat{\alpha}_{k^{(l)}_1(X,Y),\, k^{(m)}_1(X,Y)}(h)  \Bigr) \right|\\
     &\le \sum_{|h|\le b_n}\left| \widehat{\alpha}_{\widehat{k}^{(l)}_1(X,Y),\, \widehat{k}^{(m)}_1(X,Y)}(h) - \widehat{\alpha}_{k^{(l)}_1(X,Y),\, k^{(m)}_1(X,Y)}(h) \right|
\end{align*}
and the last inequality is due to $w(x)\le 1\ \forall x\in \mathbb{R}$ and the triangle inequality. Inspecting the summands further, we obtain
\begin{align}\label{eq:twostep_convergencespeed}
&\ \left| \widehat{\alpha}_{\widehat{k}^{(l)}_1(X,Y),\, \widehat{k}^{(m)}_1(X,Y)}(h) - \widehat{\alpha}_{k^{(l)}_1(X,Y),\, k^{(m)}_1(X,Y)}(h) \right|\nonumber\\=&\ \left| \frac{1}{n}\sum_{i=1}^{n-h}\widehat{k}^{(l)}_1(X_i,Y_i)\widehat{k}^{(m)}_1(X_{i+h},Y_{i+h}) - k^{(l)}_1(X_i,Y_i)k^{(m)}_1(X_{i+h},Y_{i+h}) \right|\nonumber\\
 \leq&\ \sup_{x,y} \left| \widehat{k}_{1}^{(l)} (x,y) \widehat{k}_{1}^{(m)} (x,y) -  k_{1}^{(l)} (x,y) k_{1}^{(m)} (x,y) \right|=\mathcal{O}_p(1/\sqrt{n}).
\end{align}
The speed of convergence follows from the speed in which the empirical probability measure converges to the theoretical probability measure, also under Assumption \ref{ass:time_series} \citep{doukhan1995invariance}. Indexing 
\begin{align*}
    \frac{1}{\sqrt{n}}\sum_{i=1}^n\left(\delta_{(X_i,Y_i)}-\mathbb{P}_{X,Y}\right),
\end{align*}
where $\delta_{(X_i,Y_i)}$ is the two-dimensional Dirac measure and $\mathbb{P}_{X,Y}$ is the theoretical probability measure of the vector $(X_1,Y_1)$ by the class of indicator functions $\mathds{1}_{(-\infty,x)\times (-\infty,\infty)}$,
$\mathds{1}_{(-\infty,x]\times (-\infty,\infty)}$,
$\mathds{1}_{(-\infty,\infty)\times (-\infty,y)}$,
$\mathds{1}_{(-\infty,\infty)\times (-\infty,y]}$,
$\mathds{1}_{(-\infty,x)\times (-\infty,y)}$,
$\mathds{1}_{(-\infty,x)\times (-\infty,y]}$,
$\mathds{1}_{(-\infty,x]\times (-\infty,y)}$ and
$\mathds{1}_{(-\infty,x]\times (-\infty,y]}$, $x,y\in \mathbb{R}$
yields $\sqrt{n}$ convergence for any continuous function of (bivariate) CDFs and their respective left limits by the continuous mapping theorem (for convergence in distribution). $\widehat{\tau}, \widehat{\tau}_X, \widehat{\tau}_Y, \widehat{\rho}_X, \widehat{\rho}_Y$ and $\widehat{\nu}$ are all also continuous functions of (bivariate) CDFs or their left limits and the claim follows.

The covariances involving $\rho$ need a special treatment due to the additional layer of estimation described in Subsection \ref{subsec:Variance Estimation IID}. We add and subtract a term involving
\begin{equation*}
    \widetilde{k}^{(\rho)}_1 (x,y) 
	:= 4(\widetilde{g}_X(x)+\widetilde{g}_Y(y)+G_X(x)G_Y(y)-G_X(x)-G_Y(y))+1- \rho,
\end{equation*}
where $\widetilde{g}_X (x) = \frac 1 n \sum_{i=1}^n G_{X,Y} (x,Y_i)$ and $\widetilde{g}_Y(y)$ is defined accordingly. By the triangle inequality, it holds that
\begin{align}\label{eq:threestep_convergencespeed}
&\ \left| \frac{1}{n}\sum_{i=1}^{n-h}\widehat{k}^{(l)}_1(X_i,Y_i)\widehat{k}^{(m)}_1(X_{i+h},Y_{i+h}) - k^{(l)}_1(X_i,Y_i)k^{(m)}_1(X_{i+h},Y_{i+h}) \right|\nonumber\\
=&\ \left| \frac{1}{n}\sum_{i=1}^{n-h}\widehat{k}^{(l)}_1(X_i,Y_i)\widehat{k}^{(m)}_1(X_{i+h},Y_{i+h})+\widetilde{k}^{(l)}_1(X_i,Y_i)\widetilde{k}^{(m)}_1(X_{i+h},Y_{i+h})\right.\nonumber\\
&\ - \left. \frac{1}{n}\sum_{i=1}^{n-h}\widetilde{k}^{(l)}_1(X_i,Y_i)\widetilde{k}^{(m)}_1(X_{i+h},Y_{i+h}) + k^{(l)}_1(X_i,Y_i)k^{(m)}_1(X_{i+h},Y_{i+h}) \right|\nonumber\\
\le &\ \left| \frac{1}{n}\sum_{i=1}^{n-h}\widehat{k}^{(l)}_1(X_i,Y_i)\widehat{k}^{(m)}_1(X_{i+h},Y_{i+h})-\widetilde{k}^{(l)}_1(X_i,Y_i)\widetilde{k}^{(m)}_1(X_{i+h},Y_{i+h})\right|\nonumber\\
&\ + \left| \frac{1}{n}\sum_{i=1}^{n-h}\widetilde{k}^{(l)}_1(X_i,Y_i)\widetilde{k}^{(m)}_1(X_{i+h},Y_{i+h})-k^{(l)}_1(X_i,Y_i)k^{(m)}_1(X_{i+h},Y_{i+h})  \right|\nonumber\\
 \leq&\ \sup_{x,y} \left| \widehat{k}_{1}^{(l)} (x,y) \widehat{k}_{1}^{(m)} (x,y) -  \widetilde{k}_{1}^{(l)} (x,y) \widetilde{k}_{1}^{(m)} (x,y) \right|+\sup_{x,y} \left| \widetilde{k}_{1}^{(l)} (x,y) \widetilde{k}_{1}^{(m)} (x,y) -  k_{1}^{(l)} (x,y) k_{1}^{(\rho)} (x,y) \right|\nonumber\\
 =&\ \mathcal{O}_p(1/\sqrt{n}),
\end{align}
where either $l=\rho$ or $m=\rho$ or both. The first term is $\mathcal{O}_p(1/\sqrt{n})$ because of the same arguments as for (\ref{eq:twostep_convergencespeed}). The second term is also $\mathcal{O}_p(1/\sqrt{n})$ because the function class $\{y\mapsto G_{X,Y}(x,y), x\in \mathbb{R}\}$ satisfies the conditions in \cite{doukhan1995invariance} and the empirical probability measure converges with $\sqrt{n}$ rate.

The claim follows from (\ref{eq:twostep_convergencespeed}) and (\ref{eq:threestep_convergencespeed}) because we have assumed $b_n=o(\sqrt{n})$.



\end{proof}

\begin{lemma} \label{lem:autocovariances_under_independent_processes}
    Let $\{X_i\}_{i \in \mathbb{Z}}$ and $\{Y_i\}_{i \in \mathbb{Z}}$ be independent processes satisfying Assumption \ref{ass:time_series}. Then it holds that
    \begin{align*}
        \E \big[ k^{(\tau)}_1 (X_i,Y_i)\, k^{(\tau)}_1(X_{i+h},Y_{i+h}) \big] &= \E \big[ k^{(\rho)}_1 (X_i,Y_i)\, k^{(\rho)}_1(X_{i+h},Y_{i+h}) \big] = \frac{1}{9}\, \rho_X(h)\, \rho_Y(h) \\ 
&= \frac{1}{9}\, \big[ 1 - \zeta_2(X)\big] \big[ 1 - \zeta_2(Y)\big]\, \rho_{b,X}(h)\, \rho_{b,Y}(h),
    \end{align*}
    where we denote by
    $\rho_X(h):= \rho(X_i,X_{i+h})$ and $\rho_{b,X}(h):= \rho_b(X_i,X_{i+h})$ the Spearman and grade autocorrelation functions at lag $h$.
\end{lemma}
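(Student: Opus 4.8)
The plan is to reduce the autocovariances to the simplified kernel form already available under independence and then exploit the factorization afforded by the mutual independence of the two processes. First I would note that, since $\{X_i\}_{i\in\mathbb{Z}}$ and $\{Y_i\}_{i\in\mathbb{Z}}$ are independent, the single-time marginals $X_i$ and $Y_i$ are independent for every $i$, so Lemma~\ref{lem:kernels_under_independence} applies at each fixed time point. By stationarity the kernels $k^{(\tau)}_1$ and $k^{(\rho)}_1$ depend only on the common marginal joint law of $(X_i,Y_i)$, so Lemma~\ref{lem:kernels_under_independence} yields the uniform identity
$$
k^{(\tau)}_1(x,y)=k^{(\rho)}_1(x,y)=4\big(G_X(x)-1/2\big)\big(G_Y(y)-1/2\big).
$$
This immediately collapses the two left-hand expectations into one, so it suffices to compute a single autocovariance.

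Next I would substitute this factorized form into $\E\big[k^{(\tau)}_1(X_i,Y_i)\,k^{(\tau)}_1(X_{i+h},Y_{i+h})\big]$, obtaining
$$
16\,\E\Big[\big(G_X(X_i)-\tfrac12\big)\big(G_X(X_{i+h})-\tfrac12\big)\,\big(G_Y(Y_i)-\tfrac12\big)\big(G_Y(Y_{i+h})-\tfrac12\big)\Big].
$$
The key step is that, because the whole $X$-process is independent of the whole $Y$-process, the expectation of the product factors into an $X$-expectation times a $Y$-expectation. I would then recognize each factor as a Spearman autocovariance: using $\E[G_X(X)]=1/2$ from Lemma~\ref{lem:grade}(iv) together with Definition~\ref{def_rho} applied to the pair $(X_i,X_{i+h})$ (which share the marginal $G_X$ by stationarity) gives $\E\big[(G_X(X_i)-\tfrac12)(G_X(X_{i+h})-\tfrac12)\big]=\tfrac{1}{12}\rho_X(h)$, and analogously for $Y$. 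Combining yields $16\cdot\tfrac{1}{12}\rho_X(h)\cdot\tfrac{1}{12}\rho_Y(h)=\tfrac19\rho_X(h)\rho_Y(h)$, which is the first equality.

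For the final equality I would convert Spearman autocorrelations into grade autocorrelations. Here I use that $\rho_X(0)=\rho(X_i,X_i)=12\,\Var(G_X(X))=1-\zeta_2(X)$ by Lemma~\ref{lem:grade}(iv), and the defining relation $\rho_{b,X}(h)=\rho_X(h)/\rho_X(0)$, so that $\rho_X(h)=\big(1-\zeta_2(X)\big)\rho_{b,X}(h)$, and likewise for $Y$; substituting turns $\tfrac19\rho_X(h)\rho_Y(h)$ into $\tfrac19\big[1-\zeta_2(X)\big]\big[1-\zeta_2(Y)\big]\rho_{b,X}(h)\rho_{b,Y}(h)$. I anticipate no serious obstacle: the proof is essentially bookkeeping built on Lemma~\ref{lem:kernels_under_independence} and Lemma~\ref{lem:grade}(iv). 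The only point demanding care is keeping the two notions of independence distinct, namely that the pointwise independence $X_i\perp Y_i$ is what triggers the simplified kernel, whereas the stronger \emph{process-level} independence is what licenses factoring the cross-lag expectation into separate $X$- and $Y$-contributions.
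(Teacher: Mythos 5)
Your proposal is correct and follows essentially the same route as the paper's proof: apply Lemma~\ref{lem:kernels_under_independence} to replace both kernels by $4\big(G_X(x)-1/2\big)\big(G_Y(y)-1/2\big)$, factor the cross-lag expectation using process-level independence, identify each factor as $\tfrac{1}{12}\rho_X(h)$ and $\tfrac{1}{12}\rho_Y(h)$ via Definition~\ref{def_rho}, and convert to grade autocorrelations through $\rho_X(0)=1-\zeta_2(X)$. Your explicit distinction between the pointwise independence needed for the kernel simplification and the process-level independence needed for the factorization is a point the paper leaves implicit, but the argument is identical.
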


\begin{proof}
 By Lemma \ref{lem:kernels_under_independence} and Definition \ref{def_rho}, we have
	\begin{align*}
		  &\E \big[ k^{(\tau)}_1 (X_i,Y_i)\, k^{(\tau)}_1(X_{i+h},Y_{i+h}) \big] = \E \big[ k^{(\rho)}_1 (X_i,Y_i)\, k^{(\rho)}_1(X_{i+h},Y_{i+h}) \big] \\
		&= 16 \E \big[ \big( 	G_{X} ( X_i ) - 1/2\big) \big( 	G_{Y} (Y_i) - 1/2\big) \big( 	G_{X} (X_{i+h}) - 1/2\big) \big( 	G_{Y} (Y_{i+h}) - 1/2\big) \big]\\
		&= 16 \E \big[ \big( 	G_{X} ( X_i ) - 1/2\big) \big( 	G_{X} (X_{i+h}) - 1/2\big)\big]\, \E\big[ \big( 	G_{Y} (Y_i) - 1/2\big)  \big( 	G_{Y} (Y_{i+h}) - 1/2\big) \big]\\
		&= \frac{16}{144}\, \rho_X(h)\, \rho_Y(h) = \frac{1}{9}\, \big[ 1 - \zeta_2(X)\big] \big[ 1 - \zeta_2(Y)\big]\, \rho_{b,X}(h)\, \rho_{b,Y}(h).
	\end{align*}
\end{proof}

\begin{proof}[Proof of Corollary \ref{cor:independent_processes_asymptotic_distribution_tau}]

The corollary follows from Proposition \ref{prop:asymptotic_distribution_tau} and Lemma \ref{lem:autocovariances_under_independent_processes} with analogous arguments as in the proof of Corollary \ref{cor:iid_and_independent_processes_asymptotic_distribution_gamma}.
\end{proof}

\clearpage

\section{Cross-dependence of Bivariate Geometric Distibution}
\label{Cross-dependence of Bivariate Geometric Distibution}
In order to illustrate our dependence concepts and the associated computations in the case of discrete random variables, let us consider the three-parameter bivariate geometric distribution for $(X,Y)$ as proposed by \citet{barbiero19}, which we denote as $\bgeom(\pi_X, \pi_Y, \alpha)$ with parameters $\pi_X,\pi_Y\in (0,1)$ and $\alpha\in [-1,1]$. The BGeom~distribution is constructed by applying the Farlie--Gumbel--Morgenstern (FGM) copula \citep[see][p.~77]{nelsen06} to the univariate geometric distributions $X\sim\geom(\pi_X)$ and $Y\sim\geom(\pi_Y)$, where~$\alpha$ controls the extent of cross-dependence (with independence for $\alpha=0$). 
Note that the FGM copula can only model relatively weak dependence \citep[see again][p.~77]{nelsen06}, and in particular does not include the copulas representing perfect positive and negative dependence (the co- and countermonotonicity copulas) as limiting cases. Thus, this family of copulas and with it the BGeom~distribution are of limited use for modeling. However, they are perfectly suited to our purpose since they allow for a closed-form computation of the formulas we want to illustrate. Note that this is also the reason why the dependence measures calculated below can only take values close to zero even in the continuous case and if $\alpha=\pm1$. The CDF and PMF of the BGeom~distribution are given by
\begin{align*}
F_{X,Y}(x,y) =&\ \big(1-(1-\pi_X)^{x+1}\big)\, \big(1-(1-\pi_Y)^{y+1}\big)\, \big(1 + \alpha  (1-\pi_X)^{x+1} (1-\pi_Y)^{y+1}\big), \\
\p(X=x, Y=y) =&\ \pi_X (1-\pi_X)^x\, \pi_Y (1-\pi_Y)^y\\
&\ \times \Big(1+\alpha\,  \big((2-\pi_X) (1-\pi_X)^x-1\big)\, \big((2-\pi_Y) (1-\pi_Y)^y-1\big)\Big),
\end{align*}
where the first factors are the respective univariate CDFs and PMFs, respectively.

Let us start with the univariate $\geom(\pi_X)$ distribution, corresponding to the marginal distribution of~$X$. Its MDF \eqref{eq:grade_and_spi} is computed as
$$
G_X(x) = \frac{1}{2} \big(F_X(x)+F_X(x-1)\big)\ =\ \frac{1}{2} \big(2-(2-\pi_X) (1-\pi_X)^x\big),
$$
and its tie probabilities \eqref{eq:tie_probability_formula} follow as
\begin{align*}
\zeta(X)=&
 \sum_{x=0}^\infty P(X=x)^2
 = \pi_X^2\sum_{x=0}^\infty (1-\pi_X)^{2x}
 = \frac{\pi_X}{2-\pi_X}\ \in (0,1),
 \\
\zeta_2(X)=& \sum_{x=0}^\infty P(X=x)^3
 = \pi_X^3\sum_{x=0}^\infty (1-\pi_X)^{3x}
 = \frac{\pi_X^2}{3-(3-\pi_X)\pi_X}\ \in (0,1).
\end{align*}
Analogous calculations hold for $Y\sim \geom(\pi_Y)$.
The bivariate counterparts for $(X,Y)\sim\bgeom(\pi_X, \pi_Y, \alpha)$ are more complex. After tedious algebra, we get
\begin{align*}
G_{X,Y}(x,y) =\ G_X(x)\, G_Y(y) + \alpha\, (1-\pi_X)^x (1-\pi_Y)^y\, & \left(\frac{\pi_X}{2}\,  F_X(x-1)+(1-\pi_X)\, G_X(x)\right)\\
& \times \left(\frac{\pi_Y}{2}\,  F_Y(y-1)+(1-\pi_Y)\, G_Y(y)\right)
\end{align*}
for the MDF \eqref{eq:bivariate_mid-distribution_function}, as well as
\begin{align*}
\nu(X,Y) =&\ 1-\big(1-\zeta(X)\big) \big(1-\zeta(Y)\big) \\
&\, \times\left(1+\frac{\alpha\, \zeta_2(X)\zeta_2(Y)}{\pi_X \pi_Y} \left(\frac{1}{2} + \frac{\alpha\, (1-\pi_X) (1-\pi_Y)}{\big(1+(1-\pi_X)^2\big) \big(1+(1-\pi_Y)^2\big)}\right)\right)
\end{align*}
for the tie probability \eqref{eq:nu}. Note that the factor in the second row, which is a quadratic polynomial in~$\alpha$, hardly deviates from one, because the coefficients of $\alpha$ and $\alpha^2$ usually remain very close to zero. Hence, $1-\nu(X,Y) \approx \big(1-\zeta(X)\big) \big(1-\zeta(Y)\big)$ holds in close approximation.

If the FGM-copula with dependence parameter $\alpha\in [-1,1]$ is applied to continuous $X,Y$, then it is known that $\tau(X,Y)=\frac{2}{9}\,\alpha$ and $\rho(X,Y)=\frac{3}{2}\,\tau(X,Y)=\frac{1}{3}\,\alpha$, see \citet[pp.~162, 174]{nelsen06}. In the discrete case $\bgeom(\pi_X, \pi_Y, \alpha)$, however, we obtain another result, namely
$$
\tau(X,Y) =\ \frac{2 \alpha\,  (1-\pi_X) (1-\pi_Y)}{\big(3-(3-\pi_X) \pi_X\big) \big(3-(3-\pi_Y) \pi_Y\big)},
\qquad
\rho(X,Y)=\frac{3}{2}\,\tau(X,Y).
$$
Note that $|\tau(X,Y)|<\frac{2}{9}\,|\alpha|$ unless both $\pi_X,\pi_Y= 0$. By contrast, for $\pi_X\to 1$ or $\pi_Y\to 1$, both~$\tau$ and~$\rho$ converge to zero.

Combining the above results, we also get closed-form expressions for $\gamma(X,Y)$, $\tau_b(X,Y)$, and $\rho_b(X,Y)$. While $\gamma$ gets rather complex due to the complex expression for $\nu(X,Y)$ above, expressions for~$\tau_b$ and~$\rho_b$ are concise by using that
$$
1-\zeta(X) = \frac{2 (1-\pi_X)}{2-\pi_X}
\quad\text{and}\quad
1-\zeta_2(X) = \frac{3 (1-\pi_X)}{3-(3-\pi_X)\pi_X}.
$$
A simple approximate expression for~$\gamma$ follows when using $1-\nu(X,Y) \approx \big(1-\zeta(X)\big) \big(1-\zeta(X)\big)$. While still $|\rho_b|<\frac{1}{3}\,|\alpha|$ holds with $\rho_b\to 0$ if $\pi_X\to 1$ or $\pi_Y\to 1$, $|\tau_b|$ and $|\gamma|$ may exceed $\frac{2}{9}\,|\alpha|$.

\clearpage
\section{Simulation Study} \label{app_sec:simulations}


\subsection{IID Data}

\label{subsec:sims_iid}

\subsubsection{Considered DGPs}

We simulate from the following continuous iid DGPs: 
\begin{align}
    (X_i, Y_i):=\big(X_i,\ \alpha\, X_i + \sqrt{1-\alpha^2}\,u_i\big) \text{ with }& X_i, u_i\overset{\textup{iid}}{\sim} \mathcal{N}(0,1),\label{eq:normaliidDGP}\\
    (X_i, Y_i):=\big(X_i,\ \alpha\, X_i + \sqrt{1-\alpha^2}\,u_i\big) \text{ with }& X_i, u_i\overset{\textup{iid}}{\sim} t(4),\label{eq:t4iidDGP}\\
    (X_i, Y_i):=\big(X_i,\ \alpha\, X_i + \sqrt{1-\alpha^2}\,u_i\big) \text{ with }& X_i, u_i\overset{\textup{iid}}{\sim} t(1)\label{eq:t1iidDGP},\\
    (X_i, Y_i):=\big(X_i,\ F^{-1}_{\textup{Exp}}[\Phi(\alpha\, X_i + \sqrt{1-\alpha^2}\,u_i)]\big) \text{ with }& X_i, u_i\overset{\textup{iid}}{\sim} \mathcal{N}(0,1),\label{eq:normalexpiidDGP}
\end{align}
where $\alpha\in (-1,1)$, $\Phi$ denotes the CDF of the standard normal distribution and $F^{-1}_{\textup{Exp}}$ the quantile function of the standard exponential distribution. While \eqref{eq:normaliidDGP} just leads to bivariately normally distributed pairs, \eqref{eq:t4iidDGP}--\eqref{eq:normalexpiidDGP} are more demanding. DGP \eqref{eq:t4iidDGP} leads to heavy tails but with existing first- and second-order moments. In situation \eqref{eq:t1iidDGP}, we do not have existing moments, while DGP \eqref{eq:normalexpiidDGP} is characterized by strong positive skewness in its second component.

As discrete iid DGPs, we use
\begin{align}
    (X_i, Y_i):=& \big(X_i,\ \alpha \circ X_i + (1-\alpha) \circ u_i\big) \text{ with } X_i, u_i \overset{\textup{iid}}{\sim}\poi(1), \alpha \in (0,1)\label{eq:Pois1iidDGP}\\
    (X_i, Y_i):=& \big(X_i,\ \alpha \circ X_i + (1-\alpha) \circ u_i\big) \text{ with } X_i+1, u_i+1 \overset{\textup{iid}}{\sim}\zipf(1), \alpha \in (0,1)\label{eq:Zipf1iidDGP}\\
    (X_i, Y_i):=& \big(X_i,\ \alpha \odot X_i + (\mathds{1}_{\{\alpha\ge 0\}}-\mathds{1}_{\{\alpha< 0\}})  \cdot (1-|\alpha|) \odot u_i\big) \text{ with } X_i, u_i \overset{\textup{iid}}{\sim}\skellam(1,1), \alpha \in (-1,1),\label{eq:Ske11amiidDGP}
\end{align}
where $\circ$ denotes the binomial thinning operator \citep{steutel79} and $\odot$ the signed binomial thinning operator \citep{kim08}. These thinning operators serve as integer-valued substitutes of multiplication \citep{weiss2008c,weiss18}. While the Poisson DGP \eqref{eq:Pois1iidDGP} can be interpreted as the ``normal case'' in the count-data world, the Zipf DGP \eqref{eq:Zipf1iidDGP} causes heavy tails and non-existing moments \citep[Section~11.2.20]{johnson05}, whereas the Skellam DGP \eqref{eq:Ske11amiidDGP} has $\bbz$-valued outcomes \citep[Section~4.12.3]{johnson05} and allows to capture negative cross-dependence.

\subsubsection{Analysis of Simulation Results}

The obtained simulation results are tabulated in Appendix~\ref{Tabulated Simulation results from Section sec:simulations}.
Tables~\ref{tabTestsIIDcont}--\ref{tabTestsIIDdiscrH1} provide the simulated rejection rates for tests of the null hypothesis of component-wise independence on the 10\% significance level based on various rank correlations and using the variance formulas under independence, while Tables~\ref{tabConfIIDcont} and \ref{tabConfIIDdiscr} summarize the empirical coverage of 90\% confidence intervals for rank correlations. Note that instead of reporting the values of the parameter $\alpha$ in the tables, we simulate and report the respective (approximate) values of the rank correlations for better interpretability (for the independence tests, we simply report $\tau$ for the continuous DGPs and $\gamma$ for the discrete DGPs). 
For the independence tests, we also consider Pearson correlation, $r$, as a competitor. Note that under all iid DGPs -- except for the heavy-tailed \eqref{eq:t1iidDGP} and \eqref{eq:Zipf1iidDGP} with non-existing moments -- and for $\alpha = 0$ (or more generally the null hypothesis of independence), it holds that 
\begin{equation} \label{eq:Pearson_as_distribution}
    \sqrt{n}\,\widehat{r}\stackrel{d}{\rightarrow} \mathcal{N}(0,1) \text{ as } n \rightarrow \infty,
\end{equation}
where $\widehat{r}$ denotes the empirical Pearson correlation \citep[pp.~125--126]{serfling80}.


Let us begin our discussion with the simulated rejection rates, i.e.\ with simulated sizes if the null is true, and simulated power values otherwise. For the continuous DGPs, we compare $\tau$ and $\rho$ to Pearson correlation, while for the discrete DGPs, we add the generalized rank correlations.

Table~\ref{tabTestsIIDcont} for the continuous iid DGPs shows that the Pearson correlation~$r$ holds the nominal 10\%-level very well under normally distributed data but shows size distortions otherwise (column $\tau=0$). Especially for the Cauchy-DGP \eqref{eq:t1iidDGP}, we observe strong undersizing with further deterioration for increasing sample size~$n$. Kendall's~$\tau$ from Definitions~\ref{def_tau} and~\ref{def:empirical_tau} and Spearman's~$\rho$ from Definitions~\ref{def_rho} and~\ref{def:empirical_rho}, by contrast, have close to nominal and stable sizes across the different DGPs \eqref{eq:normaliidDGP}--\eqref{eq:normalexpiidDGP}, with the exception that especially~$\tau$ shows mild oversizing for the smallest sample size $n=50$. Looking at the power values (columns with $\tau\not=0$), all tests are similarly powerful for DGPs \eqref{eq:normaliidDGP}--\eqref{eq:t4iidDGP}, whereas the Pearson correlation~$r$ clearly loses power for the ``more demanding'' DGPs \eqref{eq:t1iidDGP} with non-existing moments and \eqref{eq:normalexpiidDGP} with highly skewed $Y$-component. $\tau$ and~$\rho$, by contrast, are again robust against such distributional variations. 

For the discrete iid DGPs in  Tables~\ref{tabTestsIIDdiscrH0} (sizes) and~\ref{tabTestsIIDdiscrH1} (power values), we recognize an analogous behavior of~$r$ as before, namely, strong distortions of size and power for the heavy-tailed DGP \eqref{eq:Zipf1iidDGP}. This is also visible in Figure \ref{fig:iid_rejrates} in the appendix, which shows the simulated rejection rates for all possible strengths of dependence for three selected DGPs. Regarding the rank correlations, we now also considered Goodman-Kruskal's $\gamma$ from Definitions~\ref{def:gamma} and~\ref{def:empirical_gamma}, Kendall's $\tau_b$ from Definitions~\ref{def:taub} and~\ref{def:empirical_tau_b}, the grade correlation $\rho_b$ from Definitions~\ref{def:grade_correlation} and~\ref{def:empirical_grade_correlation}, and the modified $\tau_{b,\textup{mod}}$ from \eqref{eq:tau_b_mod}. However, due to using plug-in estimators, the decision rules for $\rho$ and~$\rho_b$ become equivalent, so we do not include~$\rho$ in  Tables~\ref{tabTestsIIDdiscrH0}--\ref{tabTestsIIDdiscrH1} to avoid redundancy. Also $\tau_{b,\textup{mod}}$ led to virtually identical rejection rates as $\tau_b$, hence also $\tau_{b,\textup{mod}}$ is omitted. Comparing the sizes of $\tau$, $\gamma$, and $\rho_b$ in  Table~\ref{tabTestsIIDdiscrH0}, we recognize the best agreement to the nominal 10\%-level for $\gamma$, and the worst (again) for~$\tau$. However, in any case, notable size distortions only occur for the smallest sample size $n=50$. We arrive at similar conclusions for the power values in  Table~\ref{tabTestsIIDdiscrH1}, where differences in power are only observed for $n=50$. At this point, it should be noted that we simulated the Skellam DGP \eqref{eq:Ske11amiidDGP} also for negative dependence values, but since we observed the same power as for positive dependence values, we again omitted these values for the sake of readability of Table~\ref{tabTestsIIDdiscrH1}.

Next, we investigate the performance of 90\% confidence intervals (CIs) for the different DGPs, where we use our proposed variance estimator from \eqref{eq:variance_estimator_iid}. While we considered all rank correlations for the task of testing for dependence in our simulation study, we restrict our analysis of CIs to such rank correlations that do not suffer (much) from attainability problems. For continuous DGPs, we simply use Kendall's~$\tau$ from Definitions~\ref{def_tau} and~\ref{def:empirical_tau} and Spearman's~$\rho$ from Definitions~\ref{def_rho} and~\ref{def:empirical_rho} again. For discrete DGPs, by contrast, recall our discussion in Sections~\ref{Introduction} and \ref{sec:rank_correlations}, it is known that only Goodman-Kruskal's $\gamma$ from Definitions~\ref{def:gamma} and~\ref{def:empirical_gamma} avoids any attainability problems \citep{pohle2025}, and the grade correlation $\rho_b$ from Definitions~\ref{def:grade_correlation} and~\ref{def:empirical_grade_correlation} at least mitigates them \citep{neslehova2007}. 

We report the simulated coverages for various values of the population rank correlations (namely for $0,0.4,0.8$ and also $-0.8,-0.4$ if the DGPs allow for negative dependence). The simulated coverages in Table~\ref{tabConfIIDcont} refer to continuous iid DGPs and the ones in Table~\ref{tabConfIIDdiscr} to discrete iid DGPs, where the left parts of the tables contain values for confidence intervals directly using the asymptotic normal approximation, see \eqref{eq:confidence_interval}, and the right parts describe the results for confidence intervals employing the Fisher transformation explained around \eqref{eq:Fisher_transform}. The observed coverages are in general close to the desired coverage of $90\%$. For the smaller sample sizes and a strong absolute extent~0.8 of cross dependence, the standard confidence intervals show in some cases a little stronger deviation from the desired coverage, especially for the more ``demanding'' continuous DGPs \eqref{eq:t1iidDGP} and \eqref{eq:normalexpiidDGP} as well as for all discrete DGPs, which improves when using the Fisher transformation. 
Hence, we generally recommend using the Fisher transformation for CI~computation and will do so in the empirical applications in Section \ref{Illustrative Data Applications} as well.

To conclude, the proposed confidence intervals and independence tests work very well for all examined rank correlations in the iid case, showing good coverage or size and power, respectively. For testing independence, Goodman-Kruskal's $\gamma$ from Definitions~\ref{def:gamma} and~\ref{def:empirical_gamma} shows a slightly better performance than the others if testing for independence in discrete DGPs, whereas Spearman's $\rho$ from Definitions~\ref{def_rho} and~\ref{def:empirical_rho} does so for continuous DGPs. Furthermore, we advise against using Pearson correlation for independence testing, because it does not perform notably better than the rank correlations even for normally distributed data, but often shows a much worse performance otherwise. 


\subsection{Time Series}
\label{subsec:sims_time_series}

\subsubsection{Considered DGPs}

In the time series case, we simulate the continuous first-order autoregressive (AR$(1)$) processes $(X_i, Y_i):=\big(X_i,\ \alpha\, X_i + \sqrt{1-\alpha^2}\,u_i\big)$ with $\alpha \in (-1, 1)$ and
\begin{align}
    X_i=0.8X_{i-1}+\varepsilon_i,\ \varepsilon_i\sim \mathcal{N}(0,1) \text{ and }& u_i=0.8u_{i-1}+\nu_i,\ \nu_i\sim \mathcal{N}(0,1),\label{eq:normalTSDGP}\\
    X_i=0.8X_{i-1}+\varepsilon_i,\ \varepsilon_i\sim t(4) \text{ and }& u_i=0.8u_{i-1}+\nu_i,\ \nu_i\sim t(4),\label{eq:t4TSDGP} \\
    X_i=0.8X_{i-1}+\varepsilon_i,\ \varepsilon_i\sim t(1) \text{ and }& u_i=0.8u_{i-1}+\nu_i,\ \nu_i\sim t(1)\label{eq:t1TSDGP}
\end{align}
as counterparts to \eqref{eq:normaliidDGP}--\eqref{eq:t1iidDGP}, and a bivariate version of the TEAR$(1)$ (transposed exponential AR) process of \citet{lawrance81}
\begin{align}
\begin{aligned}
    (X_i, Y_i):= \big(X_i,\ B_i^{(\alpha)} X_i + (1 - B_i^{(\alpha)})\,u_i\big) \text{ with } X_i=B_i^{(0.8)}\, X_{i-1}+0.2\, \varepsilon_i,\ \varepsilon_i\sim\expon(1) \\  \text{ and } u_i=B_i^{(0.8)}\, u_{i-1}+0.2\, \nu_i,\  \nu_i\sim\expon(1) \text{ and }  B_i^{(i)}\overset{\textup{iid}}{\sim} \textup{Be}(i) \text{ for }i\in \{\alpha, 0.8\},\ \alpha \in (0,1),\label{eq:TEAR1DGP}
\end{aligned}
\end{align}
where the stationary marginal distributions for $X_i, u_i$ and $Y_i$ are all $\expon(1)$ as in \eqref{eq:normalexpiidDGP}. Note that the component processes in \eqref{eq:normalTSDGP}, \eqref{eq:t4TSDGP} and \eqref{eq:TEAR1DGP} uniquely have the ACF $r(h) = 0.8^h$, but the TEAR$(1)$ sample paths are characterized by long-lasting rises followed by abrupt falls.

As discrete AR$(1)$-like processes, see \citet{weiss2008c,weiss18} for surveys, we use $(X_i, Y_i):= \big(X_i,\ \alpha \circ X_i + (1-\alpha)\circ u_i\big)$, $\alpha \in (0,1)$, together with the INAR$(1)$ (integer-valued AR) processes \citep{kenzie85}
\begin{align}
        X_i=0.8\circ X_{i-1}+\varepsilon_i,\ \varepsilon_i\sim \poi(0.2) \text{ and }& u_i=0.8 \circ u_{i-1}+\nu_i,\ \nu_i\sim \poi(0.2),\label{eq:Pois1TSDGP}\\
        X_i=0.8\circ X_{i-1}+\varepsilon_i,\ \varepsilon_i\sim \zipf(1.5) \text{ and }& u_i=0.8 \circ u_{i-1}+\nu_i,\ \nu_i\sim \zipf(1.5).\label{eq:ZipfTSDGP}
\end{align}
For DGP \eqref{eq:Pois1TSDGP}, $X_i$ and $u_i$ have the stationary marginal distribution $\poi(1)$ like in \eqref{eq:Pois1iidDGP}. For DGP \eqref{eq:ZipfTSDGP}, we include a pre-run of length 1,000 because the stationary marginal distribution of $X_i$ and $u_i$ is unknown, but is again characterized by heavy tails as in \eqref{eq:Zipf1iidDGP}.

In order to allow for negative dependence, we again replace the binomial thinning operator by its signed counterpart $(X_i, Y_i):= \big(X_i,\ \alpha \odot X_i + (\mathds{1}_{\{\alpha\ge 0\}}-\mathds{1}_{\{\alpha< 0\}})  \cdot (1-|\alpha|) \odot u_i\big)$, $\alpha \in (-1,1)$, with signed INAR$(1)$ (INARS(1)) processes \citep{kim08}
\begin{align}
        X_i=0.8\odot X_{i-1}+\varepsilon_i,\ \varepsilon_i\sim \skellam(0.2, 0.2) \text{ and }& u_i=0.8 \odot u_{i-1}+\nu_i,\ \nu_i\sim \skellam(0.2,0.2),\label{eq:SkellamTSDGP}
\end{align}
again with a pre-run of length 1,000.

\subsubsection{Analysis of Simulation Results}

Next, let us turn to the corresponding time series results for the tests of cross-dependence in Tables~\ref{tabTestsAR1cont}--\ref{tabTestsAR1discrH1}, where we employ the variance formulas under the null of cross-independence estimated by a HAC-type estimator as discussed in Section \ref{Variance Formulas under Independence TS}. Here, we again employ Pearson correlation as a competitor. Note that for linear processes as \eqref{eq:normalTSDGP} and \eqref{eq:t4TSDGP}, it holds under $\alpha = 0$ (or more generally cross-independence) that 
\begin{equation} \label{eq:Pearson_as_distribution_TS}
    \sqrt{n}\,\widehat{r}\stackrel{d}{\rightarrow} \mathcal{N}(0,\sigma_{r, \textup{ind}}^2) \text{ as } n \rightarrow \infty
\end{equation} 
with $\sigma_{r, \textup{ind}}^2=\sum_{h=-\infty}^{\infty}r_{X}(h)r_{Y}(h)$, where $r_{X}(h)$ denotes the autocorrelation function (ACF) of $X$ at lag $h$ 
\citep{haugh1976}. We employ this limiting distribution and estimate the variance $\sigma_{r, \textup{ind}}^2$ with a HAC-type estimator with the same kernel and bandwidth choice as for our other HAC-type estimators.

Recall that we use the rather large AR$(1)$-value~0.8, so we are concerned with a rather strong temporal dependence here. 
This explains why, given the well-known difficulties with estimating asymptotic variances in the time series case discussed in Section \ref{subsec:VarianceEstimationTS}, we note severely increased sizes throughout if comparing Table~\ref{tabTestsAR1cont} to Table~\ref{tabTestsIIDcont} (continuous DGPs). 
While for increasing~$n$, the sizes converge towards the nominal 10\%-level, we are still concerned with notable oversizing even for $n=800$. 
Comparing the different dependence measures, we again note similar size and power except for the ``demanding'' DGPs \eqref{eq:t1TSDGP} and \eqref{eq:TEAR1DGP}, where the Pearson correlation~$r$ leads to a rather poor performance. These properties occur in an analogous 
manner in the discrete time series case, see Tables~\ref{tabTestsAR1discrH0}--\ref{tabTestsAR1discrH1}. 
Again, an illustrative graph (Figure \ref{fig:TS_rejrates} in the appendix) shows simulated power curves for selected DGPs.

In Tables~\ref{tabConfAR1cont} and~\ref{tabConfAR1discr}, we report the coverage performance of the 90\%-CIs for the discrete and continuous time series DGPs using the HAC-type estimator from \eqref{eq:variance_estimation_time_series}. Expectedly, again due to the difficulty of variance estimation under temporal dependence, the simulated coverage rates are generally much farther away from the intended 90\% than in the iid case. Especially for $n=50$, the simulated coverage is usually much below. Here, it is interesting to note that for some continuous AR$(1)$ DGPs, the shortfall is strongest for a large absolute extent of cross-dependence, whereas for discrete AR$(1)$ DGPs, the independence case is usually the most problematic. In any case, we observe a clear improvement of the coverage for increasing~$n$, and it should be kept in mind that we used a very strong autocorrelation level for our AR$(1)$ simulations. We come back to the problem of variance estimation under temporal dependence in the conclusion, where we discuss research directions for tackling this challenge.

\subsection{Tabulated Simulation Results}
\label{Tabulated Simulation results from Section sec:simulations}

\begin{table}[h]                      \centering
	\caption{Size (block ``$\tau=0$'') and power (otherwise) for $r, \tau$ and $\rho$ for continuous iid DGPs with $MC=1,000$ and $10\%$ significance level, see Section~\ref{subsec:sims_iid}.} 
    \label{tabTestsIIDcont}

    \smallskip
\resizebox{\linewidth}{!}{
\begin{tabular}{cc@{\ \ }c@{\ \ }cc@{\ \ }c@{\ \ }cc@{\ \ }c@{\ \ }cc@{\ \ }c@{\ \ }cc@{\ \ }c@{\ \ }c}
			\addlinespace
			\toprule
			DGP (\ref{eq:normaliidDGP})&\multicolumn{3}{c}{$\tau=-0.1$}&\multicolumn{3}{c}{$\tau=-0.05$}&\multicolumn{3}{c}{$\tau=0$}&\multicolumn{3}{c}{$\tau=0.05$}&\multicolumn{3}{c}{$\tau=0.1$}\\
			\midrule 
            & $r$ &  $\tau$& $\rho$ &$r$ &  $\tau$& $\rho$ &$r$ &  $\tau$& $\rho$ &$r$ &  $\tau$& $\rho$ &$r$ &  $\tau$& $\rho$  \\
			$n=50$& 0.296&0.283 &0.267&0.153&0.162 &0.151&0.103&0.127 &0.116&0.172&0.190&0.178&0.318& 0.311&0.294 \\
			$n=200$& 0.716&0.686 &0.685&0.309&0.298 &0.298&0.104&0.108 &0.106&0.319&0.308 &0.308&0.736&0.708&0.699 \\
			$n=800$&0.995&0.990 &0.991&0.722&0.681 &0.678&0.097&0.101 &0.103&0.748&0.710&0.711&0.999& 0.997&0.997 \\
			\midrule
            DGP (\ref{eq:t4iidDGP})&\multicolumn{3}{c}{$\tau=-0.1$}&\multicolumn{3}{c}{$\tau=-0.05$}&\multicolumn{3}{c}{$\tau=0$}&\multicolumn{3}{c}{$\tau=0.05$}&\multicolumn{3}{c}{$\tau=0.1$}\\
            \midrule
            & $r$ &  $\tau$& $\rho$ &$r$ &  $\tau$& $\rho$ &$r$ &  $\tau$& $\rho$ &$r$ &  $\tau$& $\rho$ &$r$ &  $\tau$& $\rho$  \\
            $n=50$& 0.289&0.291 &0.276&0.168&0.159 &0.158&0.119&0.111 &0.107&0.144&0.147&0.130&0.285& 0.292&0.285\\
			$n=200$& 0.637&0.673 &0.671&0.260&0.283 &0.284&0.116&0.114 &0.110&0.275&0.281 &0.278&0.629&0.659&0.659 \\
			$n=800$& 0.987&0.997 &0.997&0.642&0.686 &0.685&0.091&0.089 &0.089&0.641&0.671 &0.665&0.989&0.992& 0.993\\
            \midrule
            DGP (\ref{eq:t1iidDGP})&\multicolumn{3}{c}{$\tau=-0.1$}&\multicolumn{3}{c}{$\tau=-0.05$}&\multicolumn{3}{c}{$\tau=0$}&\multicolumn{3}{c}{$\tau=0.05$}&\multicolumn{3}{c}{$\tau=0.1$}\\
            \midrule
            & $r$ &  $\tau$& $\rho$ &$r$ &  $\tau$& $\rho$ &$r$ &  $\tau$& $\rho$ &$r$ &  $\tau$& $\rho$ &$r$ &  $\tau$& $\rho$  \\
            $n=50$&0.241&0.310 &0.291&0.135&0.181 &0.166&0.067&0.134 &0.119&0.144&0.198 &0.181&0.246&0.340&0.305  \\
			$n=200$&0.367&0.720 &0.707&0.209&0.346 &0.338&0.050&0.102 &0.107&0.208&0.329 &0.318&0.377&0.713&0.697  \\
			$n=800$&0.566&0.995 &0.995&0.307&0.773 &0.770&0.026&0.095 &0.095&0.321&0.760 &0.755&0.545&0.996&0.995  \\
            \midrule
            DGP (\ref{eq:normalexpiidDGP})&\multicolumn{3}{c}{$\tau=-0.1$}&\multicolumn{3}{c}{$\tau=-0.05$}&\multicolumn{3}{c}{$\tau=0$}&\multicolumn{3}{c}{$\tau=0.05$}&\multicolumn{3}{c}{$\tau=0.1$}\\
            \midrule
            & $r$ &  $\tau$& $\rho$ &$r$ &  $\tau$& $\rho$ &$r$ &  $\tau$& $\rho$ &$r$ &  $\tau$& $\rho$ &$r$ &  $\tau$& $\rho$  \\
            $n=50$& 0.265&0.290 &0.271&0.146&0.157 &0.147&0.116&0.125 &0.114&0.160&0.169 &0.154&0.281&0.307 & 0.294 \\
			$n=200$& 0.636&0.693 &0.697&0.257&0.251 &0.249&0.088&0.100 &0.097&0.277&0.314 &0.305&0.691&0.729 &0.731\\
			$n=800$& 0.987&0.992 &0.992&0.665&0.695 &0.697&0.091&0.104 &0.104&0.661&0.703 &0.702&0.994&0.996 &0.996 \\
            \bottomrule
            \end{tabular}}
\end{table}

\begin{table}[h]
\footnotesize\centering
	\caption{Size for $r,\tau, \tau_b, \gamma$  and $\rho_b$ for discrete iid DGPs with $MC=1,000$ and $10\%$ significance level, see Section~\ref{subsec:sims_iid}.}
    \label{tabTestsIIDdiscrH0}

    \smallskip
        	\begin{tabular}{cccccc}
			\addlinespace
			\toprule
			DGP (\ref{eq:Pois1iidDGP})&\multicolumn{5}{c}{$\gamma=0$}\\			
            \midrule 
            &$r$ &  $\tau$& $\tau_b$&$\gamma$& $\rho_b$ \\
			$n=50$&0.113  &0.122 &0.111&0.103&0.113 \\
			$n=200$& 0.106 &0.100 &0.096&0.094&0.097 \\
			$n=800$& 0.102 &0.105 &0.105&0.105&0.107 \\
			\midrule
            DGP (\ref{eq:Zipf1iidDGP})&\multicolumn{5}{c}{$\gamma=0$}\\			
            \midrule 
            &$r$ &  $\tau$& $\tau_b$&$\gamma$& $\rho_b$ \\
			$n=50$& 0.064 &0.119&0.115&0.104 &0.115  \\
			$n=200$& 0.050 &0.109&0.106 &0.102 &0.105 \\
			$n=800$& 0.025 &0.104&0.103&0.107&0.104 \\
			\midrule
            DGP (\ref{eq:Ske11amiidDGP})&\multicolumn{5}{c}{$\gamma=0$}\\			
            \midrule 
            &$r$ &  $\tau$& $\tau_b$&$\gamma$& $\rho_b$ \\
			$n=50$& 0.115 &0.122&0.114&0.104&0.117  \\
			$n=200$& 0.110 &0.103&0.100&0.099&0.099  \\
			$n=800$& 0.104 &0.104&0.104&0.104&0.106  \\
            \bottomrule
		\end{tabular}
\end{table}

\begin{table}[h]                       \centering
	\caption{Power for $r,\tau, \tau_b, \gamma$  and $\rho_b$ for discrete iid DGPs with $MC=1,000$ and $10\%$ significance level, see Section~\ref{subsec:sims_iid}.}
    \label{tabTestsIIDdiscrH1}

    \smallskip
        	\begin{tabular}{c@{\qquad}ccccc@{\qquad}ccccc}
			\addlinespace
			\toprule
			DGP (\ref{eq:Pois1iidDGP})&\multicolumn{5}{c}{$\gamma=0.05$}&\multicolumn{5}{c}{$\gamma=0.1$}\\			
            \midrule 
            &$r$ &  $\tau$& $\tau_b$&$\gamma$& $\rho_b$&$r$ &  $\tau$& $\tau_b$&$\gamma$& $\rho_b$ \\
			$n=50$& 0.123 &0.111 & 0.105  & 0.092 & 0.106&0.175&0.170 & 0.161& 0.145& 0.162\\
			$n=200$& 0.148&0.141 & 0.138 & 0.134& 0.139&0.336 &0.311 & 0.311 & 0.309 & 0.310 \\
			$n=800$& 0.295 &0.256 & 0.255  & 0.255& 0.255&0.806&0.755 & 0.755 & 0.754 & 0.754 \\
			\midrule
            DGP (\ref{eq:Zipf1iidDGP})&\multicolumn{5}{c}{$\gamma=0.05$}&\multicolumn{5}{c}{$\gamma=0.1$}\\
            \midrule 
            &$r$ &  $\tau$& $\tau_b$&$\gamma$& $\rho_b$&$r$ &  $\tau$& $\tau_b$&$\gamma$& $\rho_b$ \\
			$n=50$&  0.094 & 0.124 & 0.115  & 0.093& 0.114 & 0.128& 0.160 & 0.151& 0.117& 0.151\\
			$n=200$&  0.105 & 0.153& 0.151 & 0.134& 0.150& 0.151& 0.288& 0.287& 0.261& 0.285\\
			$n=800$&  0.123 & 0.289& 0.288  & 0.281 & 0.289& 0.237& 0.690  & 0.689& 0.674& 0.683\\
			\midrule
            DGP (\ref{eq:Ske11amiidDGP})&\multicolumn{5}{c}{$\gamma=0.05$}&\multicolumn{5}{c}{$\gamma=0.1$}\\			
            \midrule 
            &$r$ &  $\tau$& $\tau_b$&$\gamma$& $\rho_b$&$r$ &  $\tau$& $\tau_b$&$\gamma$& $\rho_b$ \\
			$n=50$&  0.124 & 0.125& 0.121&  0.113 &  0.118& 0.209&  0.209 &  0.201&  0.185&  0.194 \\
			$n=200$& 0.202 &  0.190& 0.189 &  0.187&  0.185& 0.489&  0.445& 0.440&  0.438&  0.435\\
			$n=800$& 0.460 &  0.450& 0.450&  0.450& 0.450& 0.940&  0.909 &  0.908 &  0.908 &  0.908 \\
            \bottomrule
		\end{tabular}
\end{table}

\begin{figure}
    \centering
    \includegraphics[width=0.6\linewidth]{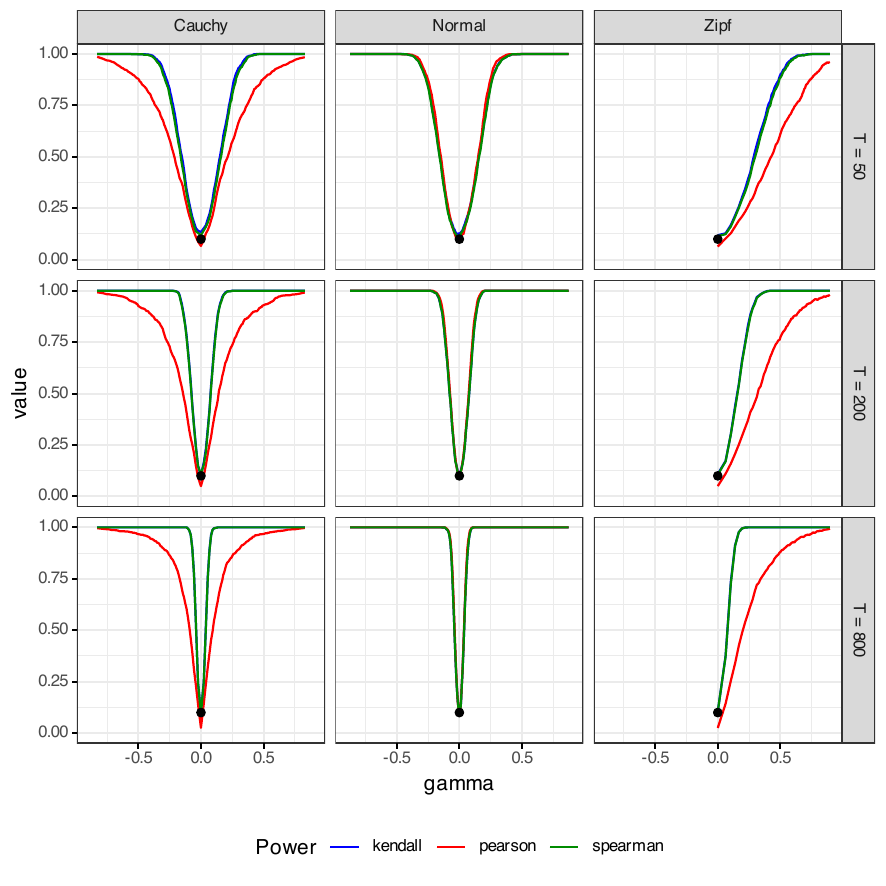}
    \caption{Rejection rates for independence tests based on Kendall's $\tau$, Pearson's $r$ and Spearman's $\rho$ for DGPs (\ref{eq:normaliidDGP}), (\ref{eq:t1iidDGP}) and (\ref{eq:Zipf1iidDGP}) with $MC=1,000$ simulation runs.}
    \label{fig:iid_rejrates}
\end{figure}

\begin{table}[tb]
		\footnotesize\centering
    \caption{Coverage of 90\% confidence intervals for $\tau$ and $\rho$ for continuous iid DGPs with $MC=1,000$, see Section~\ref{subsec:sims_iid}.} 
    \label{tabConfIIDcont}

    \smallskip
        	\begin{tabular}{cccccc@{\qquad}ccccc}
            & \multicolumn{5}{c}{\textbf{without Fisher Transformation}}&\multicolumn{5}{c}{\textbf{with Fisher Transformation}}\\
            \addlinespace
            \toprule
			DGP (\ref{eq:normaliidDGP}) / $\tau=$&$-0.8$&$-0.4$&$0$&$0.4$&$0.8$&$-0.8$&$-0.4$&$0$&$0.4$&$0.8$\\
			\midrule
			$n=50$&0.912 &0.900 &0.870 &0.868 &0.903 &0.925 &0.902 &0.878 &0.880 &0.912 \\
			$n=200$& 0.904 &0.895 &0.896 &0.887& 0.881&0.907 &0.893 &0.897 &0.885 &0.883 \\
			$n=800$& 0.897 &0.911 &0.900 &0.889& 0.886& 0.901 &0.913 &0.900 &0.887 &0.890 \\
            \midrule
            DGP (\ref{eq:normaliidDGP}) / $\rho=$&$-0.8$&$-0.4$&$0$&$0.4$&$0.8$&$-0.8$&$-0.4$&$0$&$0.4$&$0.8$\\ \midrule
			$n=50$& 0.894 &0.884 &0.869 &0.867 &0.892& 0.892 &0.894 &0.887 &0.880 &0.901 \\     
			$n=200$& 0.895& 0.890& 0.895 &0.885 &0.888& 0.894& 0.893 &0.897 &0.883 &0.882  \\  
			$n=800$& 0.908 &0.915 &0.899 &0.893& 0.886& 0.905& 0.915 &0.899 &0.891 &0.886  \\  \midrule
            DGP (\ref{eq:t4iidDGP}) / $\tau=$&$-0.8$&$-0.4$&$0$&$0.4$&$0.8$&$-0.8$&$-0.4$&$0$&$0.4$&$0.8$\\ \midrule
            $n=50$& 0.891 &0.878 &0.883 &0.881 &0.892& 0.902 &0.884 &0.889 &0.890 &0.908 \\     
			$n=200$& 0.891 &0.899 &0.885 &0.882 &0.887& 0.890 &0.898 &0.887 &0.888 &0.885 \\  
			$n=800$& 0.888 &0.902 &0.910 &0.896 &0.901& 0.897 &0.905 &0.910 &0.897 &0.899 \\  \midrule
            DGP (\ref{eq:t4iidDGP}) / $\rho=$&$-0.8$&$-0.4$&$0$&$0.4$&$0.8$&$-0.8$&$-0.4$&$0$&$0.4$&$0.8$\\ \midrule
            $n=50$& 0.888 & 0.880 & 0.884 & 0.885 & 0.880& 0.891 & 0.893 & 0.896 & 0.891 & 0.899 \\     
			$n=200$& 0.892 & 0.893 & 0.886 & 0.890 & 0.877&  0.887 & 0.908 & 0.887 & 0.887 & 0.881 \\  
			$n=800$& 0.892 & 0.907 & 0.908 & 0.897 & 0.899&  0.897 & 0.909 & 0.909 & 0.897 & 0.898 \\  \midrule
            DGP (\ref{eq:t1iidDGP}) / $\tau=$&$-0.8$&$-0.4$&$0$&$0.4$&$0.8$&$-0.8$&$-0.4$&$0$&$0.4$&$0.8$\\ \midrule
            $n=50$& 0.862 & 0.877 & 0.860 & 0.870 & 0.837& 0.875 & 0.892 & 0.866 & 0.867 & 0.861 \\     
			$n=200$& 0.873 & 0.891 & 0.896 & 0.891 & 0.866& 0.884 & 0.894 & 0.897 & 0.893 & 0.875 \\  
			$n=800$& 0.871 & 0.908 & 0.909 & 0.904 & 0.895& 0.883 & 0.913 & 0.909 & 0.902 & 0.895 \\  \midrule
            DGP (\ref{eq:t1iidDGP}) / $\rho=$&$-0.8$&$-0.4$&$0$&$0.4$&$0.8$&$-0.8$&$-0.4$&$0$&$0.4$&$0.8$\\ \midrule
            $n=50$& 0.861 &0.884 &0.867&0.873&0.835& 0.871 &0.902 &0.879 &0.881 &0.853   \\     
			$n=200$& 0.870&0.885&0.893&0.889&0.871& 0.883 &0.896 &0.896 &0.895 &0.885  \\  
			$n=800$& 0.880&0.909&0.907&0.901&0.894& 0.886 &0.907 &0.908 &0.901 &0.897  \\  \midrule
            DGP (\ref{eq:normalexpiidDGP}) / $\tau=$&$-0.8$&$-0.4$&$0$&$0.4$&$0.8$&$-0.8$&$-0.4$&$0$&$0.4$&$0.8$\\ \midrule
            $n=50$& 0.917 &0.890 &0.880 &0.883 &0.926& 0.921 &0.891 &0.883 &0.887 &0.929 \\     
			$n=200$& 0.925 &0.897 &0.895 &0.893 &0.898& 0.931 &0.901 &0.896 &0.889 &0.903 \\  
			$n=800$& 0.896 &0.891 &0.894 &0.887 &0.908& 0.897 &0.892 &0.894 &0.889 &0.914 \\  \midrule
            DGP (\ref{eq:normalexpiidDGP}) / $\rho=$&$-0.8$&$-0.4$&$0$&$0.4$&$0.8$&$-0.8$&$-0.4$&$0$&$0.4$&$0.8$\\ \midrule
 			$n=50$& 0.910 &0.891& 0.882& 0.881& 0.900&0.914& 0.901 &0.889& 0.884 &0.911  \\     
			$n=200$& 0.905 &0.907& 0.904& 0.895& 0.889& 0.905 &0.907 &0.904 &0.896 &0.891  \\  
			$n=800$& 0.887 &0.896& 0.893 &0.886& 0.892& 0.889 &0.899 &0.894 &0.886 &0.891  \\  
            \bottomrule
		\end{tabular}
\end{table}

\begin{table}[tb]
		\footnotesize\centering
    \caption{Coverage of 90\% confidence intervals for $\gamma$ and $\rho_b$ for discrete iid DGPs with $MC=1,000$, see Section~\ref{subsec:sims_iid}.}
    \label{tabConfIIDdiscr}

    \smallskip
        	\begin{tabular}{cccc@{\qquad}ccc}
            & \multicolumn{3}{c}{\textbf{without F.\ T.}}&\multicolumn{3}{c}{\textbf{with F.\ T.}}\\
            \addlinespace
            \toprule
            DGP (\ref{eq:Pois1iidDGP}) / $\gamma=$&$0$&$0.4$&$0.8$&$0$&$0.4$&$0.8$\\ \midrule
            $n=50$& 0.865 &0.862 &0.832& 0.890 &0.885 &0.868\\     
			$n=200$& 0.900& 0.888 &0.876& 0.905 &0.896 &0.886\\  
			$n=800$&0.891 &0.893 &0.891& 0.895 &0.894 &0.887\\ \midrule
            DGP (\ref{eq:Pois1iidDGP}) / $\rho_b=$&$0$&$0.4$&$0.8$&$0$&$0.4$&$0.8$\\ \midrule
            $n=50$& 0.875 &0.871 &0.855& 0.888& 0.885& 0.877\\     
			$n=200$& 0.905 &0.882& 0.864& 0.905& 0.882& 0.871\\  
			$n=800$&0.895 &0.891 &0.895& 0.895& 0.887& 0.899\\ \midrule
            DGP (\ref{eq:Zipf1iidDGP}) / $\gamma=$&$0$&$0.4$&$0.8$&$0$&$0.4$&$0.8$\\ \midrule
			$n=50$& 0.868 &0.861 &0.855  & 0.896 &0.885& 0.900\\     
			$n=200$&0.885& 0.899& 0.890 &  0.893& 0.903& 0.895 \\  
			$n=800$&0.895& 0.899& 0.917 & 0.896& 0.900& 0.918 \\ \midrule
            DGP (\ref{eq:Zipf1iidDGP}) / $\rho_b=$&$0$&$0.4$&$0.8$&$0$&$0.4$&$0.8$\\ \midrule
            $n=50$& 0.863 &0.837& 0.843 & 0.879 &0.861& 0.834\\
			$n=200$& 0.891& 0.896& 0.836 & 0.891 &0.906& 0.852\\  
			$n=800$& 0.894& 0.890& 0.868 & 0.894& 0.886& 0.864\\ \midrule
            DGP (\ref{eq:Ske11amiidDGP}) / $\gamma=$&$0$&$0.4$&$0.8$&$0$&$0.4$&$0.8$\\ \midrule
            $n=50$&   0.870 & 0.890 & 0.831& 0.884 & 0.894&0.858\\
			$n=200$& 0.900 & 0.894 & 0.869 & 0.902 & 0.903 & 0.875 \\  
			$n=800$& 0.895 & 0.892 & 0.899 & 0.895 & 0.892 & 0.900  \\ \midrule
            DGP (\ref{eq:Ske11amiidDGP}) / $\rho_b=$&$0$&$0.4$&$0.8$&$0$&$0.4$&$0.8$\\ \midrule
            $n=50$& 0.871 & 0.889  &0.838 & 0.880  &0.898&  0.842 \\     
			$n=200$& 0.898 & 0.908 & 0.874 & 0.900&  0.916&  0.883 \\  
			$n=800$& 0.897  &0.885&  0.902 &0.897&  0.886  &0.903 \\  
            \bottomrule
		\end{tabular}
\end{table}

\clearpage

\begin{table}[tb]                      \centering
	\caption{Size and power for $r, \tau$ and $\rho$ for continuous time series DGPs with $MC=1,000$ and $10\%$ significance level, see Section~\ref{subsec:sims_time_series}.} 
    \label{tabTestsAR1cont}

    \smallskip
\resizebox{\linewidth}{!}{
\begin{tabular}{cc@{\ \ }c@{\ \ }cc@{\ \ }c@{\ \ }cc@{\ \ }c@{\ \ }cc@{\ \ }c@{\ \ }cc@{\ \ }c@{\ \ }c}
			\addlinespace
			\toprule
			DGP (\ref{eq:normalTSDGP})&\multicolumn{3}{c}{$\tau=-0.1$}&\multicolumn{3}{c}{$\tau=-0.05$}&\multicolumn{3}{c}{$\tau=0$}&\multicolumn{3}{c}{$\tau=0.05$}&\multicolumn{3}{c}{$\tau=0.1$}\\
			\midrule 
            & $r$ &  $\tau$& $\rho$ &$r$ &  $\tau$& $\rho$ &$r$ &  $\tau$& $\rho$ &$r$ &  $\tau$& $\rho$ &$r$ &  $\tau$& $\rho$  \\
			$n=50$& 0.258 &0.266&0.251&0.208&0.215&0.211& 0.178 &0.199&0.184&0.192 &0.211&0.199&0.265&0.272&0.259  \\
			$n=200$&0.350 &0.342&0.336&0.184 &0.182&0.182&0.148 &0.144&0.144&0.208 &0.211&0.203&0.366&0.360&0.356  \\
			$n=800$&0.714 &0.687&0.686&0.354 &0.342&0.339&0.133 &0.137&0.131&0.342 &0.336&0.335&0.714 &0.697&0.694\\
			\midrule
            DGP (\ref{eq:t4TSDGP})&\multicolumn{3}{c}{$\tau=-0.1$}&\multicolumn{3}{c}{$\tau=-0.05$}&\multicolumn{3}{c}{$\tau=0$}&\multicolumn{3}{c}{$\tau=0.05$}&\multicolumn{3}{c}{$\tau=0.1$}\\
            \midrule
            & $r$ &  $\tau$& $\rho$ &$r$ &  $\tau$& $\rho$ &$r$ &  $\tau$& $\rho$ &$r$ &  $\tau$& $\rho$ &$r$ &  $\tau$& $\rho$  \\
			$n=50$& 0.270 &0.287&0.275&0.220 &0.224&0.219&0.194 &0.197&0.194&0.224 &0.223&0.203&0.274 & 0.265&0.257 \\
			$n=200$& 0.347 &0.337&0.335&0.207 &0.204&0.205&0.152 &0.144&0.149&0.205 &0.215&0.214&0.345&0.353&0.345 \\
			$n=800$& 0.674 &0.660&0.661&0.331 &0.322&0.322&0.129 &0.124&0.122&0.346 &0.343&0.344&0.689&0.695&0.691\\
            \midrule
            DGP (\ref{eq:t1TSDGP})&\multicolumn{3}{c}{$\tau=-0.1$}&\multicolumn{3}{c}{$\tau=-0.05$}&\multicolumn{3}{c}{$\tau=0$}&\multicolumn{3}{c}{$\tau=0.05$}&\multicolumn{3}{c}{$\tau=0.1$}\\
            \midrule
            & $r$ &  $\tau$& $\rho$ &$r$ &  $\tau$& $\rho$ &$r$ &  $\tau$& $\rho$ &$r$ &  $\tau$& $\rho$ &$r$ &  $\tau$& $\rho$  \\
			$n=50$&0.250 &0.304&0.271&0.216 &0.275&0.247&0.184 &0.256&0.227&0.226 &0.288&0.262&0.270 &0.308&0.289\\
			$n=200$&0.279 &0.337&0.316&0.202 &0.239&0.221&0.110 &0.186&0.174&0.180 &0.211&0.199&0.267&0.296 & 0.281\\
			$n=800$&0.351 &0.523&0.507&0.232 &0.304&0.295&0.075 &0.146&0.142&0.232 &0.295&0.292&0.355&0.512&0.502\\
            \midrule
            DGP (\ref{eq:TEAR1DGP})&&&&&&&\multicolumn{3}{c}{$\tau=0$}&\multicolumn{3}{c}{$\tau=0.05$}&\multicolumn{3}{c}{$\tau=0.1$}\\
            \midrule
            &&&&&&& $r$ &  $\tau$& $\rho$ &$r$ &  $\tau$& $\rho$ &$r$ &  $\tau$& $\rho$  \\
			$n=50$&&&&&&& 0.152&0.181&0.152& 0.176 &0.209&0.163&0.264  &0.304 &0.233  \\
			$n=200$&&&&&&& 0.125 &0.136&0.132&0.214 &0.250&0.215&0.386 &0.513 &0.443  \\
			$n=800$&&&&&&& 0.122 &0.121&0.119 &0.330&0.432&0.400& 0.702 &0.898 &0.856 \\
            \bottomrule
            \end{tabular}}
\end{table}

\begin{table}[tb]                       \footnotesize\centering
	\caption{Size for $r,\tau, \tau_b, \gamma$  and $\rho_b$ for discrete time series DGPs with $MC=1,000$ and $10\%$ significance level, see Section~\ref{subsec:sims_time_series}.}
    \label{tabTestsAR1discrH0}

    \smallskip
        	\begin{tabular}{ccccccc}
			\addlinespace
			\toprule
			DGP (\ref{eq:Pois1TSDGP})&\multicolumn{5}{c}{$\gamma=0$}\\			
            \midrule 
            &$r$ &  $\tau$& $\tau_b$&$\gamma$& $\rho_b$ \\
			$n=50$& 0.190 &0.229&0.211 &0.186 &0.217\\
			$n=200$& 0.151 &0.168 &0.164&0.161 &0.165\\
			$n=800$& 0.110 &0.099&0.099 &0.099 &0.100 \\
			\midrule
            DGP (\ref{eq:ZipfTSDGP})&\multicolumn{5}{c}{$\gamma=0$}\\			
            \midrule 
            &$r$ &  $\tau$& $\tau_b$&$\gamma$& $\rho_b$ \\
			$n=50$&  0.193  &0.233  &0.223  &0.211 &0.222\\
			$n=200$& 0.120 &0.154&0.150 &0.145 &0.151 \\
			$n=800$&0.096 &0.137&0.137 &0.137 &0.139\\
			\midrule
            DGP (\ref{eq:SkellamTSDGP})&\multicolumn{5}{c}{$\gamma=0$}\\			
            \midrule 
            &$r$ &  $\tau$& $\tau_b$&$\gamma$& $\rho_b$ \\
			$n=50$&  0.181 &0.201  &0.192  &0.177&0.192 \\
			$n=200$&  0.150 &0.145&0.145 &0.140 &0.143  \\
			$n=800$& 0.116  &0.132  &0.131 &0.131 &0.135\\
            \bottomrule
		\end{tabular}
\end{table}

\begin{table}[tb]
    \footnotesize\centering
	\caption{Power for $r,\tau, \tau_b, \gamma$  and $\rho_b$ for discrete time series DGPs with $MC=1,000$ and $10\%$ significance level, see Section~\ref{subsec:sims_time_series}.}
    \label{tabTestsAR1discrH1}

    \smallskip
        	\begin{tabular}{c@{\qquad}ccccc@{\qquad}ccccc}
			\addlinespace
			\toprule
			DGP (\ref{eq:Pois1TSDGP})&\multicolumn{5}{c}{$\gamma=0.05$}&\multicolumn{5}{c}{$\gamma=0.1$}\\			
            \midrule
            &$r$ &  $\tau$& $\tau_b$&$\gamma$& $\rho_b$&$r$ &  $\tau$& $\tau_b$&$\gamma$& $\rho_b$ \\
			$n=50$& 0.187 & 0.213 & 0.204 & 0.181& 0.202&0.215& 0.225&0.219& 0.202&0.221  \\
			$n=200$& 0.175 & 0.160 & 0.158& 0.152&0.158&0.216&0.211& 0.210& 0.204&0.210  \\
			$n=800$& 0.164 & 0.160 & 0.159& 0.157&0.160&0.374& 0.358& 0.357 & 0.356&0.358  \\
			\midrule
            DGP (\ref{eq:ZipfTSDGP})&\multicolumn{5}{c}{$\gamma=0.05$}&\multicolumn{5}{c}{$\gamma=0.1$}\\			
            \midrule
            &$r$ &  $\tau$& $\tau_b$&$\gamma$& $\rho_b$&$r$ &  $\tau$& $\tau_b$&$\gamma$& $\rho_b$ \\
			$n=50$&  0.223 &0.260& 0.248 &0.233 &0.244 & 0.278&0.285&0.273&0.256& 0.274\\
			$n=200$&  0.185 & 0.194 & 0.193 &0.190 & 0.193& 0.305&0.313&0.311&0.307 & 0.303 \\
			$n=800$&  0.237 & 0.268 & 0.267 &0.267 & 0.265& 0.472& 0.579& 0.579&0.578&0.575 \\
			\midrule
            DGP (\ref{eq:SkellamTSDGP})&\multicolumn{5}{c}{$\gamma=0.05$}&\multicolumn{5}{c}{$\gamma=0.1$}\\			
            \midrule
            &$r$ &  $\tau$& $\tau_b$&$\gamma$& $\rho_b$&$r$ &  $\tau$& $\tau_b$&$\gamma$& $\rho_b$ \\
			$n=50$&0.186&0.204&0.191&0.171&0.192&0.196&0.221&0.204&0.185&0.200  \\
			$n=200$&0.168& 0.171&0.171&0.164&0.166&0.236 &0.244 &0.241&0.233&0.238  \\
			$n=800$&0.203&0.200&0.198&0.197&0.201&0.418&0.406&0.406&0.404&0.406 \\
            \bottomrule
		\end{tabular}
\end{table}

\begin{figure}
    \centering
    \includegraphics[width=0.6\linewidth]{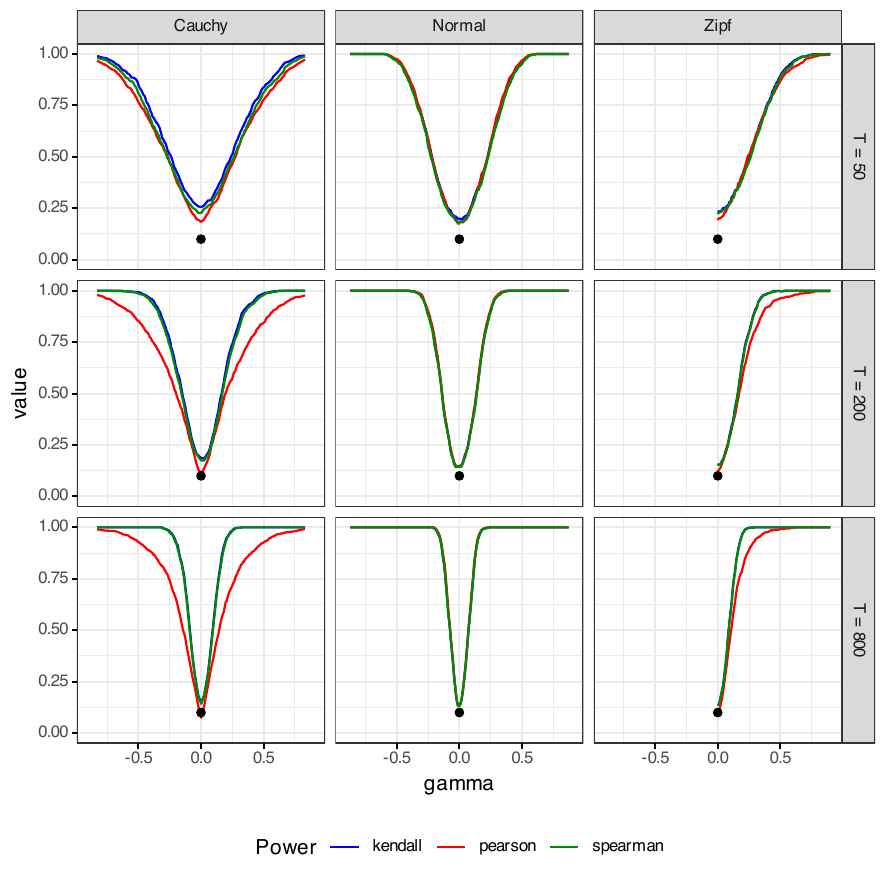}
    \caption{Rejection rates for independence tests based on Kendall's $\tau$, Pearson's $r$ and Spearman's $\rho$ for DGPs (\ref{eq:normalTSDGP}), (\ref{eq:t1TSDGP}) and (\ref{eq:ZipfTSDGP}) with $MC=1,000$ simulation runs.}
    \label{fig:TS_rejrates}
\end{figure}

\begin{table}[tb]
		\footnotesize\centering
    \caption{Coverage of 90\% confidence intervals for $\tau$ and $\rho$ for continuous time series DGPs with $MC=1,000$, see Section~\ref{subsec:sims_time_series}.}
    \label{tabConfAR1cont}

    \smallskip
        	\begin{tabular}{cccccc@{\qquad}ccccc}
            & \multicolumn{5}{c}{\textbf{without Fisher Transformation}}&\multicolumn{5}{c}{\textbf{with Fisher Transformation}}\\
            \addlinespace
            \toprule
            DGP (\ref{eq:normalTSDGP}) / $\tau=$&$-0.8$&$-0.4$&$0$&$0.4$&$0.8$&$-0.8$&$-0.4$&$0$&$0.4$&$0.8$\\ \midrule
            $n=50$& 0.766 &0.730 &0.729& 0.733& 0.770& 0.789 &0.759 &0.750 &0.739& 0.797  \\     
			$n=200$& 0.838& 0.839& 0.832& 0.837& 0.825& 0.834 &0.841& 0.837& 0.848& 0.831  \\  
			$n=800$& 0.847 &0.856& 0.857& 0.850& 0.833&0.851 &0.859& 0.859& 0.849& 0.837 \\ \midrule
            DGP (\ref{eq:normalTSDGP}) / $\rho=$&$-0.8$&$-0.4$&$0$&$0.4$&$0.8$&$-0.8$&$-0.4$&$0$&$0.4$&$0.8$\\ \midrule
            $n=50$& 0.746 & 0.735 & 0.724 & 0.714 & 0.746& 0.753 & 0.765 & 0.761 & 0.744 & 0.764   \\     
			$n=200$&0.844 & 0.834 & 0.832 & 0.831 & 0.820& 0.837 & 0.845 & 0.838 & 0.843 & 0.831   \\  
			$n=800$& 0.852 & 0.861 & 0.858 & 0.845 & 0.846&  0.856 & 0.863 & 0.862 & 0.849 & 0.836 \\ \midrule
            DGP (\ref{eq:t4TSDGP}) / $\tau=$&$-0.8$&$-0.4$&$0$&$0.4$&$0.8$&$-0.8$&$-0.4$&$0$&$0.4$&$0.8$\\ \midrule
            $n=50$&  0.733 &0.696 &0.724& 0.739 &0.759& 0.770 &0.723 &0.737& 0.759 &0.767  \\     
			$n=200$&  0.806& 0.806& 0.816& 0.830& 0.820& 0.813& 0.806& 0.820& 0.834& 0.829 \\  
			$n=800$& 0.859 &0.879& 0.857 &0.852 &0.867&  0.862 &0.875& 0.858& 0.853& 0.870\\  \midrule
            DGP (\ref{eq:t4TSDGP}) / $\rho=$&$-0.8$&$-0.4$&$0$&$0.4$&$0.8$&$-0.8$&$-0.4$&$0$&$0.4$&$0.8$\\ \midrule
            $n=50$&  0.707& 0.703& 0.717 &0.724 &0.754& 0.727 &0.738& 0.747& 0.759 &0.753  \\     
			$n=200$& 0.810& 0.809& 0.809& 0.816& 0.838& 0.810& 0.826& 0.819& 0.830 &0.844  \\  
			$n=800$& 0.864 &0.865& 0.856& 0.849& 0.856&  0.864 &0.867 &0.863& 0.853 &0.866 \\  \midrule
            DGP (\ref{eq:t1TSDGP}) / $\tau=$&$-0.8$&$-0.4$&$0$&$0.4$&$0.8$&$-0.8$&$-0.4$&$0$&$0.4$&$0.8$\\ \midrule
            $n=50$&  0.564 & 0.671 & 0.700 & 0.658&  0.564 & 0.603 & 0.707 & 0.726 & 0.682 & 0.604  \\     
			$n=200$& 0.697&  0.772&  0.803 & 0.771&  0.689& 0.699 & 0.784 & 0.812 & 0.779 & 0.695  \\  
			$n=800$& 0.790 & 0.820 & 0.857 & 0.849 & 0.794&  0.797 & 0.823 & 0.857 & 0.853 & 0.809  \\  \midrule
            DGP (\ref{eq:t1TSDGP}) / $\rho=$&$-0.8$&$-0.4$&$0$&$0.4$&$0.8$&$-0.8$&$-0.4$&$0$&$0.4$&$0.8$\\ \midrule
            $n=50$&   0.593 & 0.662 & 0.688 & 0.646 & 0.585&  0.640 & 0.704 & 0.731 & 0.699 & 0.635   \\     
			$n=200$& 0.697 & 0.765&  0.797 & 0.769 & 0.702&  0.730 & 0.781 & 0.807 & 0.783 & 0.706  \\  
			$n=800$& 0.789 & 0.813 & 0.850 & 0.837 & 0.808&  0.809 & 0.819 & 0.857 & 0.838 & 0.821  \\   \midrule
            DGP (\ref{eq:TEAR1DGP}) / $\tau=$&$-0.8$&$-0.4$&$0$&$0.4$&$0.8$&$-0.8$&$-0.4$&$0$&$0.4$&$0.8$\\ \midrule
            $n=50$&&&0.779 &0.789 &0.804 &&& 0.793 &0.804& 0.826\\     
			$n=200$&&& 0.837 &0.871 &0.858 &&& 0.841 &0.876& 0.862 \\  
			$n=800$&&& 0.877 &0.883 &0.891 &&&  0.878& 0.889 &0.898\\    \midrule
            DGP (\ref{eq:TEAR1DGP}) / $\rho=$&$-0.8$&$-0.4$&$0$&$0.4$&$0.8$&$-0.8$&$-0.4$&$0$&$0.4$&$0.8$\\ \midrule
            $n=50$& && 0.790& 0.780& 0.790 && &0.812 &0.792 &0.802 \\     
			$n=200$& &&0.836& 0.878& 0.851 && & 0.843 &0.879 &0.862\\  
			$n=800$& &&0.876 &0.879 &0.872 && &0.879 &0.875 &0.886\\ 
            \bottomrule
		\end{tabular}
\end{table}

\begin{table}[tb]
		\footnotesize\centering
    \caption{Coverage of 90\% confidence intervals for $\gamma$ and $\rho_b$ for discrete time series DGPs with $MC=1,000$, see Section~\ref{subsec:sims_time_series}.}
    \label{tabConfAR1discr}

    \smallskip
        	\begin{tabular}{cccc@{\qquad}ccc}
            & \multicolumn{3}{c}{\textbf{without F.\ T.}}&\multicolumn{3}{c}{\textbf{with F.\ T.}}\\
            \addlinespace
            \toprule
            DGP (\ref{eq:Pois1TSDGP}) / $\gamma=$&$0$&$0.4$&$0.8$&$0$&$0.4$&$0.8$\\ \midrule
            $n=50$& 0.676 &0.720 &0.764 & 0.724 &0.750 &0.816\\     
			$n=200$& 0.798 &0.810 &0.816 &0.820 &0.820& 0.823\\  
			$n=800$& 0.875 &0.886 &0.892 & 0.880 &0.886& 0.887\\  \midrule
            DGP (\ref{eq:Pois1TSDGP}) / $\rho_b=$&$0$&$0.4$&$0.8$&$0$&$0.4$&$0.8$\\ \midrule
            $n=50$& 0.706 &0.747& 0.784 & 0.721& 0.770& 0.773\\     
			$n=200$& 0.808& 0.818 &0.842 & 0.821& 0.818& 0.854\\  
			$n=800$& 0.879& 0.886& 0.884 & 0.884& 0.885& 0.878\\  \midrule
            DGP (\ref{eq:ZipfTSDGP}) / $\gamma=$&$0$&$0.4$&$0.8$&$0$&$0.4$&$0.8$\\ \midrule
            $n=50$& 0.705 & 0.679 & 0.762 & 0.723 & 0.699 & 0.765 \\     
			$n=200$& 0.822 & 0.814 & 0.799& 0.829 & 0.815 & 0.793\\  
			$n=800$& 0.857 & 0.836 & 0.824 & 0.858 & 0.839 & 0.841 \\  \midrule
            DGP (\ref{eq:ZipfTSDGP}) / $\rho_b=$&$0$&$0.4$&$0.8$&$0$&$0.4$&$0.8$\\ \midrule
            $n=50$& 0.704& 0.702& 0.738 & 0.728& 0.720& 0.728\\     
			$n=200$& 0.819& 0.805& 0.805& 0.830& 0.819& 0.807\\  
			$n=800$& 0.856& 0.836& 0.829& 0.857& 0.848& 0.846\\  \midrule
            DGP (\ref{eq:SkellamTSDGP}) / $\gamma=$&$0$&$0.4$&$0.8$&$0$&$0.4$&$0.8$\\ \midrule 
            $n=50$&  0.709& 0.743& 0.724&  0.753& 0.773& 0.780\\
			$n=200$&  0.818& 0.812& 0.805&  0.827& 0.827& 0.817 \\  
			$n=800$&  0.854& 0.849& 0.841&  0.856& 0.850& 0.840 \\  \midrule
            DGP (\ref{eq:SkellamTSDGP}) / $\rho_b=$&$0$&$0.4$&$0.8$&$0$&$0.4$&$0.8$\\ \midrule 
            $n=50$&  0.727& 0.763& 0.795&  0.758& 0.779& 0.790 \\     
			$n=200$&  0.821& 0.822& 0.831&  0.832& 0.834& 0.837 \\  
			$n=800$&  0.856& 0.857& 0.862&  0.859& 0.861& 0.873 \\ 
            \bottomrule
		\end{tabular}
\end{table}

\end{document}